 \documentclass[final,3p,times,authoryear]{elsarticle}

\usepackage{amsmath, amssymb, amsthm}
\usepackage{siunitx}
\usepackage{bbm} 
\usepackage{booktabs,multirow}
\usepackage{placeins}

\newcommand{\ind}[1]{\mathbbm{1}_{\lrb{#1}}}
\newcommand{\lrb}[1]{\left\{#1\right\}}

\newtheorem{theorem}{Theorem}

\newtheorem{proposition}{Proposition}
\newtheorem{lemma}{Lemma}
\newtheorem{definition}{Definition}
\theoremstyle{definition}
\newtheorem{example}{Example}
\newtheorem{remark}{Remark}

\begin{document}

\begin{frontmatter}



\title{Kernel-based independence and mean independence tests for weakly dependent data} 

\author[1]{Daniel Diz-Castro}
\author[1]{Manuel Febrero-Bande}
\author[1]{Wenceslao Gonz\'alez-Manteiga}
 \affiliation[1]{organization={Department of Statistics, Mathematical Analysis and Optimization, University of Santiago de Compostela},
             addressline={R\'ua de Lope G\'omez de Marzoa},
             city={Santiago de Compostela},
             postcode={15705},
             country={Spain}}
            



\begin{abstract}
We provide a unified framework for independence and mean independence tests 
based on the Hilbert-Schmidt independence criterion, extending some previous results in the literature to hold in general topological spaces. We also present a complete theoretical analysis of the 
test statistic asymptotic behavior when the observed sample corresponds to a partial sample path of some stationary and ergodic stochastic process under near epoch dependence assumptions. In particular, we explore the test statistic consistency and limit distribution under both fixed and local hypothesis. The finite sample performance of the test(s) is illustrated with a succinct simulation study involving functional data.
\end{abstract}



\begin{keyword}
Independence test \sep  Bootstrap \sep Weak dependence \sep Functional data \sep Kernel



\end{keyword}

\end{frontmatter}



\section{Introduction}
\label{sec1}

The literature of statistical tests has received some notable contributions over the last two decades, coinciding with the rise of distance and kernel-based methods in statistical inference. In particular, the Hilbert-Schmidt independence criterion (HSIC) introduced in \cite{Gretton2007} is currently one of the most relevant family of tests of statistical independence between two random variables, $X$ and $Y$, since it is able to characterize this property (if the kernels involved are appropiately chosen) and, from a computational point of view, it can be implemented quite efficiently. Even though it was originally proposed as a test based on independent and identically distributed (i.i.d.) samples, an extension of the HSIC to allow for sample paths of absolutely regular stochastic processes taking values in Polish spaces can be found in \cite{Chwialkowski2014}, whereas in \cite{Chwialkowski2014a} the authors adress the Euclidean and $\tau$-dependent case based on some of the asymptotic results for $V$-statistics derived in \cite{Leucht2013}. Moreover, in \cite{Wang2021} the HSIC is adapted to test for the independence of two sequences of i.i.d. innovations coming from a parametric time series model in Euclidean spaces.

In parallel, the distance correlation and distance covariance (DCOV) were presented in \cite{Szekely2007} as a tool for measuring lack of independence between Euclidean valued random variables and lead to another, widely used, family of independence tests based on i.i.d. samples. Several extensions of these ideas were established since then, including independence tests built upon i.i.d. samples of random variables taking values in metric spaces \citep{Lyons2013} as well as serial independence tests for time series and independence tests between variables of two (or more) time series in Euclidean spaces \citep{Zhou2012, Fokianos2018, Davis2018}, Hilbert spaces \citep{Hlavka2020, Meintanis2022}  and metric spaces \citep{Jiang2023}. A general theory for distance correlation based on dependent samples in metric spaces can be found in \cite{Kroll2021}.

These two approaches, the distance-based and the kernel-based, are closely related, as the results in \cite{Sejdinovic2013} suggest. Furthermore, both of them can be adapted to test if $Y$ is mean independent of $X$; that is, to test if $\mathbb{E}[Y|X] = \mathbb{E}[Y]$ almost surely (a.s.). The martingale difference divergence (MDD) was introduced in \cite{Shao2014} as a distance covariance analogue to test the null hypothesis of mean independence in the i.i.d. Euclidean setting and latter extended to Hilbert spaces \citep{Lee2020} and to Euclidean time series \citep{Lee2017}. See also \cite{Li2023} for a more general version of the MDD in Euclidean spaces. The corresponding kernel-based test of mean independence was proposed in \cite{Lai2021} for Hilbert space valued i.i.d. samples. To the best of our knowledge, an adaptation of the kernel approach to mean independence testing with dependent samples is still missing.

In this contribution, we take advantage of the theoretical proximity of the characterization of independence and mean independence by means of kernel-based null expectations to provide, in Section \ref{sec2}, a unified framework for independence and mean independence tests. The asymptotic behaviour of the test statistic, i.e., the empirical estimator of the HSIC, is analysed, in detail, in Section \ref{sec3},  where we assume that $X$ and $Y$ take values in a  Polish space (such as functional or directional data) and also that we observe a partial sample path of some discretely indexed stationary and ergodic stochastic process $\{(X_i,Y_i)\}_{i\in \mathbb{Z}}$. Distributional results are derived under near epoch dependence assumptions, which provide a substantially more general framework for modeling data dependence than the mixing conditions employed in all the aforementioned references, except for \cite{Zhou2012} and \cite{Chwialkowski2014}. We also propose a consistent wild bootstrap scheme to calibrate tests in practice, which is based on a new wild bootstrap consistency theorem for the sample mean of near epoch dependent data in Hilbert spaces. A simulation study is conducted in Section \ref{sec4} to assess the finite sample performance of the tests and conclusions are presented in Section \ref{sec5}.

\section{Kernel measures of independence and mean independence}
\label{sec2}

Throughout this document we assume that all random variables are defined on a common probability space $(\Omega, \mathcal{F},\mathbb{P})$ and take values in measurable spaces endowed with the corresponding Borel $\sigma$-algebras. In particular, $X \in \mathcal{X}$ and $Y\in \mathcal{Y}$ where $\mathcal{X}$ and  $\mathcal{Y}$ are two nonempty topological spaces that admit the definition of Borel probability measures.  Additional structure on those spaces will be made explicit when needed.

As its name suggest, kernel-based methods rely on the special properties of positive definite bivariate functions that are symmetric in its arguments; i.e., kernels. We introduce in this section some basic notions about kernels and reproducing kernel Hilbert spaces (RKHS) as well as the theoretical foundings of the HSIC and the kernel-based conditional mean dependence measure (KCMD) of \cite{Lai2021}, the latter of which we extend to allow for non-Hilbert space valued $X$ and $Y$.

\subsection{Independence}
\label{sec2.1}

 Let $S$ be any nonempty topological space that admits the definition of Borel probability measures. The RKHS associated with a kernel $c_{\mathcal{S}}: \mathcal{S}\times \mathcal{S} \rightarrow \mathbb{R}$, which we denote by $\mathcal{C}_{\mathcal{S}}$, is a Hilbert space of functions from $\mathcal{S}$ to $\mathbb{R}$ obtained as the completion of the linear span of $\{c_{\mathcal{S}}(s,\cdot): s\in \mathcal{S}\}$ with respect to the metric induced by inner product which arises from the following reproducing property: $\langle c_{\mathcal{S}}(s,\cdot), \zeta_{\mathcal{S}} \rangle_{\mathcal{C}_{\mathcal{S}}} = \zeta_{\mathcal{S}}(s)$ for all $\zeta_{\mathcal{S}}\in \mathcal{C}_{\mathcal{S}}$. 
 The map $\Phi_{\mathcal{S}} : \mathcal{S}\rightarrow \mathcal{C}_{\mathcal{S}}$ such that $\Phi_{\mathcal{S}}(s) = c_{\mathcal{S}}(s,\cdot)$ is refered to as the canonical feature map of $c_{\mathcal{S}}$ and it is continuous (and hence, measurable) provided that $c_{\mathcal{S}}$ is continuous, since
 $$ ||\Phi_{\mathcal{S}}(s) - \Phi_{\mathcal{S}}(s') ||^2_{\mathcal{C}_{\mathcal{S}}} =  c_{\mathcal{S}}(s,s) -  c_{\mathcal{S}}(s,s')- c_{\mathcal{S}}(s',s)  + c_{\mathcal{S}}(s',s').$$
 Moreover, if $\Phi_{\mathcal{S}}$ is measurable and $S$ is Polish (separable and completely metrizable), the RKHS $\mathcal{C}_{\mathcal{S}}$ is separable (see e.g. \citeauthor{Owhadi2016}, \citeyear{Owhadi2016}).

 Let $\mathcal{C}_{\mathcal{X}}$ and $\mathcal{C}_{\mathcal{Y}}$ be the RKHS associated with some kernels $c_{\mathcal{X}}:\mathcal{X}\times \mathcal{X} \rightarrow \mathbb{R}$ and $c_{\mathcal{Y}}:\mathcal{Y}\times \mathcal{Y} \rightarrow \mathbb{R}$ and let $\Phi_{\mathcal{X}}$ and $\Phi_{\mathcal{Y}}$ be the corresponding canonical feature maps. Considering tensor products of the form $\zeta_{\mathcal{X}} \otimes \zeta_{\mathcal{Y}}:\mathcal{C}_{\mathcal{Y}} \rightarrow \mathcal{C}_{\mathcal{X}}$, defined as linear maps such that $\zeta_{\mathcal{X}} \otimes \zeta_{\mathcal{Y}}(\zeta_{\mathcal{Y}}') = \zeta_{\mathcal{X}}\langle\zeta_{\mathcal{Y}}, \zeta'_{\mathcal{Y}}\rangle_{\mathcal{Y}}$ for all $\zeta_{\mathcal{X}}\in \mathcal{C}_{\mathcal{X}}$ and all $\zeta_{\mathcal{Y}},\zeta'_{\mathcal{Y}}\in \mathcal{C}_{\mathcal{Y}}$, we can construct a tensor product Hilbert space $\mathcal{HS} = \mathcal{C}_{\mathcal{X}} \otimes \mathcal{C}_{\mathcal{Y}}$ as the completion of the linear span of $\{\zeta_{\mathcal{X}}\otimes \zeta_{\mathcal{Y}}: \zeta_{\mathcal{X}}\in \mathcal{C}_{\mathcal{X}}, \zeta_{\mathcal{Y}}\in \mathcal{C}_{\mathcal{Y}}\}$ with respect to the inner product given by $\langle \zeta_{\mathcal{X}}\otimes \zeta_{\mathcal{Y}},\zeta'_{\mathcal{X}}\otimes \zeta'_{\mathcal{Y}} \rangle_{\mathcal{HS}} = \langle\zeta_{\mathcal{X}}, \zeta'_{\mathcal{X}}\rangle_{\mathcal{C}_{\mathcal{X}}}\langle\zeta_{\mathcal{Y}}, \zeta'_{\mathcal{Y}}\rangle_{\mathcal{C}_{\mathcal{Y}}}$. From this tensor product Hilbert space structure, we can define the HSIC as follows:

\begin{definition}\label{def1}
	The Hilbert-Schmidt independence criterion associated with $X$, $Y$, $c_\mathcal{X}$ and $c_\mathcal{Y}$, is given by the expression
	\begin{equation}\label{1}
		\text{\normalfont HSIC}(X,Y;c_\mathcal{X},c_\mathcal{Y}) = \left|\left| \mathbb{E}\Big[(\Phi_{\mathcal{X}}(X)-\mathbb{E}[\Phi_{\mathcal{X}}(X)])\otimes (\Phi_{\mathcal{Y}}(Y)-\mathbb{E}[\Phi_{\mathcal{Y}}(Y)])\Big]\right|\right|^2_{\mathcal{HS}},
	\end{equation}	
	which is well-defined provided that
	 $\Phi_{\mathcal{X}}(X)$, $\Phi_{\mathcal{Y}}(Y)$ and $\Phi_{\mathcal{X}}(X)\otimes \Phi_{\mathcal{Y}}(Y)$ are Pettis integrable. Unless we want to emphasize the specific choice of kernels $c_\mathcal{X}$ and $c_\mathcal{Y}$, we drop the explicit reference to them and use the simplified notation $\text{\normalfont HSIC}(X,Y)$.
\end{definition}
	
	According to Definition \ref{def1}, $\text{\normalfont HSIC}(X,Y)$ is the squared norm in $\mathcal{HS}$ of the cross-covariance operator of $\Phi_{\mathcal{X}}(X)$ and $\Phi_{\mathcal{X}}(Y)$ which, if $\mathcal{HS}$ is separable, coincides with its Hilbert-Schmidt norm. By definition of $\Phi_{\mathcal{X}}$ and $\Phi_{\mathcal{Y}}$, sufficient conditions for the existence of Pettis integrals (\citeauthor{Diestel1977}, \citeyear{Diestel1977}, Ch. 2) in (\ref{1}) are the continuity of $c_{\mathcal{X}}$ and $c_{\mathcal{Y}}$ and the finiteness of $\mathbb{E}[c^{1/2}_{\mathcal{X}}(X,X)]$, $\mathbb{E}[c^{1/2}_{\mathcal{Y}}(Y,Y)]$ and $\mathbb{E}[c^{1/2}_{\mathcal{X}}(X,X)c^{1/2}_{\mathcal{Y}}(Y,Y)]$. Pettis integrals implicitly define a weak form of expectation by means of unidimensional projections: $\mathbb{E}[\Phi_{\mathcal{X}}(X)]$ and $\mathbb{E}[\Phi_{\mathcal{Y}}(Y)]$ are the unique elements of $\mathcal{C}_{\mathcal{X}}$ and $\mathcal{C}_{\mathcal{Y}}$ such that $\mathbb{E}[\langle \Phi_{\mathcal{X}}(X), \zeta_{\mathcal{X}}\rangle_{\mathcal{C}_{\mathcal{X}}}] = \langle \mathbb{E}[\Phi_{\mathcal{X}}(X)], \zeta_{\mathcal{X}}\rangle_{\mathcal{C}_{\mathcal{X}}}$ and $\mathbb{E}[\langle \mathbb{E}[\Phi_{\mathcal{Y}}(Y)], \zeta_{\mathcal{Y}}\rangle_{\mathcal{C}_{\mathcal{Y}}}]=\langle \mathbb{E}[\Phi_{\mathcal{Y}}(Y)], \zeta_{\mathcal{Y}}\rangle_{\mathcal{C}_{\mathcal{Y}}}$ for all $\zeta_{\mathcal{X}}\in \mathcal{C}_{\mathcal{X}}$ and all $\zeta_{\mathcal{Y}}\in \mathcal{C}_{\mathcal{Y}}$. Under the additional assumption of separability, Pettis integrals of measurable functions can be identified with Bochner integrals (\citeauthor{Diestel1977}, \citeyear{Diestel1977}, Ch. 2).
	
	Let $\Phi_{\mathcal{X}c}(x) = \Phi_{\mathcal{X}}(x) - \mathbb{E}[\Phi_{\mathcal{X}}(X)]$, $\Phi_{\mathcal{Y}c}(y) = \Phi_{\mathcal{Y}}(y) - \mathbb{E}[\Phi_{\mathcal{Y}}(Y)]$ and,  given two i.i.d. copies $(X,Y)$ and $(X',Y')$, define $c_{\mathcal{X}c}$ and $c_{\mathcal{Y}c}$ as 
	\begin{equation}\label{2}
		\begin{cases}
			c_{\mathcal{X}c}(x,x') = c_{\mathcal{X}}(x,x') - \mathbb{E}[c_{\mathcal{X}}(x,X')] - \mathbb{E}[c_{\mathcal{X}}(X,x')] + \mathbb{E}[c_{\mathcal{X}}(X,X')],\\
			c_{\mathcal{Y}c}(y,y')\hspace{0.15mm} = c_{\mathcal{Y}}(y,y') - \mathbb{E}[c_{\mathcal{Y}}(y,Y')] - \mathbb{E}[c_{\mathcal{Y}}(Y,y')] + \mathbb{E}[c_{\mathcal{Y}}(Y,Y')],
		\end{cases}
	\end{equation}
	so that $c_{\mathcal{X}c}(x,x') = \langle \Phi_{\mathcal{X}c}(x),\Phi_{\mathcal{X}c}(x') \rangle_{\mathcal{C}_{\mathcal{X}}}$ and $c_{\mathcal{Y}c}(y,y') = \langle \Phi_{\mathcal{Y}c}(y),\Phi_{\mathcal{Y}c}(y') \rangle_{\mathcal{C}_{\mathcal{Y}}}$ for all $x,x'\in \mathcal{X}$ and $y,y'\in \mathcal{Y}$ . Then, it is trivial to show that
	$$ \text{\normalfont HSIC}(X,Y) = \mathbb{E}[c_{\mathcal{X}c}(X,X')c_{\mathcal{Y}c}(Y,Y')].$$
	If $X$ and $Y$ are independent, $\text{\normalfont HSIC}(X,Y)$ factorizes into $\mathbb{E}[c_{\mathcal{X}c}(X,X')]$ and $\mathbb{E}[c_{\mathcal{Y}c}(Y,Y')]$, which are both equal to $0$ by construction. The reciprocal ($\text{\normalfont HSIC}(X,Y) = 0$ implies that $X$ and $Y$ are independent) holds if $c_{\mathcal{X}}$ and $c_{\mathcal{Y}}$ are characteristic.
	
	\begin{definition}\label{def2}
		We say that a kernel $c_{\mathcal{S}}$ is characteristic if the canonical feature map $\Phi_{\mathcal{S}}$ is measurable and, for all random variables $S,S'\in \mathcal{S}$ such that $\Phi_{\mathcal{S}}(S)$ and $\Phi_{\mathcal{S}}(S')$ are Pettis integrable, $\mathbb{E}[\Phi_{\mathcal{S}}(S)] = \mathbb{E}[\Phi_{\mathcal{S}}(S')]$ implies that $S$ and $S'$ are identically distributed.
	\end{definition}
	
	Some examples of characteristic kernels in separable Hilbert spaces are the distance-induced kernel $c_{\mathcal{S}}(s,s') = 0.5(||s||_{\mathcal{S}} + ||s'||_{\mathcal{S}} -||s-s'||_{\mathcal{S}})$ \citep{Sejdinovic2013} and the Gaussian kernel $c_{\mathcal{S}}(s,s') = \exp(-||s-s'||^2_{\mathcal{S}}/\sigma^2_{c_{\mathcal{S}}})$, where $\sigma^2_{c_{\mathcal{S}}}>0$ \citep{Ziegel2024}. For general Polish spaces we have the following result, the first part of which is a reformulation of Proposition 5.2 in \cite{Ziegel2024}.
	
	\begin{proposition}\label{pr1}
		Let $\mathcal{S}$ be any nonempty Polish space. Then:
		\begin{enumerate}
			\item If $\Xi : \mathcal{S} \rightarrow \tilde{\mathcal{S}}$ is an injective and measurable map taking values in a separable Hilbert space $\tilde{\mathcal{S}}$ and $c_{\tilde{\mathcal{S}}}:\tilde{\mathcal{S}}\times \tilde{\mathcal{S}} \rightarrow \mathbb{R}$ is a characteristic kernel, the kernel $c_{\mathcal{S}}(s,s') = c_{\tilde{\mathcal{S}}}(\Xi(s), \Xi(s'))$ is also characteristic.
			\item There always exists an injective and continuous map $\Xi : \mathcal{S} \rightarrow \ell^2$, where $\ell^2$ denotes the Hilbert space of real valued square summable sequences.
		\end{enumerate}
	\end{proposition}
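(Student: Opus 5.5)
For the first part, I would work directly from Definition~\ref{def2}. First, measurability of $\Phi_{\mathcal{S}}$: the RKHS of $c_{\mathcal{S}}$ is the pullback of $\mathcal{C}_{\tilde{\mathcal{S}}}$. The map $\Psi:\mathcal{C}_{\tilde{\mathcal{S}}}\to\mathcal{C}_{\mathcal{S}}$, $\zeta\mapsto \zeta\circ\Xi$, is a partial isometry. It is onto, and isometric on the closed span of $\{\Phi_{\tilde{\mathcal{S}}}(\Xi(s)):s\in\mathcal{S}\}$. It satisfies $\Phi_{\mathcal{S}}=\Psi\circ\Phi_{\tilde{\mathcal{S}}}\circ\Xi$, because $\Psi(c_{\tilde{\mathcal{S}}}(\Xi(s),\cdot))=c_{\mathcal{S}}(s,\cdot)$. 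Hence $\Phi_{\mathcal{S}}$ is a composition of a continuous linear map with measurable maps, so it is measurable.

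Next, let $S,S'$ be such that $\Phi_{\mathcal{S}}(S)$ and $\Phi_{\mathcal{S}}(S')$ are Pettis integrable with equal means. Put $T=\Xi(S)$ and $T'=\Xi(S')$. Because $\Psi$ is isometric on the relevant closed subspace $M$, in which $\Phi_{\tilde{\mathcal{S}}}(T)$ takes values, Pettis integrability transfers. For $\zeta\in M$ we have $\langle\Phi_{\tilde{\mathcal{S}}}(T),\zeta\rangle=\langle\Phi_{\mathcal{S}}(S),\Psi\zeta\rangle$. For $\zeta\in M^\perp$ the pairing vanishes identically. Therefore $\Psi^{*}\mathbb{E}[\Phi_{\mathcal{S}}(S)]$ serves as the Pettis integral of $\Phi_{\tilde{\mathcal{S}}}(T)$, and likewise for $T'$. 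Equality of the means in $\mathcal{C}_{\mathcal{S}}$ then gives equality in $\mathcal{C}_{\tilde{\mathcal{S}}}$. Since $c_{\tilde{\mathcal{S}}}$ is characteristic, $T\overset{d}{=}T'$.

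To return to $S,S'$, I use the Lusin--Souslin theorem. An injective Borel map between Polish spaces sends Borel sets to Borel sets. So for any Borel $B\subset\mathcal{S}$, the set $\Xi(B)$ is Borel and $\mathbb{P}(S\in B)=\mathbb{P}(T\in\Xi(B))=\mathbb{P}(T'\in\Xi(B))=\mathbb{P}(S'\in B)$. I expect the main obstacle to be this step together with the careful transfer of Pettis integrability through $\Psi$. Alternatively, this part can be cited as Proposition 5.2 of \cite{Ziegel2024}.

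For the second part, I take a complete metric $d$ and a countable dense set $\{s_k\}$ in $\mathcal{S}$. Replacing $d$ by $\min(d,1)$ keeps it complete and compatible. I then define
$$\Xi(s)=\bigl(2^{-k}d(s,s_k)\bigr)_{k\ge1}.$$
Each coordinate is bounded by $2^{-k}$, so $\Xi(s)\in\ell^2$. Continuity follows from the bound $\|\Xi(s)-\Xi(s')\|_{\ell^2}\le d(s,s')(\sum_k4^{-k})^{1/2}$. For injectivity, suppose $s\neq s'$ and let $\varepsilon=d(s,s')$. Pick $s_k$ with $d(s,s_k)<\varepsilon/3$. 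Then $d(s',s_k)>2\varepsilon/3$, so the $k$-th coordinates of $\Xi(s)$ and $\Xi(s')$ differ.
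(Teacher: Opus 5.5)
Your proof is correct and follows the same route as the paper. For the first part, you pull back through $\Xi$, use the characteristic property of $c_{\tilde{\mathcal{S}}}$ to get $\Xi(S)\overset{d}{=}\Xi(S')$, and transfer back using the fact that injective Borel maps between Polish spaces send Borel sets to Borel sets (you cite Lusin--Souslin where the paper cites Purves's theorem); for the second part, you use the same bounded-metric, weighted-distance embedding into $\ell^2$. Your partial isometry $\Psi$ tightens a step the paper glosses over, since the paper simply writes $\Phi_{\mathcal{S}}=\Phi_{\tilde{\mathcal{S}}}\circ\Xi$ and takes the Pettis integrability of $\Phi_{\tilde{\mathcal{S}}}(\Xi(S))$ for granted.
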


	Canonical feature maps of characteristic kernels are injective. To prove this, it suffices to consider random variables which take a single value a.s. in Definition \ref{def2}. Moreover, by virtue of Corollary 16 in \cite{Sejdinovic2013}, if $c_\mathcal{S}$ is injective, it induces a semimetric on $S$:
	\begin{equation}\label{3}
		d_{\Phi_{\mathcal{S}}}(s,s') = ||\Phi_{\mathcal{S}}(s) - \Phi_{\mathcal{S}}(s') ||^2_{\mathcal{C}_{\mathcal{S}}}.
	\end{equation}
	In general, $d_{\Phi_{\mathcal{S}}}$ is not a metric since the triangle inequality mail fail to hold. For example, taking $c_{\mathbb{R}}(x,y) = \exp(-(x-y)^2)$ (the Gaussian kernel in $\mathbb{R}$), it holds that $d_{\Phi_{\mathbb{R}}} = 2-2\exp(-(x-y)^2)$. Thus, $d_{\Phi_{\mathbb{R}}}(0,1) \approx 0.63$ and $d_{\Phi_{\mathbb{R}}}(0,0.5) = d_{\Phi_{\mathbb{R}}}(0.5,1)  \approx 0.22$, so the triangle inequality does not hold.
	
	A direct consequence of Proposition 29 in \cite{Sejdinovic2013} and Theorem 3.11 in \cite{Lyons2013} is that, if $c_{\mathcal{X}}$ and $c_{\mathcal{Y}}$ are characteristic kernels and the induced semimetrics $d_{\Phi_{\mathcal{X}}}$ and $d_{\Phi_{\mathcal{Y}}}$ are, in fact, metrics, then $\text{\normalfont HSIC}(X,Y) = 0$ if and only if $X$ and $Y$ are independent. Alternatively, it is sufficient for $c_{\mathcal{X}}c_{\mathcal{Y}}$ to be a characteristic and continous kernel from $\mathcal{X}\times \mathcal{Y}$ to $\mathbb{R}$ \citep{Sejdinovic2013}. If $\mathcal{X}$ and $\mathcal{Y}$ are second countable topological spaces and $c_{\mathcal{X}}$ and $c_{\mathcal{Y}}$ are characteristic and bounded with measurable canonical feature maps, $\text{\normalfont HSIC}(X,Y)=0$ also implies independence \citep{Szabo2018}. According to the following result, the characteristic property of $c_{\mathcal{X}}$ and $c_{\mathcal{Y}}$ suffices to ensure that $\text{\normalfont HSIC}(X,Y)$ characterizes independence for any nonempty topological spaces $\mathcal{X}$ and $\mathcal{Y}$ where Borel probability measures can be defined. 
	
	\begin{proposition}\label{pr2}
		Let $c_{\mathcal{X}}$ and $c_{\mathcal{Y}}$ be characteristic and assume that
		 $\Phi_{\mathcal{X}}(X)$, $\Phi_{\mathcal{Y}}(Y)$ and $\Phi_{\mathcal{X}}(X)\otimes \Phi_{\mathcal{Y}}(Y)$ are Pettis integrable. Then, the corresponding $\text{\normalfont HSIC}$ characterizes independence; that is,
		$$\text{\normalfont HSIC}(X,Y) = 0 \iff X\hspace{1mm} \text{and}\hspace{1mm} Y\hspace{0.5mm} \text{are independent}.$$
	\end{proposition}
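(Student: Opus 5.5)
The direction ``independence $\Rightarrow \text{HSIC}(X,Y)=0$'' was already noted after (\ref{2}): under independence the expectation $\mathbb{E}[c_{\mathcal{X}c}(X,X')c_{\mathcal{Y}c}(Y,Y')]$ factorizes and each factor vanishes. Equivalently, the cross-covariance element $\mathbb{E}[\Phi_{\mathcal{X}c}(X)\otimes\Phi_{\mathcal{Y}c}(Y)]$ equals $\mathbb{E}[\Phi_{\mathcal{X}c}(X)]\otimes\mathbb{E}[\Phi_{\mathcal{Y}c}(Y)]=0$. For the converse, the plan is to use the characteristic property of each kernel twice, through a change-of-measure argument, and never to use a characteristic property of the product kernel.

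Assume $\text{HSIC}(X,Y)=0$. Then the element $\mathbb{E}[\Phi_{\mathcal{X}}(X)\otimes\Phi_{\mathcal{Y}}(Y)] - \mathbb{E}[\Phi_{\mathcal{X}}(X)]\otimes\mathbb{E}[\Phi_{\mathcal{Y}}(Y)]$ is zero in $\mathcal{HS}$. Pairing it with $\zeta_{\mathcal{X}}\otimes\zeta_{\mathcal{Y}}$ and using the Pettis property and the reproducing property gives
\[
\mathbb{E}[\zeta_{\mathcal{X}}(X)\zeta_{\mathcal{Y}}(Y)]=\mathbb{E}[\zeta_{\mathcal{X}}(X)]\,\mathbb{E}[\zeta_{\mathcal{Y}}(Y)]
\]
for all $\zeta_{\mathcal{X}}\in\mathcal{C}_{\mathcal{X}}$ and $\zeta_{\mathcal{Y}}\in\mathcal{C}_{\mathcal{Y}}$.

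\textbf{Step 1.} Fix a Borel set $B\subseteq\mathcal{Y}$ with $\mathbb{P}(Y\in B)>0$. Define a new probability measure $\mathbb{Q}_B(\cdot)=\mathbb{P}(\cdot\cap\{Y\in B\})/\mathbb{P}(Y\in B)$ on $\Omega$. Under $\mathbb{Q}_B$, the random variable $X$ has the conditional law $\mathcal{L}(X\mid Y\in B)$. To put both laws on a common probability space, as Definition \ref{def2} requires, I will realise this law as the law of a random variable $S'$, for instance on a product extension of $(\Omega,\mathcal{F},\mathbb{P})$. Pettis integrability of $\Phi_{\mathcal{X}}(S')$ follows because its scalar projections are dominated: $|\zeta_{\mathcal{X}}(X)|\mathbb{1}_{\{Y\in B\}}\le|\zeta_{\mathcal{X}}(X)|$.

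\textbf{Step 2.} This is the key step and the main obstacle. The goal is to show $\mathbb{E}[\Phi_{\mathcal{X}}(S')]=\mathbb{E}[\Phi_{\mathcal{X}}(X)]$, i.e.\ $\mathbb{E}[\zeta_{\mathcal{X}}(X)\mathbb{1}_{\{Y\in B\}}]=\mathbb{E}[\zeta_{\mathcal{X}}(X)]\,\mathbb{P}(Y\in B)$. The factorization above gives this only for $\zeta_{\mathcal{Y}}\in\mathcal{C}_{\mathcal{Y}}$, not for indicators. I will avoid approximating indicators by RKHS functions, since that would require a universality-type density that characteristic kernels need not have. Instead I will apply the characteristic property of $c_{\mathcal{Y}}$ with the roles of the variables swapped.

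Fix $\zeta_{\mathcal{X}}$, and first handle $\zeta_{\mathcal{X}}\ge0$ with $\mathbb{E}[\zeta_{\mathcal{X}}(X)]>0$. Define a tilted law of $Y$ with density $\zeta_{\mathcal{X}}(X)/\mathbb{E}[\zeta_{\mathcal{X}}(X)]$. The identity gives equality of its kernel mean embedding with that of $Y$, so the characteristic property of $c_{\mathcal{Y}}$ yields equality of laws. This is exactly the desired identity for all $B$.

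A general $\zeta_{\mathcal{X}}$ need not be nonnegative. Write $\zeta_{\mathcal{X}}(X)=W^+-W^-$ and compare the two finite measures $B\mapsto\mathbb{E}[W^{\pm}\mathbb{1}_{\{Y\in B\}}]$. They have equal total mass up to the constant offset $\mathbb{E}[\zeta_{\mathcal{X}}(X)]\,\mathbb{P}_Y$. So I compare the probability measures with densities $(W^+ + a)/(\cdot)$ and $(W^- + a + \mathbb{E}[\zeta_{\mathcal{X}}(X)])$-type normalisations. More cleanly: the signed measure $\nu(B)=\mathbb{E}[(\zeta_{\mathcal{X}}(X)-\mathbb{E}\zeta_{\mathcal{X}}(X))\mathbb{1}_{\{Y\in B\}}]$ has total mass $0$ and zero kernel embedding. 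Its Jordan parts $\nu^{\pm}$ have equal mass and equal embeddings, and both are Pettis integrable by domination. After normalisation they are probability measures with the same embedding, so the characteristic property forces $\nu^+=\nu^-$, hence $\nu=0$. Realising these as laws of random variables is done as in Step 1, and the integrability of $\Phi_{\mathcal{Y}}$ under them uses the assumed Pettis integrability of $\Phi_{\mathcal{X}}(X)\otimes\Phi_{\mathcal{Y}}(Y)$.

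\textbf{Step 3.} With $\mathbb{E}[\zeta_{\mathcal{X}}(X)\mathbb{1}_{\{Y\in B\}}]=\mathbb{E}[\zeta_{\mathcal{X}}(X)]\,\mathbb{P}(Y\in B)$ holding for every $\zeta_{\mathcal{X}}$, apply the same signed-measure argument with the roles exchanged. Now $\mathbb{1}_{\{Y\in B\}}-\mathbb{P}(Y\in B)$ plays the role of the density, and the characteristic property of $c_{\mathcal{X}}$ gives
\[
\mathbb{P}(X\in A,\,Y\in B)=\mathbb{P}(X\in A)\,\mathbb{P}(Y\in B)
\]
for all Borel sets $A$ and $B$. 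This is independence.

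The delicate points, which I will check carefully, are the Pettis integrability of the tilted embeddings and the fact that Definition \ref{def2} is phrased for random variables rather than measures. This requires realising arbitrary probability laws on an extended probability space, which is harmless.
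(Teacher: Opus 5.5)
Your proof is correct and is essentially the paper's argument with the roles of $X$ and $Y$ interchanged. The paper works through Proposition \ref{pr3}, applied to $\Phi_{\mathcal{Y}}(Y)$ with the linear kernel on $\mathcal{C}_{\mathcal{Y}}$: it first uses the characteristic property of $c_{\mathcal{X}}$ on the positive and negative parts of the centred density $\zeta_{\mathcal{Y}}(Y)-\mathbb{E}[\zeta_{\mathcal{Y}}(Y)]$ to obtain $\mathbb{E}[(\zeta_{\mathcal{Y}}(Y)-\mathbb{E}[\zeta_{\mathcal{Y}}(Y)])\ind{X\in B_{\mathcal{X}}}]=0$, and then uses the characteristic property of $c_{\mathcal{Y}}$ on the law of $Y$ tilted by $\ind{X\in B_{\mathcal{X}}}$, which mirrors your Steps 2 and 3 (and, as there, in your Step 3 the nonnegative tilt you already set up in Step 1 suffices, so the signed-measure argument is needed only once).
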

	
	\begin{remark}
		It follows from Proposition \ref{pr2} and also Proposition 29 in \cite{Sejdinovic2013} that Theorem 3.11 in \cite{Lyons2013} can be extended to semimetric spaces.
	\end{remark}
	
\subsection{Mean independence}
\label{sec2.2}

A kernel-based measure of departure from the hypothesis of mean independence of $Y$ on $X$ was introduced in \cite{Lai2021} for random variables taking values in separable Hilbert spaces. It turns out that the kernel-based conditional mean dependence measure (KCMD) that the authors proposed is nothing but the HSIC with a linear kernel on $\mathcal{Y}$. The same applies to the martingale difference divergence \citep{Shao2014, Lee2020}.
 
Let $\mathcal{Y}$ is a Hilbert space and let $\langle\star,\cdot\rangle_{\mathcal{Y}}$ denote the linear kernel defined as $\langle\star,\cdot\rangle_{\mathcal{Y}}(y,y') = \langle y,y'\rangle_{\mathcal{Y}}$. 
The Pettis integrability of $Y$ suffices to guarantee that $\mathbb{E}[\langle Y, y' \rangle_{\mathcal{Y}}|X]$ exists as a conditional expectation with respect to the $\sigma$-algebra generated by $X$, which we denote by $\sigma(X)$, and is finite for all $y'\in \mathcal{Y}$. However, $\mathbb{E}[Y|X]$ may fail to exist as a $\sigma(X)$-measurable, $\mathcal{Y}$-valued and Pettis integrable random variable verifying $\mathbb{E}[\mathbb{E}[Y|X]\ind{X\in B_{\mathcal{X}}}] = \mathbb{E}[Y\ind{X\in B_{\mathcal{X}}}]$ for all $B_{\mathcal{X}}\in \mathcal{B}(\mathcal{X})$, the Borel $\sigma$-algebra of $\mathcal{X}$ \citep{Rybakov1971}. If $\mathbb{E}[Y|X]$ exists, it is a.s. unique and $\mathbb{E}[\langle Y, y' \rangle_{\mathcal{Y}}|X] =\langle  \mathbb{E}[Y|X],y' \rangle_{\mathcal{Y}}$ a.s. Moreover, a sufficient condition for the existence of $\mathbb{E}[Y|X]$ is that $Y$ is Bochner integrable.

It is straightforward to verify that $\text{\normalfont HSIC}(X,Y;c_{\mathcal{X}}, \langle\star,\cdot\rangle_{\mathcal{Y}}) = 0$ whenever $\mathbb{E}[Y|X] = \mathbb{E}[Y]$ a.s. The reciprocal was proved in \cite{Lai2021} assuming that $\mathcal{X}$ and $\mathcal{Y}$ are separable Hilbert spaces, the Bochner integrability of $Y$ and, in addition, that $c_{\mathcal{X}}$ is a characteristic kernel. The following proposition extends this result to any Hilbert space $\mathcal{Y}$ and any nonempty topological space $\mathcal{X}$ on which Borel probability measures can be defined.

\begin{proposition}\label{pr3}
	Let $\mathcal{Y}$ be a Hilbert space and let $c_{\mathcal{X}}$ be a characteristic kernel. Assume that   $\Phi_{\mathcal{X}}(X)$, $Y$ and $\Phi_{\mathcal{X}}(X)\otimes Y$ are Pettis integrable. Then,
	$$ \text{\normalfont HSIC}(X,Y;c_{\mathcal{X}}, \langle\star,\cdot\rangle_{\mathcal{Y}}) = 0 \iff \mathbb{E}[\langle Y, y'\rangle_{\mathcal{Y}}|X] = \langle \mathbb{E}[Y], y'\rangle_{\mathcal{Y}} \ \ \text{a.s. for all }y'\in \mathcal{Y}.$$ 
	Suppose, in addition, that $\mathbb{E}[Y|X]$ exists as Pettis integrable conditional expectation with respect to $\sigma(X)$. Then,
	$$ \text{\normalfont HSIC}(X,Y;c_{\mathcal{X}}, \langle\star,\cdot\rangle_{\mathcal{Y}}) = 0 \iff \mathbb{E}[Y|X] = \mathbb{E}[Y] \ \ \text{a.s}.$$ 
\end{proposition}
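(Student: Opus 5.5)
The strategy is to express the HSIC with linear kernel on $\mathcal{Y}$ as the squared norm of a cross-covariance element and test it against elementary tensors, so that it vanishes exactly when all scalar projections $\langle Y,y'\rangle_{\mathcal{Y}}$ are uncorrelated with every feature $\zeta_{\mathcal{X}}(X)$. The characteristic property of $c_{\mathcal{X}}$ then upgrades this to mean independence.

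Here $\Phi_{\mathcal{Y}}(y)=y$, under the identification of $\mathcal{Y}$ with the RKHS of $\langle\star,\cdot\rangle_{\mathcal{Y}}$ via $y\mapsto\langle y,\cdot\rangle_{\mathcal{Y}}$. Write $C=\mathbb{E}[\Phi_{\mathcal{X}c}(X)\otimes (Y-\mathbb{E}[Y])]\in\mathcal{HS}$. Since $\mathcal{HS}$ is the closed span of elementary tensors, $C=0$ if and only if $\langle C,\zeta_{\mathcal{X}}\otimes y'\rangle_{\mathcal{HS}}=0$ for all $\zeta_{\mathcal{X}}\in\mathcal{C}_{\mathcal{X}}$ and $y'\in\mathcal{Y}$. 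The map $T\mapsto\langle T,\zeta_{\mathcal{X}}\otimes y'\rangle_{\mathcal{HS}}$ is a continuous linear functional, so the Pettis integrability assumptions give
$$\langle C,\zeta_{\mathcal{X}}\otimes y'\rangle_{\mathcal{HS}}=\mathbb{E}\big[\zeta_{\mathcal{X}}(X)\langle Y,y'\rangle_{\mathcal{Y}}\big]-\mathbb{E}[\zeta_{\mathcal{X}}(X)]\,\mathbb{E}[\langle Y,y'\rangle_{\mathcal{Y}}].$$
The centring terms are handled by linearity of Pettis integrals. Consequently, HSIC $=0$ if and only if $\mathrm{Cov}(\zeta_{\mathcal{X}}(X),\langle Y,y'\rangle_{\mathcal{Y}})=0$ for all $\zeta_{\mathcal{X}}$ and $y'$.

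The direction ($\Leftarrow$) is immediate. Condition on $X$ and use $\mathbb{E}[\langle Y,y'\rangle|X]=\langle\mathbb{E}Y,y'\rangle$ to factor the first term; integrability of $\zeta_{\mathcal{X}}(X)\langle Y,y'\rangle_{\mathcal{Y}}$ follows from the Pettis integrability of $\Phi_{\mathcal{X}}(X)\otimes Y$.

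The direction ($\Rightarrow$) is the main step. Fix $y'$ and set $W=\langle Y,y'\rangle_{\mathcal{Y}}-\langle\mathbb{E}Y,y'\rangle_{\mathcal{Y}}$ and $g(X)=\mathbb{E}[W|X]$, so that $\mathbb{E}[g(X)\zeta_{\mathcal{X}}(X)]=0$ for all $\zeta_{\mathcal{X}}$. Equivalently, $\mathbb{E}[g(X)\Phi_{\mathcal{X}}(X)]=0$ as a Pettis integral. Split $g=g^+-g^-$. Since $\mathbb{E}g(X)=0$, we have $m:=\mathbb{E}g^+=\mathbb{E}g^-$. If $m=0$ then $g=0$ a.s. and we are done. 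Otherwise define probability measures $P^\pm(B)=m^{-1}\mathbb{E}[g^\pm(X)\ind{X\in B}]$ on $\mathcal{X}$, and let $S\sim P^+$ and $S'\sim P^-$. Then $\mathbb{E}[\Phi_{\mathcal{X}}(S)]=\mathbb{E}[\Phi_{\mathcal{X}}(S')]$, and the characteristic property in Definition \ref{def2} gives $P^+=P^-$. But $P^+$ and $P^-$ are carried by the disjoint sets $\{g>0\}$ and $\{g<0\}$, a contradiction, so $g(X)=0$ a.s.

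The main obstacle is checking that $\Phi_{\mathcal{X}}(S)$ and $\Phi_{\mathcal{X}}(S')$ are Pettis integrable under $P^\pm$, which Definition \ref{def2} requires. For scalar integrability we need $\mathbb{E}|g^\pm(X)\zeta_{\mathcal{X}}(X)|<\infty$; the hypothesis only controls $\zeta_{\mathcal{X}}(X)W$, not $g(X)\zeta_{\mathcal{X}}(X)$. I would first try to get integrability from conditional Jensen, $|g(X)\zeta(X)|\le\mathbb{E}[|W\zeta(X)|\,|X]$. The Pettis integral under $P^\pm$ then exists as the element $m^{-1}\mathbb{E}[g^\pm(X)\Phi_{\mathcal{X}}(X)]$. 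To see that this element lies in $\mathcal{C}_{\mathcal{X}}$, I would reduce via $\mathbb{E}[W\ind{g>0}\Phi_{\mathcal{X}}(X)]$, which is the Pettis integral of $\Phi_{\mathcal{X}}(X)\otimes Y$ applied to $y'$ and restricted to the $\sigma(X)$-measurable set $\{g>0\}$; such restrictions of Pettis integrals remain in the space. If this fails, I would strengthen the argument with truncations $g\ind{|g|\le n}$.

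The second statement then follows directly. When $\mathbb{E}[Y|X]$ exists, $\langle\mathbb{E}[Y|X]-\mathbb{E}Y,y'\rangle_{\mathcal{Y}}=0$ a.s. for each $y'$. Passing from this to $\mathbb{E}[Y|X]=\mathbb{E}[Y]$ a.s. is the sense in which Pettis conditional expectations are a.s. unique: they are identified through scalar projections. Equality then holds as Pettis random elements, with the usual caveat in the nonseparable case.
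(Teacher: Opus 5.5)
Your proof is correct and follows the paper's overall structure. Density of elementary tensors reduces HSIC $=0$ to $\mathbb{E}[\zeta_{\mathcal{X}}(X)\langle Y,y'\rangle_{\mathcal{Y}}]=\mathbb{E}[\zeta_{\mathcal{X}}(X)]\langle\mathbb{E}[Y],y'\rangle_{\mathcal{Y}}$. ($\Leftarrow$) is iterated expectations. For ($\Rightarrow$), positive and negative parts become two probability measures on $\mathcal{X}$ with equal kernel mean embeddings.

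The difference is what you split. After centring so that $\mathbb{E}[Y]=0$, the paper splits $W=\langle Y,y'\rangle_{\mathcal{Y}}$ itself, with $\tilde{\mathbb{P}}^{\pm}_{y'}(B)\propto\mathbb{E}[W^{\pm}\mathbbm{1}_{\{X\in B\}}]$. With that choice the obstacle you single out never arises. We have $|\zeta_{\mathcal{X}}(X)W^{\pm}|\le|\zeta_{\mathcal{X}}(X)W|$, and since $W^{+}=W\mathbbm{1}_{\{W\ge 0\}}$, the needed Pettis integrals are restrictions of $\mathbb{E}[\Phi_{\mathcal{X}}(X)\otimes Y]$ to events of $\mathcal{F}$, represented in $\mathcal{C}_{\mathcal{X}}$ via the Riesz representation theorem. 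The price is that $\tilde{\mathbb{P}}^{+}_{y'}=\tilde{\mathbb{P}}^{-}_{y'}$ only gives $\mathbb{E}[W\mathbbm{1}_{\{X\in B\}}]=0$ for all Borel $B$; one then invokes the definition of conditional expectation.

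Splitting $g=\mathbb{E}[W\mid X]$ instead gives you the conclusion directly, from the disjoint supports of $P^{\pm}$. The cost is transferring integrability from $W$ to $g$, and your first-choice route for this works, so the truncation fallback is unnecessary. For $\sigma(X)$-measurable $Z$ with $ZW$ integrable, $\mathbb{E}[ZW\mid X]=Z\,g(X)$ a.s. (truncate $Z$ to prove this). This yields two facts:
\begin{itemize}
\item $\mathbb{E}|g(X)\zeta_{\mathcal{X}}(X)|\le\mathbb{E}|W\zeta_{\mathcal{X}}(X)|<\infty$;
\item $\mathbb{E}[g^{+}(X)\zeta_{\mathcal{X}}(X)\mathbbm{1}_{\{X\in B\}}]=\mathbb{E}[W\zeta_{\mathcal{X}}(X)\mathbbm{1}_{E}]$ with $E=\{g(X)>0\}\cap\{X\in B\}$.
\end{itemize}
The functional in the second fact is represented in $\mathcal{C}_{\mathcal{X}}$ by restricting the Pettis integrals of $\Phi_{\mathcal{X}}(X)\otimes Y$ and $\Phi_{\mathcal{X}}(X)$ to $E$.

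In the second statement, you should spell out $\mathbb{E}[\langle Y,y'\rangle_{\mathcal{Y}}\mid X]=\langle\mathbb{E}[Y\mid X],y'\rangle_{\mathcal{Y}}$ a.s. The paper derives this from the defining property of the Pettis conditional expectation and uniqueness of scalar conditional expectations. Your caveat about upgrading scalar a.s. equality to a.s. equality when $\mathcal{Y}$ is nonseparable is a point the paper passes over silently.
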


We may wonder if Proposition \ref{pr3} can be further extended to the case of $Y$ taking values in a Banach space. The answer is affirmative, provided that there is a continuous linear embedding $\Xi$ from $\mathcal{Y}$ into some Hilbert space $\tilde{\mathcal{S}}$. Indeed, if $\Xi$ is injective, linear and continous (and hence, bounded), it follows that $\Xi^*(\tilde{\mathcal{S}}^*)\subset \mathcal{Y}^*$ separates points of $\mathcal{Y}$, where $\Xi^*$ denotes the dual operator of $\Xi$, $\tilde{\mathcal{S}}^*$ is the dual space of $\tilde{\mathcal{S}}$ and $\mathcal{Y}^*$ is the dual space of $Y$. Moreover, $\Xi^*(\tilde{\mathcal{S}}^*) = \{\langle \Xi(\cdot), \tilde{s}\rangle_{\tilde{\mathcal{S}}}\mid\tilde{s}\in \tilde{\mathcal{S}} \}$. Thus, it follows from the definitions of conditional expectation with respect to $\sigma(X)$ and Pettis integrability that
\begin{align*}
	 \mathbb{E}[y^{\prime *}(Y)|X] =  y^{\prime *}(\mathbb{E}[Y])  \ \ \text{a.s. for all }y^{\prime *}\in \mathcal{Y}^* 
	 &\iff y^{\prime *}\big(\mathbb{E}\big[(Y - \mathbb{E}[Y])\ind{X\in B_{\mathcal{X}}}\big]\big) = 0  \ \ \text{for all }y^{\prime *}\in \mathcal{Y}^*, B_{\mathcal{X}}\in \mathcal{B}(\mathcal{X})\\
	 &\iff \big\langle \Xi\big(\mathbb{E}\big[(Y - \mathbb{E}[Y])\ind{X\in B_{\mathcal{X}}}\big]\big), \tilde{s}\big\rangle_{\tilde{\mathcal{S}}} = 0  \ \ \text{for all }\tilde{s}\in \tilde{\mathcal{S}}, B_{\mathcal{X}}\in \mathcal{B}(\mathcal{X})\\
	 &\iff\mathbb{E}[ \langle \Xi(Y), \tilde{s}\rangle_{\tilde{\mathcal{S}}}|X] =  \langle \mathbb{E}[\Xi(Y)], \tilde{s}\rangle_{\tilde{\mathcal{S}}}  \ \ \text{a.s. for all }\tilde{s}\in \tilde{\mathcal{S}}.\\
\end{align*}	
where we have implicitly used that Pettis integrals, defined by dual space function evaluations, commute with bounded linear operators. Similarly, if $\mathbb{E}[Y|X]$ exists, $\mathbb{E}[\Xi(Y)|X] = \Xi(\mathbb{E}[Y|X])$ a.s. and
$$\mathbb{E}[Y|X] = \mathbb{E}[Y]  \ \ \text{a.s.} \iff \Xi(\mathbb{E}[Y|X])  = \Xi(\mathbb{E}[Y]) \ \ \text{a.s.} \iff \mathbb{E}[\Xi(Y)|X]  = \mathbb{E}[\Xi(Y)] \ \ \text{a.s.},$$
Then, we can determine if $\Xi(Y)$ is mean independent of $X$ via $\text{\normalfont HSIC}(X,\Xi(Y);c_{\mathcal{X}}, \langle\star,\cdot\rangle_{\tilde{\mathcal{S}}})$ and transfer the conclusions to the original pair $(X,Y)$.
\begin{example}
	 The $L^p$ spaces of functions in the $(0,1)$ interval, $L^p(0,1)$, with $2\le p<\infty$ and the space of continuous functions in the unit interval endowed with the supremum norm, $\mathcal{C}([0,1])$, are separable Banach spaces that can be embedded into $L^2(0,1)$ by choosing $\Xi$ as the inclusion function.
	Furthermore, such linear embeddings always exist provided that $\mathcal{Y}$ is separable since, by virtue of the Banach–Mazur Theorem (see e.g. Theorem 1.4.3 in \citeauthor{Albiac2006}, \citeyear{Albiac2006}), all separable Banach spaces are linearly and isometrically embedded into a closed subset of $\mathcal{C}([0,1])$. 
\end{example}

\section{Kernel-based independence and mean independence tests}
\label{sec3}

Given a sample $\{(X_i,Y_i)\}_{i=1}^n$ with the same distribution as the pair $(X,Y)$, the empirical estimator of $\text{\normalfont HSIC}(X,Y)$ is given by the expression
\begin{equation}\label{4}
	\text{\normalfont HSIC}_n(X,Y) = \left|\left|\frac{1}{n}\sum_{i=1}^n (\Phi_{\mathcal{X}}(X_i)- \bar{\Phi}_{\mathcal{X}}(X)) \otimes  (\Phi_{\mathcal{Y}}(Y_i)- \bar{\Phi}_{\mathcal{Y}}(Y))\right|\right|^2_{\mathcal{HS}},
\end{equation}
where $\bar{\Phi}_{\mathcal{X}}(X)$ and $\bar{\Phi}_{\mathcal{Y}}(Y)$ denote the sample averages of $\{\Phi_{\mathcal{X}}(X_i)\}_{i=1}^n$ and $\{\Phi_{\mathcal{Y}}(Y_i)\}_{i=1}^n$, respectively. Moreover, if we denote by $\hat{c}_{\mathcal{X}c}$ and $\hat{c}_{\mathcal{Y}c}$ the empirical analogue of $c_{\mathcal{X}c}$ and $c_{\mathcal{Y}c}$, defined as in (\ref{2}), then
$$\text{\normalfont HSIC}_n(X,Y) = \frac{1}{n^2}\sum\limits_{i,j=1}^n \hat{c}_{\mathcal{X}c}(X_i,X_j)\hat{c}_{\mathcal{Y}c}(Y_i,Y_j).$$
Based on $\text{\normalfont HSIC}_n(X,Y)$, a test of independence (or mean independence) can be constructed by rejecting the corresponding null hypothesis whenever this statistic, with a proper choice of kernels $c_{\mathcal{X}}$ and $c_{\mathcal{Y}}$ according to the discussion in Section \ref{sec2}, is significantly different from $0$. Alternatively, an U-centering approach \citep{Szekely2014} could be followed to rely on an analogous, unbiased (in the i.i.d. case) estimator of $\text{\normalfont HSIC}(X,Y)$. The consistency and limit distribution of $\text{\normalfont HSIC}_n(X,Y)$ based on i.i.d. samples, as well as the distance covariance counterpart, are studied in \cite{Gretton2007}, \cite{Lyons2013} and \cite{Sejdinovic2013}, among others.
We provide in this section analogous results under stationarity, ergodicity (for the test statistic consistency) and near epoch dependence on $\beta$-mixing underlying processes (to derive limit distributions), assuming $\mathcal{X}$ and $\mathcal{Y}$ to be standard Borel spaces (Polish spaces endowed with Borel $\sigma$-algebras). Thus, our dependence framework is more general than those previously considered in the literature and this is not a minor issue given that mixing conditions can fail to hold for simple autoregressive models \citep{Andrews1984}, whereas, for example, causal linear time series models are near epoch dependent under some mild moment existence assumptions (\citeauthor{Davidson2002}, \citeyear{Davidson2002}, Ch. 17).

\subsection{Consistency and limit distribution}
\label{sec3.1}

Let $(\mathcal{X}\times \mathcal{Y})^\mathbb{Z}$ denote the space of sequences of $\mathcal{X}\times \mathcal{Y}$ valued random variables endowed with the usual product topology and product Borel $\sigma$-algebra, which is a standard Borel space.  Let $T:(\mathcal{X}\times \mathcal{Y})^\mathbb{Z} \rightarrow (\mathcal{X}\times \mathcal{Y})^\mathbb{Z}$ denote the shift function defined as $T(\{(x_i,y_i)\}_{i\in \mathbb{Z}}) = \{(x_{i+1},y_{i+1})\}_{i\in \mathbb{Z}}$ for all $\{(x_i,y_i)\}_{i\in \mathbb{Z}}$. We assume that $\{(X_i,Y_i)\}_{i\in \mathbb{Z}}\in (\mathcal{X}\times \mathcal{Y})^\mathbb{Z}$ is a strictly stationary process, that is, the finite dimensional distributions of the process are invariant to joint index shifts. We also require $\{(X_i,Y_i)\}_{i\in \mathbb{Z}}$ to be ergodic, i.e., the probability of any measurable set $B_{(\mathcal{X}\times \mathcal{Y})^\mathbb{Z}}$ such that $T^{-1}(B_{(\mathcal{X}\times \mathcal{Y})^\mathbb{Z}}) = B_{(\mathcal{X}\times \mathcal{Y})^\mathbb{Z}}$ is equal to either $0$ or $1$ (see e.g. Chapter 1 in \citeauthor{Krengel1985}, \citeyear{Krengel1985}). Under this assumptions, we can prove 
the consistency of $\text{\normalfont HSIC}_n(X,Y)$ as an estimator of $\text{\normalfont HSIC}(X,Y)$ under minimal moment and measurability conditions.

\begin{theorem}\label{Th1}
Let $\{(X_i,Y_i)\}_{i\in \mathbb{Z}}$ be strictly stationary and ergodic. In addition, let $\Phi_{\mathcal{X}}$ and $\Phi_{\mathcal{Y}}$ be measurable and assume that $\mathbb{E}[c^{1/2}_{\mathcal{X}}(X,X)]$, $\mathbb{E}[c^{1/2}_{\mathcal{Y}}(Y,Y)]$ and $\mathbb{E}[c^{1/2}_{\mathcal{X}}(X,X)c^{1/2}_{\mathcal{Y}}(Y,Y)]$ exist and are finite. Then,
	$$ \text{\normalfont HSIC}_{n}(X,Y) \xrightarrow{a.s.} \text{\normalfont HSIC}(X,Y).$$
\end{theorem}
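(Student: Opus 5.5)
The plan is to write $\text{\normalfont HSIC}_n(X,Y)$ as the squared $\mathcal{HS}$-norm of a continuous function of three Hilbert-space-valued sample means, and then apply a strong law of large numbers in separable Hilbert spaces for stationary ergodic sequences. Expanding the centred tensor product in (\ref{4}) gives
$$\frac{1}{n}\sum_{i=1}^n (\Phi_{\mathcal{X}}(X_i)- \bar{\Phi}_{\mathcal{X}}(X)) \otimes  (\Phi_{\mathcal{Y}}(Y_i)- \bar{\Phi}_{\mathcal{Y}}(Y)) = \frac{1}{n}\sum_{i=1}^n \Phi_{\mathcal{X}}(X_i)\otimes\Phi_{\mathcal{Y}}(Y_i) - \bar{\Phi}_{\mathcal{X}}(X)\otimes\bar{\Phi}_{\mathcal{Y}}(Y).$$
The same algebra applied to the population quantity in (\ref{1}) yields $\mathbb{E}[\Phi_{\mathcal{X}}(X)\otimes\Phi_{\mathcal{Y}}(Y)] - \mathbb{E}[\Phi_{\mathcal{X}}(X)]\otimes\mathbb{E}[\Phi_{\mathcal{Y}}(Y)]$. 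It therefore suffices to prove a.s. convergence of the three sample means to their expectations in $\mathcal{C}_{\mathcal{X}}$, $\mathcal{C}_{\mathcal{Y}}$ and $\mathcal{HS}$ respectively. The conclusion then follows because $(a,b,C)\mapsto \|C-a\otimes b\|^2_{\mathcal{HS}}$ is continuous, using $\|a\otimes b\|_{\mathcal{HS}}=\|a\|\|b\|$ and bilinearity.

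For the ingredients, we use that $\mathcal{X},\mathcal{Y}$ are Polish and $\Phi_{\mathcal{X}},\Phi_{\mathcal{Y}}$ are measurable, so $\mathcal{C}_{\mathcal{X}}$ and $\mathcal{C}_{\mathcal{Y}}$ are separable, as recalled in Section \ref{sec2.1}. Then $\mathcal{HS}$ is separable, and $(x,y)\mapsto \Phi_{\mathcal{X}}(x)\otimes\Phi_{\mathcal{Y}}(y)$ is measurable, since $\otimes$ is continuous and in a separable space the Borel $\sigma$-algebra of the product is the product $\sigma$-algebra. The reproducing property gives $\|\Phi_{\mathcal{X}}(X)\|=c^{1/2}_{\mathcal{X}}(X,X)$, $\|\Phi_{\mathcal{Y}}(Y)\|=c^{1/2}_{\mathcal{Y}}(Y,Y)$ and $\|\Phi_{\mathcal{X}}(X)\otimes\Phi_{\mathcal{Y}}(Y)\|_{\mathcal{HS}}=c^{1/2}_{\mathcal{X}}(X,X)c^{1/2}_{\mathcal{Y}}(Y,Y)$. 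The three moment assumptions are exactly Bochner integrability of these three random elements, and Bochner integrals coincide with the Pettis integrals in (\ref{1}).

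Next, each of the three sequences, for instance $Z_i=\Phi_{\mathcal{X}}(X_i)\otimes\Phi_{\mathcal{Y}}(Y_i)$, is of the form $Z_i = g(T^i\omega)$ for a measurable $g$ on $(\mathcal{X}\times\mathcal{Y})^\mathbb{Z}$ evaluating the zeroth coordinate. Hence it is a stationary ergodic Bochner-integrable sequence in a separable Hilbert space. The main step is the Banach-space-valued Birkhoff ergodic theorem: $n^{-1}\sum_{i=1}^n Z_i\to \mathbb{E}[Z_0]$ a.s. (see e.g.\ Krengel, 1985, Thm.\ 4.2.1).

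If one prefers to derive this from the scalar ergodic theorem, I would proceed as follows, and this is the main point needing care. Fix $\varepsilon>0$ and use separability and Bochner integrability to find a simple (finitely valued) measurable $g_\varepsilon$ with $\mathbb{E}\|g(\cdot)-g_\varepsilon(\cdot)\|<\varepsilon$. Apply the scalar Birkhoff theorem finitely many times to get convergence of the averages of $g_\varepsilon(T^i\omega)$. Apply it once more to the real sequence $\|g-g_\varepsilon\|(T^i\omega)$ to bound $\limsup_n\|n^{-1}\sum_{i=1}^n(Z_i-g_\varepsilon(T^i\omega))\|\le\varepsilon$ a.s. Letting $\varepsilon\downarrow0$ along a countable sequence finishes the proof. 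Combining the three a.s. limits with the continuity argument of the first paragraph gives $\text{\normalfont HSIC}_n(X,Y)\to\text{\normalfont HSIC}(X,Y)$ a.s.
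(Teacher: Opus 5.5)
Your proposal is correct and follows essentially the same route as the paper: write $\text{\normalfont HSIC}_n(X,Y)$ as a continuous function of Bochner-integrable sample means in the separable spaces $\mathcal{C}_{\mathcal{X}}$, $\mathcal{C}_{\mathcal{Y}}$ and $\mathcal{HS}$, apply a vector-valued ergodic theorem on the path space, and conclude by continuous mapping. The differences are cosmetic. The paper centres at the population means, subtracting $(\mathbb{E}[\Phi_{\mathcal{X}}(X)]-\bar{\Phi}_{\mathcal{X}}(X))\otimes(\mathbb{E}[\Phi_{\mathcal{Y}}(Y)]-\bar{\Phi}_{\mathcal{Y}}(Y))$ in (\ref{Th1.1}), and cites Beck's ergodic theorem for reflexive Banach spaces. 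You instead use the uncentred identity and sketch the standard reduction to the scalar Birkhoff theorem via simple-function approximation.
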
	

\begin{remark}\label{remark2}
	If both $\Phi_{\mathcal{X}}$ and $\Phi_{\mathcal{Y}}$ are measurable, the tensor product $\Phi_{\mathcal{X}}\otimes \Phi_{\mathcal{Y}}$ is also measurable. Moreover, since $\mathcal{X}$ and $\mathcal{Y}$ are Polish, $\mathcal{C}_{\mathcal{X}}$, $\mathcal{C}_{\mathcal{Y}}$ and $\mathcal{HS}$ are separable \citep{Owhadi2016}. If we further assume that $\mathbb{E}[c^{1/2}_{\mathcal{X}}(X,X)]$, $\mathbb{E}[c^{1/2}_{\mathcal{Y}}(Y,Y)]$ and $\mathbb{E}[c^{1/2}_{\mathcal{X}}(X,X)c^{1/2}_{\mathcal{Y}}(Y,Y)]$ are finite, it follows from Pettis's measurability theorem (see e.g. Theorem 2 in \citeauthor{Diestel1977}, \citeyear{Diestel1977}) that $\Phi_{\mathcal{X}}(X)$, $\Phi_{\mathcal{Y}}(Y)$ and $\Phi_{\mathcal{X}}(X)\otimes \Phi_{\mathcal{X}}(Y)$ are Bochner integrable. 
\end{remark}

By virtue of the continuous mapping theorem and the representation of $\text{\normalfont HSIC}_{n}(X,Y)$ as the squared norm of an empirical cross-covariance operator in (\ref{4}), it is clear that the asymptotic distribution of $\text{\normalfont HSIC}_{n}(X,Y)$ can be elegantly derived from a central limit theorem for dependent data in Hilbert spaces, such as Theorem 1.1 in \cite{Dehling2015}, which requires some finite moment restrictions on $\Phi_{\mathcal{X}c}(X)$ and $\Phi_{\mathcal{X}c}(Y)$ and, moreover, the assumption that $\{\Phi_{\mathcal{X}c}(X_i)\otimes \Phi_{\mathcal{Y}c}(Y_i)\}_{i\in \mathbb{Z}}$ is $L^1$ near epoch dependent on some underlying strictly stationary process $\{\eta_{i}\}_{i\in \mathbb{Z}}$, taking values in some nonempty Polish space E, that is $\beta$-mixing with coefficients  $\{\beta_j\}_{j=0}^\infty$.
\begin{definition}
	We say that a strictly stationary process $\{\eta_{i}\}_{i\in \mathbb{Z}} \in \text{E}^{\mathbb{Z}}$ is $\beta$-mixing (also known as absolutely regular), with coefficients $\{\beta_j\}_{j=0}^\infty$, if
	$$\beta_j = \mathbb{E}\!\left[\sup_{A\in \sigma(\{\eta_{k}\}_{k\ge j})}\!\left|\mathbb{P}(A|\sigma(\{\eta_{k}\}_{k\le 0})) - \mathbb{P}(A) \right|\right]\xrightarrow{j\rightarrow\infty} 0,$$
	where $\sigma(\{\eta_{k}\}_{k\ge j})$ and $\sigma(\{\eta_{k}\}_{k\le 0})$ denote the $\sigma$-algebras generated by $\{\eta_{k}\}_{k\ge j}$ and $\{\eta_{k}\}_{k\le 0}$.
\end{definition}

\begin{definition}\label{def4}
 Let $\mathcal{S}$ be a nonempty Polish space, let $d_{\mathcal{S}}$ be some metric that induces the topology on $\mathcal{S}$ and let $p>0$. We say that $\{S_{i}\}_{i\in \mathbb{Z}} \in \mathcal{S}^{\mathbb{Z}}$ is $L^p$ near epoch dependent ($L^p$ $\text{\normalfont NED}$) on $\{\eta_{i}\}_{i\in \mathbb{Z}} \in \text{E}^{\mathbb{Z}}$, with respect to $d_{\mathcal{S}}$, with coefficients $\{a_j\}_{j=0}^\infty$, if there is a measurable function $\Pi_{\mathcal{S}}:\text{E}^{\mathbb{Z}} \rightarrow \mathcal{S}$ such that $S_i = \Pi_{\mathcal{S}}(\{\eta_{i+j}\}_{j\in \mathbb{Z}})$ for all $i\in \mathbb{Z}$ and, moreover, there is a sequence of measurable functions $\{\Pi_{\mathcal{S}j}:\text{E}^j \rightarrow \mathcal{X}\}_{j\in \mathbb{N}}$ verifying
 $$\mathbb{E}\big[d_{\mathcal{S}}(S_0, \Pi_{\mathcal{S}j}(\eta_{-j},\ldots, \eta_{j}))^p\big]^\frac{1}{p} \le a_j \xrightarrow{j\rightarrow\infty} 0.$$
 In addition, we say that $\{S_{i}\}_{i\in \mathbb{Z}}$ is near epoch dependent in probability ($\mathbb{P}$ $\text{\normalfont NED}$) on $\{\eta_{i}\}_{i\in \mathbb{Z}}$ if 
 $$d_{\mathcal{S}}(S_0, \Pi_{\mathcal{S}j}(\eta_{-j},\ldots, \eta_{j})) \xrightarrow{j\rightarrow \infty} 0,$$
 in probability.
\end{definition}
\begin{remark}\label{remark3}
	As some results in \cite{Dehling2016} imply, $L^p$ $\text{\normalfont NED}$ follows, under some finite moment conditions, from the weaker, and more general, notion of $\mathbb{P}$ \textrm{$\text{\normalfont NED}$} introduced in Definition \ref{def4}. Moreover, the choice of metric $d_{\mathcal{S}}$ is not relevant to establish the $\mathbb{P}$ $\text{\normalfont NED}$ property since inclusion functions between metric spaces endowed with topologically equivalent metrics are continuous and, therefore, convergence in probability is preserved by virtue of the continuous mapping theorem. What is not preserved when we change metrics, however, is the rate of convergence.
\end{remark}

We assume that $\{X_i\}_{i\in \mathbb{Z}}$ and $\{Y_i\}_{i\in \mathbb{Z}}$ are $\mathbb{P}$ $\text{\normalfont NED}$ on a $\beta$-mixing process $\{\eta_{i}\}_{i\in \mathbb{Z}}$ and that $\Phi_{\mathcal{X}}$ and $\Phi_{\mathcal{Y}}$ are continuous, so that $\{\Phi_{\mathcal{X}c}(X_i)\}_{i\in \mathbb{Z}}$, $\{\Phi_{\mathcal{Y}c}(Y_i)\}_{i\in \mathbb{Z}}$ and $\{\Phi_{\mathcal{X}c}(X_i)\otimes \Phi_{\mathcal{Y}c}(Y_i)\}_{i\in \mathbb{Z}}$ are also $\mathbb{P}$ $\text{\normalfont NED}$ on $\{\eta_{i}\}_{i\in \mathbb{Z}}$ as a consequence of the continuous mapping theorem. Furthermore, following Appendix A in \cite{Dehling2016}, we have
$$ \inf\Big\{\epsilon\mid	\mathbb{P}\big(d_{\mathcal{X}}(X_0, \Pi_{\mathcal{X}j}(\eta_{-j},\ldots, \eta_{j})) >\epsilon\big)\le \epsilon, 	\mathbb{P}\big(d_{\mathcal{Y}}(Y_0, \Pi_{\mathcal{Y}j}(\eta_{-j},\ldots, \eta_{j})) >\epsilon\big)\le \epsilon\Big\} \xrightarrow{j\rightarrow \infty} 0,$$
for any metrics $d_{\mathcal{X}}$ and $d_{\mathcal{Y}}$ that induce the topologies on $\mathcal{X}$ and $\mathcal{Y}$, respectively.
This, in turn, implies that there always exist a sequence $\{\epsilon_{j}\}_{j=0}^\infty$, depending on $d_{\mathcal{X}}$ and $d_{\mathcal{Y}}$, such that $\epsilon_{j}\rightarrow 0$ as $j\rightarrow \infty$ and
\begin{equation}\label{5}
P_{j} = \max\Big\{\mathbb{P}\big(d_{\mathcal{X}}(X_0, \Pi_{\mathcal{X}j}(\eta_{-j},\ldots, \eta_{j})) >\epsilon_{j}\big), 	\mathbb{P}\big(d_{\mathcal{Y}}(Y_0, \Pi_{\mathcal{Y}j}(\eta_{-j},\ldots, \eta_{j})) >\epsilon_{j}\big)\Big\} \xrightarrow{j\rightarrow \infty} 0.
\end{equation}
In some cases, it is possible to quantify the rate of convergence to zero in equation (\ref{5}) for any sequence $\{\epsilon_{j}\}_{j=0}^\infty$.
\begin{example}
	Let us assume that $\{X_i\}_{i\in \mathbb{Z}}$ and $\{Y_i\}_{i\in \mathbb{Z}}$ are also $L^p$ $\text{\normalfont NED}$ on $\{\eta_{i}\}_{i\in \mathbb{Z}}$, for some $p>0$, with respect to $d_{\mathcal{X}}$ and $d_{\mathcal{Y}}$ and with coefficients $\{a_{j}\}_{j=0}^\infty$, that is,
	$$a_{j} = \max\Big\{\mathbb{E}\big[ d_{\mathcal{X}}(X_0, \Pi_{\mathcal{X}j}(\eta_{-j},\ldots, \eta_{j}))^p\big]^\frac{1}{p}, \mathbb{E}\big[ d_{\mathcal{Y}}(Y_0, \Pi_{\mathcal{Y}j}(\eta_{-j},\ldots, \eta_{j}))^p\big]^\frac{1}{p}\Big\}\xrightarrow{j\rightarrow \infty} 0.$$
	Then, by virtue of Markov's inequality,
	$P_{j} \le a^p_{j}/\epsilon_{j}^{p}$.
\end{example}
As the following technical conditions imply, we require the existence of some metrics $d_{\mathcal{X}}$ and $d_{\mathcal{Y}}$ such that the corresponding sequences $\{\epsilon_{j}\}_{j=0}^\infty$ and $\{P_{j}\}_{j=0}^\infty$ converge to zero at an appropriate rate.
\begin{enumerate}
	\renewcommand{\labelenumi}{(T\arabic{enumi})}
	\item Both $\Phi_{\mathcal{X}}$ and $\Phi_{\mathcal{Y}}$ are continuous. Moreover,
	let $d_{\Phi_{\mathcal{X}}}$ and $d_{\Phi_{\mathcal{Y}}}$ be the semimetrics induced by $c_{\mathcal{X}}$ and $c_{\mathcal{Y}}$, respectively, defined as in (\ref{3}). Then,
	$$\varrho_{}(\epsilon) = \max\!\left\{\mathbb{E}\!\left[\sup_{x\in \mathcal{X}: d_{\mathcal{X}}(X,x)\le \epsilon}d_{\Phi_{\mathcal{X}}}(X,x)\right]^\frac{1}{2}, \mathbb{E}\!\left[\sup_{y\in \mathcal{Y}:d_{\mathcal{Y}}(Y,y)\le \epsilon}d_{\Phi_{\mathcal{Y}}}(Y,y)\right]^\frac{1}{2}\right\}\xrightarrow{\epsilon\rightarrow 0} 0.$$
	\item There is some $\delta>0$ such that $\mathbb{E}[c^{(2+\delta)/2}_{\mathcal{X}}(X,X)]$, $\mathbb{E}[c^{(2+\delta)/2}_{\mathcal{Y}}(Y,Y)]$ and 
	$\mathbb{E}[c^{(2+\delta)/2}_{\mathcal{X}}(X,X)c^{(2+\delta)/2}_{\mathcal{Y}}(Y,Y)]$ are finite. Moreover, $\sum_{j=0}^\infty \beta_j^{\delta/(2+\delta)}< \infty$, $\sum_{j=0}^\infty P_{j}^{\delta/(2+\delta)} < \infty$ and $\sum_{j=0}^\infty \varrho^{\delta/(1+\delta)}(\epsilon_{j}) < \infty$.
\end{enumerate}

Condition (T1) holds if $\Phi_{\mathcal{X}}$ and $\Phi_{\mathcal{Y}}$ are uniformly continuous as functions defined on the metric spaces $(\mathcal{X},d_{\mathcal{X}})$ and $(\mathcal{Y},d_{\mathcal{Y}})$, respectively. If $\mathcal{X}$ and $\mathcal{Y}$ are Hilbert spaces and $d_{\mathcal{X}}$ and $d_{\mathcal{Y}}$ are the corresponding usual metrics, (T1) is verified by the most prevalent kernels in the literature, such as the distance-induced kernel (with $\varrho_{}(\epsilon) = \epsilon^{1/2}$) and the Gaussian kernel (with $\varrho_{}(\epsilon) \le \sqrt{2}\epsilon$). Condition (T2) connects the kernel regularity imposed in (T1) with the $\mathbb{P}$ $\text{\normalfont NED}$ assumptions on $\{X_i\}_{i\in \mathbb{Z}}$ and $\{Y_i\}_{i\in \mathbb{Z}}$. The limit distribution of $\text{\normalfont HSIC}_{n}(X,Y)$ is described in the following result.
\begin{theorem}\label{Th2}
	Let $\{X_i\}_{i\in \mathbb{Z}}$ and $\{Y_i\}_{i\in \mathbb{Z}}$ be $\mathbb{P}$ $\text{\normalfont NED}$ on $\{\eta_{i}\}_{i\in \mathbb{Z}}$ and suppose (T1) and (T2) hold. Let $N_{\mathcal{HS}}$ denote a centred Gaussian random variable taking values in $\mathcal{HS}$ with covariance operator $\Gamma$ verifying
	$$\langle s, \Gamma (s')\rangle_{\mathcal{HS}} = \sum\limits_{i\in \mathbb{Z}} \mathbb{E}\!\left[\langle S_{0} , s \rangle_{\mathcal{HS}} \langle S_{i}, s' \rangle_{\mathcal{HS}}  \right],$$
	for all $s,s' \in \mathcal{HS}$ and $S_i =  \Phi_{\mathcal{X}c}(X_i)\otimes\Phi_{\mathcal{Y}c}(Y_i) - \mathbb{E}[\Phi_{\mathcal{X}c}(X)\otimes\Phi_{\mathcal{Y}c}(Y)]$ for all $i\in \mathbb{Z}$. Then, if $\text{\normalfont HSIC}(X,Y) = 0$, we have
	$$ n\text{\normalfont HSIC}_{n}(X,Y) \xrightarrow{\mathcal{D}} \left|\left| N_{\mathcal{HS}}\right|\right|^2_{\mathcal{HS}},$$
	where $\xrightarrow{\mathcal{D}}$ denotes convergence in distribution. If $\text{\normalfont HSIC}(X,Y) > 0$, it holds
		$$ \sqrt{n}(\text{\normalfont HSIC}_{n}(X,Y) - \text{\normalfont HSIC}(X,Y)) \xrightarrow{\mathcal{D}} N(0,4\sigma^2_{\mathcal{HS}}), $$
	where $\sigma^2_{\mathcal{HS}} =  \langle\mathbb{E}[\Phi_{\mathcal{X}c}(X)\otimes\Phi_{\mathcal{Y}c}(Y)],\Gamma(\mathbb{E}[\Phi_{\mathcal{X}c}(X)\otimes\Phi_{\mathcal{Y}c}(Y)])\rangle_{\mathcal{HS}}$.
\end{theorem}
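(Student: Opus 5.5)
The plan is to reduce everything to a central limit theorem in the separable Hilbert space $\mathcal{HS}$ for the sequence $\{S_i\}_{i\in\mathbb{Z}}$, then transfer it to $\text{\normalfont HSIC}_n$ by the continuous mapping theorem or the delta method.

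\textbf{Algebraic decomposition.} Write $C=\mathbb{E}[\Phi_{\mathcal{X}c}(X)\otimes\Phi_{\mathcal{Y}c}(Y)]$. Expanding the centring in (\ref{4}) with the population means gives
$$\frac{1}{n}\sum_{i=1}^n (\Phi_{\mathcal{X}}(X_i)-\bar\Phi_{\mathcal{X}})\otimes(\Phi_{\mathcal{Y}}(Y_i)-\bar\Phi_{\mathcal{Y}}) = C+\frac1n\sum_{i=1}^n S_i - \Big(\frac1n\sum_{i=1}^n\Phi_{\mathcal{X}c}(X_i)\Big)\otimes\Big(\frac1n\sum_{i=1}^n\Phi_{\mathcal{Y}c}(Y_i)\Big).$$
Denote the left-hand side by $\hat C_n$. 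The last term has norm at most $\|\bar\Phi_{\mathcal{X}c}\|\,\|\bar\Phi_{\mathcal{Y}c}\|$. If each of the two centred sample means is $O_P(n^{-1/2})$, this product is $O_P(n^{-1})=o_P(n^{-1/2})$. Hence $\sqrt n(\hat C_n - C)=n^{-1/2}\sum_{i=1}^n S_i+o_P(1)$.

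\textbf{The Hilbert space CLT.} The key step is to show that $n^{-1/2}\sum S_i\xrightarrow{\mathcal D}N_{\mathcal{HS}}$, and, by the same argument, that $n^{-1/2}\sum\Phi_{\mathcal{X}c}(X_i)$ and $n^{-1/2}\sum\Phi_{\mathcal{Y}c}(Y_i)$ converge, hence are tight. I would invoke Theorem 1.1 of \cite{Dehling2015}, which requires three things:
\begin{enumerate}
\item $(2+\delta)$-moments. These follow from (T2), since $\|\Phi_{\mathcal{X}}(X)\otimes\Phi_{\mathcal{Y}}(Y)\|_{\mathcal{HS}}=c_{\mathcal{X}}^{1/2}(X,X)c_{\mathcal{Y}}^{1/2}(Y,Y)$.
\item $\beta$-mixing of $\{\eta_i\}$ with summable $\beta_j^{\delta/(2+\delta)}$, which is assumed in (T2).
\item $L^1$ NED of $\{S_i\}$ with summable coefficients of the appropriate power.
\end{enumerate}

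\textbf{Main obstacle: the NED coefficients.} The hardest step is bounding the $L^1$ NED coefficients of $\{S_i\}$. I would approximate $S_0$ by
$$\tilde S_0=\Phi_{\mathcal{X}c}(\Pi_{\mathcal{X}j})\otimes\Phi_{\mathcal{Y}c}(\Pi_{\mathcal{Y}j})-C,$$
possibly after a truncation. Then split the expectation of $\|S_0-\tilde S_0\|$ on the events $\{d_{\mathcal{X}}\le\epsilon_j,\ d_{\mathcal{Y}}\le\epsilon_j\}$ and their complement.
\begin{itemize}
\item On the good event, the bilinear identity $a\otimes b-a'\otimes b'=(a-a')\otimes b+a'\otimes(b-b')$ and Cauchy--Schwarz give a bound by $\varrho(\epsilon_j)$ times moment factors. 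The exponent $\delta/(1+\delta)$ in (T2) should come out of this Hölder step.
\item On the bad event, of probability at most $2P_j$, Hölder's inequality with the $(2+\delta)$-moments gives a bound of order $P_j^{(1+\delta)/(2+\delta)}$, or a comparable power.
\end{itemize}
The summability conditions in (T2) are then exactly what the NED-coefficient requirement of \cite{Dehling2015} needs. Checking that the exponents match is the delicate part; if they do not, I would use the $\mathbb P$ NED to $L^p$ NED transfer of \cite{Dehling2016}, as mentioned in Remark \ref{remark3}. The same argument applies to the single feature sequences.

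\textbf{Conclusion under each hypothesis.} Note that $\text{\normalfont HSIC}_n=\|\hat C_n\|^2$.
\begin{itemize}
\item Under the null, $C=0$, so $\sqrt n\hat C_n\xrightarrow{\mathcal D}N_{\mathcal{HS}}$. The continuous map $s\mapsto\|s\|^2$ then gives $n\text{\normalfont HSIC}_n\xrightarrow{\mathcal D}\|N_{\mathcal{HS}}\|^2$.
\item Under the alternative, write
$$\sqrt n\big(\|\hat C_n\|^2-\|C\|^2\big)=2\langle C,\sqrt n(\hat C_n-C)\rangle+\sqrt n\,\|\hat C_n-C\|^2.$$
The second term is $O_P(n^{-1/2})$. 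The first converges to $2\langle C,N_{\mathcal{HS}}\rangle$, which is distributed as $N(0,4\langle C,\Gamma C\rangle)$ and has exactly the variance $4\sigma^2_{\mathcal{HS}}$ in the statement.
\end{itemize}
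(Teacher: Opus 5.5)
Your architecture is the paper's. You use the same decomposition (the paper's (\ref{Th1.1})), Theorem 1.1 of \cite{Dehling2015} applied to $\{S_i\}$ and to the two single-feature sequences, $L^1$ NED coefficients of order $P_j^{(1+\delta)/(2+\delta)}+\varrho(\epsilon_j)$ (the paper's Proposition \ref{pr4}), then the continuous mapping theorem under the null and the expansion of the square under the alternative. The exponents already match, so the fallback to \cite{Dehling2016} is unnecessary. The NED coefficient is linear in $\varrho(\epsilon_j)$ (from Cauchy--Schwarz on the good event). The power $\delta/(1+\delta)$ is the one in the NED summability condition of the CLT, and $\tilde a_j^{\delta/(1+\delta)}\lesssim P_j^{\delta/(2+\delta)}+\varrho^{\delta/(1+\delta)}(\epsilon_j)$ is summable exactly under (T2).

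The step that fails as written is the bad-event bound. Let $G=\{d_{\mathcal{X}}(X_0,\Pi_{\mathcal{X}j})\le\epsilon_j,\ d_{\mathcal{Y}}(Y_0,\Pi_{\mathcal{Y}j})\le\epsilon_j\}$. Your approximant $\tilde S_0$ is built from $\Pi_{\mathcal{X}j}(\eta_{-j},\ldots,\eta_j)$ and $\Pi_{\mathcal{Y}j}(\eta_{-j},\ldots,\eta_j)$. These are arbitrary measurable functions, and $\mathbb{P}$ NED says only that they are close to $X_0,Y_0$ in probability. So $\Phi_{\mathcal{X}}(\Pi_{\mathcal{X}j})$ need not have any finite moment. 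H\"older on $G^c$ then controls $\mathbb{E}[\|S_0\|_{\mathcal{HS}}\mathbbm{1}_{G^c}]$ but not $\mathbb{E}[\|\tilde S_0\|_{\mathcal{HS}}\mathbbm{1}_{G^c}]$, which may be infinite.

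Your ``possibly after a truncation'' is the right repair, but it has to be carried out. Replace $\tilde S_0$ by its metric projection $\hat S_0$ onto the closed ball of radius $K_j=P_j^{-1/(2+\delta)}$ in $\mathcal{HS}$; this map is $1$-Lipschitz and still a function of the window. Then $\|S_0-\hat S_0\|\le(\|S_0\|-K_j)_+ +\min\{\|S_0-\tilde S_0\|,2K_j\}$, hence
\[
\mathbb{E}\|S_0-\hat S_0\|_{\mathcal{HS}}\le \frac{\mathbb{E}\|S_0\|^{2+\delta}_{\mathcal{HS}}}{K_j^{1+\delta}}+\mathbb{E}\big[\|S_0-\tilde S_0\|_{\mathcal{HS}}\mathbbm{1}_{G}\big]+2K_j\mathbb{P}(G^c)\lesssim P_j^{\frac{1+\delta}{2+\delta}}+\varrho(\epsilon_j).
\]
These are the paper's coefficients. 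On $G$, your two-term identity also leaves a factor $\|\Phi_{\mathcal{X}c}(\Pi_{\mathcal{X}j})\|$ that needs a moment bound. Use $\|\Phi_{\mathcal{X}c}(\Pi_{\mathcal{X}j})\|\le\|\Phi_{\mathcal{X}c}(X_0)\|+d^{1/2}_{\Phi_{\mathcal{X}}}(X_0,\Pi_{\mathcal{X}j})$ together with (T1).

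The paper avoids the moment issue by coupling instead of truncating. Via Lemma \ref{Lm2} it builds $\{\eta'_i\}$ equal to $\{\eta_i\}$ on $\{-j,\ldots,j\}$ and independently resampled outside. Then $(X'_0,Y'_0)$ has the law of $(X_0,Y_0)$ and inherits the moments in (T2). It takes the conditional expectation given $(\eta_{-j},\ldots,\eta_j)$ as the approximant and bounds its $L^1$ error by $2\mathbb{E}\|S_0-S'_0\|_{\mathcal{HS}}$. In that argument $\Pi_{\mathcal{X}j}$ appears only inside distances to $X_0$ or $X'_0$, never through the norm of its feature. Your truncation route is more elementary, needing no Skorokhod-type representation, and gives the same rates.
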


Let $\{\lambda_j\}_{j=1}^\infty$ denote the eigenvalues of the covariance operator of the random variable $N_{\mathcal{HS}}$ defined in Theorem \ref{Th2}. We have the following equivalent representation of the limit distribution when $\text{\normalfont HSIC}(X,Y) = 0$:
$$\left|\left| N_{\mathcal{HS}}\right|\right|^2_{\mathcal{HS}} = \sum\limits_{j=1}^\infty \lambda_j |N_j|^2,$$
where $\{N_j\}_{j\in \mathbb{Z}}$ is a collection of i.i.d. real valued standard Gaussian random variables. This limit distribution trivially coincides with the one obtained for i.i.d. data if we further assume that, for all $i>0$, $\mathbb{E}[\Phi_{\mathcal{Y}}(Y_i)| X_i, X_0, Y_0] = \mathbb{E}[\Phi_{\mathcal{Y}}(Y)]$ a.s., which holds if the data is serially independent and $\text{\normalfont HSIC}(X,Y) = 0$. 

\begin{remark}
	The proof of Theorem \ref{Th2} can be easily extended to any type of dependence structure provided there is an associated central limit theorem for Hilbert space valued random variables. See, e.g., \cite{Chen1998} for the case of $L^2$ $\text{\normalfont NED}$ assumptions on strong mixing underlying processes.
\end{remark}

\subsection{Local properties}
\label{sec3.2}

Let us study the asymptotic effect on $\text{\normalfont HSIC}_n(X,Y)$ caused by small perturbations on the process $\{(X_i,Y_i)\}_{i\in \mathbb{Z}}$. To do so, we assume:
\begin{enumerate}
	\renewcommand{\labelenumi}{(T\arabic{enumi})}
	\setcounter{enumi}{2}
	\item We have a triangular array $\{(X_{in},Y_{in})\}_{1\le i \le n}^{n\in \mathbb{N}^+}$ such that, for all $n\in  \mathbb{N}^+$, $(X_{in},Y_{in})$ is identically distributed to some pair $(X_{\cdot n},Y_{\cdot n})\in \mathcal{X}\times \mathcal{Y}$ for all $1\le i \le n$ and $(X_{in},Y_{in})$ is a measurable function of $\{\eta_{i+j}\}_{j\in \mathbb{Z}}$ for all $i,n$. Furthermore, it holds that
	$$
	\begin{cases}
		\Phi_{\mathcal{X}}(X_{in}) = 	\Phi_{\mathcal{X}}(X_{i}) + Z_{\mathcal{X}i}/k_n+ R_{\mathcal{X}in}\ \ \text{a.s},& \\
		\Phi_{\mathcal{Y}}(Y_{in}) = \Phi_{\mathcal{Y}}(Y_i) + Z_{\mathcal{Y}i}/k_n+ R_{\mathcal{Y}in}\ \ \text{a.s},&
	\end{cases}$$
	where $k_n\rightarrow\infty$ as $n\rightarrow\infty$ and $Z_{\mathcal{Y}i},Z_{\mathcal{X}i}$ (which do not depend on $n$) and $R_{\mathcal{Y}in},R_{\mathcal{X}in}$ are all measurable functions of $\{\eta_{i+j}\}_{j\in \mathbb{Z}}$ for all $i,n$.
	\item The expectations $\mathbb{E}[||Z_{\mathcal{X}i}||_{\mathcal{C}_\mathcal{X}}]$,  $\mathbb{E}[||Z_{\mathcal{Y}i}||_{\mathcal{C}_\mathcal{Y}}]$, $\mathbb{E}[||Z_{\mathcal{X}i}||_{\mathcal{C}_\mathcal{X}}||Z_{\mathcal{Y}i}||_{\mathcal{C}_\mathcal{Y}}]$, $\mathbb{E}[c^{1/2}_{\mathcal{X}}(X_{i},X_{i})||Z_{\mathcal{Y}i}||_{\mathcal{C}_\mathcal{Y}}]$ and $\mathbb{E}[c^{1/2}_{\mathcal{Y}}(Y_{i},Y_{i})||Z_{\mathcal{X}i}||_{\mathcal{C}_\mathcal{X}}]$ are finite. Moreover, $\mathbb{E}[||R_{\mathcal{X}in}||_{\mathcal{C}_\mathcal{X}}]$, $\mathbb{E}[||R_{\mathcal{Y}in}||_{\mathcal{C}_\mathcal{Y}}]$, $\mathbb{E}[||R_{\mathcal{X}in}||_{\mathcal{C}_\mathcal{X}}||R_{\mathcal{Y}in}||_{\mathcal{C}_\mathcal{Y}}]$, $\mathbb{E}[||Z_{\mathcal{X}i}||_{\mathcal{C}_\mathcal{X}}||R_{\mathcal{Y}in}||_{\mathcal{C}_\mathcal{Y}}]$, $\mathbb{E}[||R_{\mathcal{X}in}||_{\mathcal{C}_\mathcal{X}}||Z_{\mathcal{Y}i}||_{\mathcal{C}_\mathcal{Y}}]$, $\mathbb{E}[c^{1/2}_{\mathcal{X}}(X_{i},X_{i})||R_{\mathcal{Y}in}||_{\mathcal{C}_\mathcal{Y}}]$ and $\mathbb{E}[c^{1/2}_{\mathcal{Y}}(Y_{i},Y_{i})||R_{\mathcal{X}in}||_{\mathcal{C}_\mathcal{X}}]$ are all $o(1/k_n)$ as $n\rightarrow\infty$.
\end{enumerate}

Condition (T3) generalises the usual local alternative hypothesis in Pitman's sense studied in \cite{Lee2020} and \cite{Lai2021} in the context of mean independence tests. Condition (T4) is required to apply ergodic theorems and bound some asymptotically neglibible terms in $\text{\normalfont HSIC}_{n}(X_{\cdot n},Y_{\cdot n})$. Under these conditions, $Z_{\mathcal{X}i}$ and $Z_{\mathcal{Y}i}$ are the limits in $L^1$ sense of $\{k_n(	\Phi_{\mathcal{Y}}(Y_{in}) - \Phi_{\mathcal{Y}}(Y_i))\}_{n\in \mathbb{N}^+}$ and $\{k_n(	\Phi_{\mathcal{X}}(X_{in}) - \Phi_{\mathcal{X}}(X_i))\}_{n\in \mathbb{N}^+}$, respectively, which further implies that $\{\Phi_{\mathcal{Y}}(Y_{in})\}_{n\in \mathbb{N}^+}$ and $\{\Phi_{\mathcal{X}}(X_{in})\}_{n\in \mathbb{N}^+}$ converge in $L^1$ norm to $\Phi_{\mathcal{Y}}(Y_i)$ and $\Phi_{\mathcal{X}}(X_i)$ at rate $k_n$. The next theorem describes the limit distribution of $\text{\normalfont HSIC}_{n}(X_{\cdot n},Y_{\cdot n})$ in terms of the convergence rate $k_n$.

\begin{theorem}\label{Th3}
	Let $\{X_i\}_{i\in \mathbb{Z}}$ and $\{Y_i\}_{i\in \mathbb{Z}}$ be $\mathbb{P}$ $\text{\normalfont NED}$ on $\{\eta_{i}\}_{i\in \mathbb{Z}}$ and suppose (T1)-(T4) hold. In addition, let $N_{\mathcal{HS}}$ and $\sigma^2_{\mathcal{HS}}$ be defined as in Theorem \ref{Th2}, let $0\le M<\infty$ and let  $\mu = \mathbb{E}[\Phi_{\mathcal{X}c}(X_i)\otimes Z_{\mathcal{Y}i}] + \mathbb{E}[Z_{\mathcal{X}i}\otimes \Phi_{\mathcal{Y}c}(Y_i)]$.
	Then, assuming $\text{\normalfont HSIC}(X,Y) = 0$, we have:
	\begin{enumerate}
		\item If $\sqrt{n}/k_n \rightarrow M$, $ n\text{\normalfont HSIC}_{n}(X_{\cdot n},Y_{\cdot n}) \xrightarrow{\mathcal{D}} || M\mu+N_{\mathcal{HS}}||^2_{\mathcal{HS}}$,\vspace{1mm}
		\item If $\sqrt{n}/k_n\rightarrow \infty$, $ n\text{\normalfont HSIC}_{n}(X_{\cdot n},Y_{\cdot n}) = n||\mu ||^2_{\mathcal{HS}}/k_n^2  + o_{\mathbb{P}}(n/k_n^2)$.
	\end{enumerate}
	Furthermore, denoting by $\theta = \langle \mu, \mathbb{E}[\Phi_{\mathcal{X}c}(X)\otimes \Phi_{\mathcal{Y}c}(Y)]\rangle_{\mathcal{HS}}$ and assuming $\text{\normalfont HSIC}(X,Y)>0$, it holds: 
	\begin{enumerate}
		\setcounter{enumi}{3}
		\item If $\sqrt{n}/k_n \rightarrow M$, $ \sqrt{n}(\text{\normalfont HSIC}_{n}(X_{\cdot n},Y_{\cdot n}) - \text{\normalfont HSIC}(X,Y)) \xrightarrow{\mathcal{D}} N(2M\theta,4\sigma^2_{\mathcal{HS}})$,\vspace{1mm}
		\item If $\sqrt{n}/k_n\rightarrow \infty$, $ \sqrt{n}(\text{\normalfont HSIC}_{n}(X_{\cdot n},Y_{\cdot n}) - \text{\normalfont HSIC}(X,Y)) = \sqrt{n}2\theta/k_n  + o_{\mathbb{P}}(\sqrt{n}/k_n)$.
	\end{enumerate}
\end{theorem}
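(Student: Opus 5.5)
We write $\text{\normalfont HSIC}_{n}(X_{\cdot n},Y_{\cdot n}) = \|\hat{C}_n\|^2_{\mathcal{HS}}$, where $\hat{C}_n$ is the empirical cross-covariance operator built from the perturbed sample as in (\ref{4}). We then decompose $\hat{C}_n$ by substituting the expansions in (T3). Write $\bar{Z}_{\mathcal{X}}, \bar{Z}_{\mathcal{Y}}, \bar{R}_{\mathcal{X}n}, \bar{R}_{\mathcal{Y}n}$ for sample averages. The centred perturbed features are
$$\Phi_{\mathcal{X}}(X_{in})-\bar{\Phi}_{\mathcal{X}}(X_{\cdot n}) = (\Phi_{\mathcal{X}}(X_i)-\bar{\Phi}_{\mathcal{X}}(X)) + (Z_{\mathcal{X}i}-\bar{Z}_{\mathcal{X}})/k_n + (R_{\mathcal{X}in}-\bar{R}_{\mathcal{X}n}),$$
and similarly for $\mathcal{Y}$. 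Expanding the tensor product bilinearly gives
$$\hat{C}_n = \hat{C}^{(0)}_n + k_n^{-1}\hat{D}_n + \hat{E}_n,$$
with the following pieces.
\begin{itemize}
\item $\hat{C}^{(0)}_n$ is the unperturbed empirical cross-covariance operator.
\item $\hat{D}_n = n^{-1}\sum_i\big[(\Phi_{\mathcal{X}}(X_i)-\bar{\Phi}_{\mathcal{X}}(X))\otimes(Z_{\mathcal{Y}i}-\bar{Z}_{\mathcal{Y}}) + (Z_{\mathcal{X}i}-\bar{Z}_{\mathcal{X}})\otimes(\Phi_{\mathcal{Y}}(Y_i)-\bar{\Phi}_{\mathcal{Y}}(Y))\big]$.
\item $\hat{E}_n$ collects every term containing a factor $Z/k_n\otimes Z/k_n$ or some $R$.
\end{itemize}

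\textbf{Step 1: the first-order term.} By the ergodic theorem in the separable Hilbert space $\mathcal{HS}$ (as in the proof of Theorem \ref{Th1}), $\hat{D}_n \to \mu$ a.s. This uses the integrability in (T4) of $\|Z_{\mathcal{X}i}\|$, $\|Z_{\mathcal{Y}i}\|$, $c^{1/2}_{\mathcal{X}}(X_i,X_i)\|Z_{\mathcal{Y}i}\|$ and $c^{1/2}_{\mathcal{Y}}(Y_i,Y_i)\|Z_{\mathcal{X}i}\|$, together with stationarity and ergodicity of $(X_i,Y_i,Z_{\mathcal{X}i},Z_{\mathcal{Y}i})$. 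These are fixed measurable functions of the shifted $\eta$ sequence, which we take to be ergodic. The centring products such as $\bar{\Phi}_{\mathcal{X}}\otimes\bar{Z}_{\mathcal{Y}}$ converge to products of means, which yields exactly $\mu$ with centred features.

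\textbf{Step 2: the remainder.} We show $\hat{E}_n = o_{\mathbb{P}}(1/k_n)$. The $Z\otimes Z$ terms are $k_n^{-2}O(1)$ a.s. by the ergodic theorem, using $\mathbb{E}[\|Z_{\mathcal{X}}\|\|Z_{\mathcal{Y}}\|]<\infty$. Every $R$-containing term is bounded in $\mathcal{HS}$ norm by $n^{-1}\sum_i$ of products such as $c^{1/2}_{\mathcal{X}}(X_i,X_i)\|R_{\mathcal{Y}in}\|$ or $\|Z_{\mathcal{X}i}\|\|R_{\mathcal{Y}in}\|$, plus products of averages. Stationarity in $i$ gives these averages expectation $o(1/k_n)$ by (T4), so Markov's inequality gives $o_{\mathbb{P}}(1/k_n)$. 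The products of averages are handled the same way, since each factor is $O_{\mathbb{P}}(1)$ or $o_{\mathbb{P}}(1/k_n)$ by Markov.

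\textbf{Step 3: the null case, parts 1 and 2.} From the proof of Theorem \ref{Th2}, under $\text{\normalfont HSIC}(X,Y)=0$ we have $\sqrt{n}\hat{C}^{(0)}_n \xrightarrow{\mathcal{D}} N_{\mathcal{HS}}$ (the centring by sample means being asymptotically negligible). Combining this with Steps 1 and 2 gives
$$\sqrt{n}\hat{C}_n = \sqrt{n}\hat{C}^{(0)}_n + (\sqrt{n}/k_n)(\mu+o_{\mathbb{P}}(1)).$$
\begin{itemize}
\item If $\sqrt{n}/k_n\to M$, Slutsky's lemma and the continuous mapping theorem give part 1.
\item If $\sqrt{n}/k_n\to\infty$, multiply by $k_n/\sqrt{n}$: then $(k_n/\sqrt{n})\sqrt{n}\hat{C}_n = \mu + o_{\mathbb{P}}(1)$, so $\|\hat{C}_n\|^2 = k_n^{-2}\|\mu\|^2 + o_{\mathbb{P}}(k_n^{-2})$, which is part 2 after multiplying by $n$.
\end{itemize}

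\textbf{Step 4: the alternative, parts 4 and 5.} Write $C=\mathbb{E}[\Phi_{\mathcal{X}c}(X)\otimes\Phi_{\mathcal{Y}c}(Y)]$ and expand
$$\|\hat{C}_n\|^2-\|C\|^2 = 2\langle \hat{C}_n - C, C\rangle + \|\hat{C}_n-C\|^2.$$
By Theorem \ref{Th2}, $\sqrt{n}(\hat{C}^{(0)}_n - C)$ is tight and converges to $N_{\mathcal{HS}}$. Therefore $\hat{C}_n - C = O_{\mathbb{P}}(n^{-1/2}) + O_{\mathbb{P}}(k_n^{-1})$, and the quadratic term multiplied by $\sqrt{n}$ is $O_{\mathbb{P}}(n^{-1/2} + \sqrt{n}k_n^{-2})$. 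This is $o_{\mathbb{P}}(1)$ when $\sqrt{n}/k_n\to M$, and $o_{\mathbb{P}}(\sqrt{n}/k_n)$ when $\sqrt{n}/k_n\to\infty$. The linear term equals
$$2\langle\sqrt{n}(\hat{C}^{(0)}_n-C),C\rangle + 2(\sqrt{n}/k_n)\big(\theta+o_{\mathbb{P}}(1)\big).$$
The first summand converges to $N(0,4\sigma^2_{\mathcal{HS}})$, which gives part 4. Dividing by $\sqrt{n}/k_n\to\infty$ kills the first summand and gives part 5.

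\textbf{Main obstacle.} The main difficulty is Step 2, which has two parts.
\begin{itemize}
\item \emph{Cross terms between the unperturbed centring and the remainders.} For example, $\bar{\Phi}_{\mathcal{X}}\otimes\bar{R}_{\mathcal{Y}n}$ must be bounded using only the first-moment conditions in (T4).
\item \emph{Joint ergodicity.} Step 1 needs the joint process $(X_i,Y_i,Z_{\mathcal{X}i},Z_{\mathcal{Y}i})$ to be ergodic. I would obtain this from its representation as a shift-equivariant function of $\{\eta_i\}$, which is $\beta$-mixing and hence ergodic.
\end{itemize}
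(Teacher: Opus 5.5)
Your proposal is correct and follows essentially the paper's route. You substitute the (T3) expansions, obtain the first-order term $\mu$ from the ergodic theorem, show the remainders are $o_{\mathbb{P}}(1/k_n)$ via (T4) and Markov's inequality, and conclude with the Theorem~\ref{Th2} CLT, Slutsky and the continuous mapping theorem. The differences are cosmetic: the paper centres by population means ($I_{1n},\dots,I_{4n}$) and handles the product of mean deviations separately via (\ref{Th1.1}), whereas you centre by sample means throughout, and your Step 4 writes out the square-norm expansion that the paper leaves implicit.
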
	

From Theorem \ref{Th3} we can derive two fundamental conclusions. On the one hand, $\text{\normalfont HSIC}_{n}(X_{\cdot n},Y_{\cdot n})$ is still consistent, in probability, as an estimator of $\text{\normalfont HSIC}(X,Y)$. If we further assume that $\{k_n(||R_{\mathcal{X}\cdot n}||+||R_{\mathcal{Y}\cdot n}||)\}_{n\in \mathbb{N^+}}$ is majorized by some integrable random variable independent of $n$, consistency can be strengthened to hold a.s. On the other hand, the noncentral chi squared distribution is stochastically increasing in the noncentrality parameter. Therefore, if we compare the limit distributions when $\text{\normalfont HSIC}(X,Y)=0$ in Theorems \ref{Th2} and \ref{Th3}, it is clear that a test based on $\text{\normalfont HSIC}_{n}(X_{\cdot n},Y_{\cdot n})$ will asymptotically reject the null hypothesis of $\text{\normalfont HSIC}(X,Y)=0$, with non-trivial power, provided that $\mu \neq 0$ and $k_n = O(\sqrt{n})$.

\subsection{Bootstrap calibration}
\label{sec3.3}

The distribution of $||N_{\mathcal{HS}}||^2_{\mathcal{HS}}$, defined as in Theorem \ref{Th2}, is non-pivotal. To calibrate a test of independence or mean independence based on the empirical estimator of the HSIC, in the i.i.d. sample case, there are many proposals such as gamma distribution-based approximations \citep{Zhang2024}, estimating the eigenvalues of the covariance operator of $N_{\mathcal{HS}}$ and approximating the quantiles of sampling distributions by simulation \citep{Sejdinovic2013}, resampling schemes \citep{Sejdinovic2013} and permutation approaches \citep{Gretton2007}. However, the gamma-based approximation is not consistent for all significance levels and the resampling schemes usually fail for mean independence tests.

For dependent data, the proposal in \cite{Chwialkowski2014}, which is only valid to calibrate independence tests, is based on shifting the indexes of $\{Y_{i}\}_{i=1}^n$ while keeping those of $\{X_{i}\}_{i=1}^n$ fixed, whereas \cite{Chwialkowski2014a} proved the consistency of wild bootstrap schemes to calibrate kernel-based tests for Euclidean data under $\tau$-dependence. For the equivalent distance covariance-based tests, authors also rely on bootstrap schemes and blockwise permutations \citep{Fokianos2018,Meintanis2022}. In our $\mathbb{P}$ $\text{\normalfont NED}$ sample setting, we suggest to consider the following wild bootstrap approach:

\begin{enumerate}
	\item Draw $n$ real valued Gaussian random variables $\{r_{in}\}_{i=1}^{n}$, independent of the data (and also independent of $\{\eta_{i}\}_{i\in \mathbb{Z}}$), such that $\mathbb{E}[r_{in}] = 0$, $\text{Cov}(r_{in},r_{jn}) = \rho(|i-j|/l_n)$, where $l_n \rightarrow \infty$ as $n\rightarrow \infty$, $l_n =o(n)$ and $\rho:[0,\infty) \rightarrow [0,\infty)$ verifies $\rho(0) = 1$, $\rho(\epsilon)\rightarrow 1$ as $\epsilon \rightarrow0$ and $\sum_{k=1}^n \rho(k/l_n) = O(l_n)$.
	\item Compute the statistic $\text{\normalfont HSIC}^*_{n}(X,Y)$ defined as
	$$\text{\normalfont HSIC}^*_{n}(X,Y) = \frac{1}{n^2}\sum_{i,j=1}^{n}(r_{in}-\bar{r}_{\cdot n}) (r_{jn}-\bar{r}_{\cdot n}) \hat{c}_{\mathcal{X}c}(X_i,X_j)\hat{c}_{\mathcal{Y}c}(Y_i,Y_j),$$
	where $\bar{r}_{\cdot n}$ denotes the sample average of $\{r_{in}\}_{i=1}^{n}$. 
	\item Repeat steps 2 and 3 $n_B$ times to obtain $n_B$ conditionally independent copies of $\text{\normalfont HSIC}^*_{n}(X,Y)$. For any $\alpha \in (0,1)$, compute the $(1-\alpha)$th sample quantile of the conditional distribution of  $n\text{\normalfont HSIC}^*_{n}(X,Y)$: $Q_{n,n_B}^*(1-\alpha)$.
	\item If $n\text{\normalfont HSIC}_{n}(X,Y)\ge Q_{n,n_B}^*(1-\alpha)$, reject the null hypothesis of $\text{\normalfont HSIC}(X,Y) = 0$ at significance level $\alpha$.
\end{enumerate}

 An example of valid data generating process for the bootstrap multipliers $\{r_{in}\}_{i=1}^{n}$, which is our suggestion to implement the test in practice, is a scaled moving average process given by
 $$ r_{in} = \sum_{k=1}^{l_n}\frac{\Delta_{kl_n}}{\sqrt{\sum_{k=1}^{l_n}\Delta^2_{kl_n}}}\varepsilon_{i-k+1}  ,$$
 where $\{\varepsilon_i\}_{i\in \mathbb{Z}}$ is a collection of i.i.d. standard Gaussian random variables and $\Delta_{kl_n} = 0.5-|(k-0.5)/l_n-0.5|$  (see the comparison of several data generating processes for the multipliers in \citeauthor{Doukhan2015}, \citeyear{Doukhan2015}). Before stating the consistency results and the required technical conditions, let us introduce the notion of weak convergence in probability. 
 
 \begin{definition}\label{Def5}
 	Let $\{S_n\}_{n\in \mathbb{N}^+}$ be a sequence of measurable functions of $(\{r_{in}\}_{i=1}^{n},\{\eta_{i}\}_{i\in \mathbb{Z}})$ taking values in some nonempty Polish space $\mathcal{S}$. We say that $S_n$ converges weakly to $S\in \mathcal{S}$ in probability $\mathbb{P}$, and we denote it by
 	$$S_n \xrightarrow{\mathcal{D}^*} S \ \text{in }\mathbb{P},$$
 	if the conditional probability law of $S_n$ given $\{\eta_{i}\}_{i\in \mathbb{Z}}$ (which can be defined by means of regular conditional probabilities since $\{\eta_{i}\}_{i\in \mathbb{Z}}$ takes values in a Polish space) converges to the law of $S$ in probability. Equivalently, for any subsequence of the natural numbers $\{n_k\}_{k\in \mathbb{N^+}}$ there is a further subsequence $\{n_{k_l}\}_{l\in \mathbb{N^+}}$ such that the conditional probability law of $S_{n_{k_l}}$ given $\{\eta_{i}\}_{i\in \mathbb{Z}}$ converges (as $l\rightarrow \infty$) to the law of $S$, for almost all realizations of $\{\eta_{i}\}_{i\in \mathbb{Z}}$.
 \end{definition}
 
 According to the following results, which required the derivation of a new bootstrap consistency theorem for the sample mean of Hilbert space valued random variables under $L^1$ $\text{\normalfont NED}$ assumptions (see Theorem \ref{Th11} in \ref{app2}), this bootstrap scheme is asymptotically valid, in the sense implied by Definition \ref{Def5}, for any test based on the HSIC under some additional, mild technical conditions:
 
 \begin{enumerate}
 	\renewcommand{\labelenumi}{(T\arabic{enumi})}
 	\setcounter{enumi}{4}
 	\item $\sum_{j=0}^\infty jP^{(1+\delta)/(2+\delta)}_{j} < \infty$, $\sum_{j=0}^\infty j\varrho_{}(\epsilon_{j}) < \infty$ and $\sum_{j=0}^\infty j\beta_j< \infty$.
	\item The expectations $\mathbb{E}[||Z_{\mathcal{X}i}||^2_{\mathcal{C}_\mathcal{X}}]$,  $\mathbb{E}[||Z_{\mathcal{Y}i}||^2_{\mathcal{C}_\mathcal{Y}}]$, $\mathbb{E}[||Z_{\mathcal{X}i}||^2_{\mathcal{C}_\mathcal{X}}||Z_{\mathcal{Y}i}||^2_{\mathcal{C}_\mathcal{Y}}]$, $\mathbb{E}[c_{\mathcal{X}}(X_{i},X_{i})||Z_{\mathcal{Y}i}||^2_{\mathcal{C}_\mathcal{Y}}]$ and $\mathbb{E}[c_{\mathcal{Y}}(Y_{i},Y_{i})||Z_{\mathcal{X}i}||^2_{\mathcal{C}_\mathcal{X}}]$ are finite. Moreover, $\mathbb{E}[||R_{\mathcal{X}in}||^2_{\mathcal{C}_\mathcal{X}}]$, $\mathbb{E}[||R_{\mathcal{Y}in}||^2_{\mathcal{C}_\mathcal{Y}}]$, $\mathbb{E}[||R_{\mathcal{X}in}||^2_{\mathcal{C}_\mathcal{X}}||R_{\mathcal{Y}in}||^2_{\mathcal{C}_\mathcal{Y}}]$, $\mathbb{E}[||Z_{\mathcal{X}i}||^2_{\mathcal{C}_\mathcal{X}}||R_{\mathcal{Y}in}||^2_{\mathcal{C}_\mathcal{Y}}]$, $\mathbb{E}[||R_{\mathcal{X}in}||^2_{\mathcal{C}_\mathcal{X}}||Z_{\mathcal{Y}i}||^2_{\mathcal{C}_\mathcal{Y}}]$, $\mathbb{E}[c_{\mathcal{X}}(X_{i},X_{i})||R_{\mathcal{Y}in}||^2_{\mathcal{C}_\mathcal{Y}}]$ and $\mathbb{E}[c_{\mathcal{Y}}(Y_{i},Y_{i})||R_{\mathcal{X}in}||^2_{\mathcal{C}_\mathcal{X}}]$ are all $o(1/k^2_n)$ as $n\rightarrow\infty$.
 \end{enumerate}
 
 Condition (T5) establishes additional restrictions on the decay rates of $\{\epsilon_{j}\}_{j=0}^\infty$, $\{P_{j}\}_{j=0}^\infty$ and $\{\beta_{j}\}_{j=0}^\infty$.
  Moreover, condition (T6) imposes bounds on second order moments of $\{\Phi_{\mathcal{X}}(X_{\cdot n}) - \Phi_{\mathcal{X}}(X)\}_{n\in \mathbb{N}^+}$ and $\{\Phi_{\mathcal{Y}}(Y_{\cdot n}) - \Phi_{\mathcal{Y}}(Y)\}_{n\in \mathbb{N}^+}$. 

\begin{theorem}\label{Th4}
		Let $\{X_i\}_{i\in \mathbb{Z}}$ and $\{Y_i\}_{i\in \mathbb{Z}}$ be $\mathbb{P}$ $\text{\normalfont NED}$ on $\{\eta_{i}\}_{i\in \mathbb{Z}}$. If (T1), (T2) and (T5) hold, we have
	$$ n\text{\normalfont HSIC}^*_{n}(X,Y) \xrightarrow{\mathcal{D}^*} || N_{\mathcal{HS}}||^2_{\mathcal{HS}} \ \text{in }\mathbb{P},$$
	where $N_{\mathcal{HS}}$ is defined as in Theorem \ref{Th2}. In addition, let $Q_n^*(1-\alpha)$ denote the $(1-\alpha)$th quantile of the conditional distribution of  $n\text{\normalfont HSIC}^*_{n}(X,Y)$ given $\{\eta_{i}\}_{i\in \mathbb{Z}}$ and assume that the covariance operator of $N_{\mathcal{HS}}$ has one or more positive eigenvalues. Then:
	\begin{itemize}
		\item If $\text{\normalfont HSIC}_{n}(X,Y) = 0$, $\mathbb{P}(n\text{\normalfont HSIC}_{n}(X,Y)>Q_n^*(1-\alpha)) \rightarrow \alpha$,
		\item If $\text{\normalfont HSIC}_{n}(X,Y)>  0$, $\mathbb{P}(n\text{\normalfont HSIC}_{n}(X,Y)>Q_n^*(1-\alpha)) \rightarrow 1$.
	\end{itemize}
\end{theorem}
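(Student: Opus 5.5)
Taking $\hat S_i = (\Phi_{\mathcal{X}}(X_i)-\bar\Phi_{\mathcal{X}}(X))\otimes(\Phi_{\mathcal{Y}}(Y_i)-\bar\Phi_{\mathcal{Y}}(Y))$, the bootstrap statistic can be written as a squared norm in $\mathcal{HS}$. Since $\hat c_{\mathcal{X}c}(X_i,X_j)\hat c_{\mathcal{Y}c}(Y_i,Y_j)=\langle \hat S_i,\hat S_j\rangle_{\mathcal{HS}}$,
$$ n\text{\normalfont HSIC}^*_n(X,Y) = \Big\|\frac{1}{\sqrt n}\sum_{i=1}^n (r_{in}-\bar r_{\cdot n})\hat S_i\Big\|^2_{\mathcal{HS}} .$$
Because $\sum_i (r_{in}-\bar r_{\cdot n})=0$, we may subtract any fixed element from $\hat S_i$. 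In particular we may replace $\hat S_i$ by $\hat S_i - \mathbb{E}[\Phi_{\mathcal{X}c}(X)\otimes\Phi_{\mathcal{Y}c}(Y)]$, so the statistic has the same form whether or not the null holds. Expanding the empirical centring gives $\hat S_i - \mathbb{E}[\cdot] = S_i + A_{in}$, where
$$A_{in} = -(\bar\Phi_{\mathcal{X}}-\mathbb{E}\Phi_{\mathcal{X}})\otimes\Phi_{\mathcal{Y}c}(Y_i) - \Phi_{\mathcal{X}c}(X_i)\otimes(\bar\Phi_{\mathcal{Y}}-\mathbb{E}\Phi_{\mathcal{Y}}) + (\bar\Phi_{\mathcal{X}}-\mathbb{E}\Phi_{\mathcal{X}})\otimes(\bar\Phi_{\mathcal{Y}}-\mathbb{E}\Phi_{\mathcal{Y}}).$$
The plan is then:
\begin{itemize}
\item show that $n^{-1/2}\sum_i (r_{in}-\bar r_{\cdot n})S_i$ converges weakly in probability to $N_{\mathcal{HS}}$;
\item show that the contribution of $A_{in}$ is $o_{\mathbb{P}}(1)$ unconditionally, so that it is negligible conditionally in probability;
\item apply the continuous mapping theorem, which is compatible with Definition \ref{Def5} through the subsequence characterisation.
\end{itemize}

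\textbf{Main term.} This is handled by Theorem \ref{Th11} in \ref{app2}, the wild bootstrap consistency for sample means of $L^1$ NED Hilbert-valued data. To invoke it I must check its hypotheses for $\{S_i\}$. First, $S_i$ is $L^1$ NED on $\{\eta_i\}$ with summable coefficients, including the weight $j$ required in (T5). For the approximation I would use $\Phi_{\mathcal{X}c}(\Pi_{\mathcal{X}j})\otimes\Phi_{\mathcal{Y}c}(\Pi_{\mathcal{Y}j})$ and split $\Omega$ according to whether $d_{\mathcal{X}},d_{\mathcal{Y}}$ exceed $\epsilon_j$.
\begin{itemize}
\item On the good event, (T1) bounds the error by $\varrho(\epsilon_j)$ times the $L^2$ moments.
\item On the bad event, Hölder's inequality with the $(2+\delta)$-moments in (T2) gives $P_j^{(1+\delta)/(2+\delta)}$.
\end{itemize}
This is exactly the calculation presumably done in the proof of Theorem \ref{Th2}, and (T5) makes $\sum_j j\,a_j<\infty$. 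The moment and $\beta$-mixing conditions then come from (T2) and (T5). The extra centring by $\bar r_{\cdot n}$ is handled as follows. The term $\bar r_{\cdot n}\, n^{-1/2}\sum_i S_i$ has $\bar r_{\cdot n}=O_{\mathbb{P}}(\sqrt{l_n/n})$, because $\mathrm{Var}(\bar r_{\cdot n})\le n^{-1}\sum_k\rho(k/l_n)=O(l_n/n)$. Meanwhile $n^{-1/2}\sum_i S_i=O_{\mathbb{P}}(1)$ by Theorem \ref{Th2}'s CLT. Since $l_n=o(n)$, this term vanishes.

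\textbf{Remainder.} Write $D_n=\bar\Phi_{\mathcal{X}}-\mathbb{E}\Phi_{\mathcal{X}}=O_{\mathbb{P}}(n^{-1/2})$, with the analogous quantity for $\mathcal{Y}$, again by the Hilbert CLT. The first piece of the remainder is then
$$-D_n\otimes n^{-1/2}\sum_i (r_{in}-\bar r_{\cdot n})\Phi_{\mathcal{Y}c}(Y_i).$$
Conditionally on the data, the Gaussian sum has second moment
$$n^{-1}\sum_{i,j}\rho(|i-j|/l_n)\langle\Phi_{\mathcal{Y}c}(Y_i),\Phi_{\mathcal{Y}c}(Y_j)\rangle .$$
Its unconditional expectation is $O(l_n)$ by the NED covariance inequalities, or trivially bounded by $l_n$ times second moments. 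The piece is therefore $O_{\mathbb{P}}(n^{-1/2}l_n^{1/2})=o_{\mathbb{P}}(1)$. The other pieces are handled the same way, the last being of even smaller order. A conditional Markov argument turns unconditional $o_{\mathbb{P}}(1)$ into conditional negligibility in probability.

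\textbf{Test consistency.} Under $\text{HSIC}=0$, Theorem \ref{Th2} gives $n\text{HSIC}_n\to\|N_{\mathcal{HS}}\|^2$. The limit law is continuous whenever some eigenvalue is positive, since it is then a nondegenerate weighted chi-square. Hence the conditional quantiles $Q_n^*(1-\alpha)$ converge in probability to the true quantile, and Slutsky's lemma gives size $\alpha$. Under the alternative, $Q_n^*=O_{\mathbb{P}}(1)$ because the bootstrap limit is the same tight law, while $n\text{HSIC}_n\to\infty$ a.s.\ by Theorem \ref{Th1}, so the power tends to $1$.

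\textbf{Main obstacle.} I expect the hardest step to be verifying the hypotheses of Theorem \ref{Th11}. The difficulty is obtaining the NED coefficients with the extra factor $j$ and the needed moments for the tensor-product process, and controlling the conditional covariance of the multiplier sum near the boundary when $\mathrm{HSIC}>0$.
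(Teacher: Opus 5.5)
Your architecture matches the paper's proof:
\begin{itemize}
\item write $n\,\text{HSIC}^*_n$ as the squared norm of a multiplier sum;
\item use $\sum_i(r_{in}-\bar r_{\cdot n})=0$ to re-centre, so the null and the alternative are treated alike;
\item apply Theorem \ref{Th11} to $\Phi_{\mathcal{X}c}(X_i)\otimes\Phi_{\mathcal{Y}c}(Y_i)$ and show the centring cross terms are negligible;
\item finish with continuous mapping, quantile convergence, and Theorem \ref{Th1} for power.
\end{itemize}
Your handling of the cross terms is a mild but valid variant. You bound $\|\bar\Phi_{\mathcal{X}}-\mathbb{E}\Phi_{\mathcal{X}}\|=O_{\mathbb{P}}(n^{-1/2})$ against a crude $O_{\mathbb{P}}(\sqrt{l_n})$ bound on the multiplier sum and use $l_n=o(n)$. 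The paper instead applies Theorem \ref{Th11} to the marginal features to get conditional tightness, and combines this with a.s.\ convergence of the sample means from the ergodic theorem. Two small redundancies: the constant piece $(\bar\Phi_{\mathcal{X}}-\mathbb{E}\Phi_{\mathcal{X}})\otimes(\bar\Phi_{\mathcal{Y}}-\mathbb{E}\Phi_{\mathcal{Y}})$ of $A_{in}$ is annihilated exactly, and Theorem \ref{Th11} is already stated with weights $r_{in}-\bar r_{\cdot n}$, so $\bar r_{\cdot n}$ needs no separate treatment.

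The genuine gap is your verification of the $L^1$ NED property. It feeds Theorem \ref{Th11}, and also the CLT you use for the sample means. With approximant $\Phi_{\mathcal{X}c}(\Pi_{\mathcal{X}j})\otimes\Phi_{\mathcal{Y}c}(\Pi_{\mathcal{Y}j})$, the bad-event error contains $\mathbb{E}\big[\|\Phi_{\mathcal{X}c}(\Pi_{\mathcal{X}j})\|_{\mathcal{C}_{\mathcal{X}}}\|\Phi_{\mathcal{Y}c}(\Pi_{\mathcal{Y}j})\|_{\mathcal{C}_{\mathcal{Y}}}\mathbbm{1}_{\text{bad}}\big]$. Nothing in Definition \ref{def4}, (T1), (T2) or (T5) controls moments of $\Phi_{\mathcal{X}}(\Pi_{\mathcal{X}j}(\eta_{-j},\ldots,\eta_j))$, because $\Pi_{\mathcal{X}j}$ is an arbitrary measurable map that only converges in probability. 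For unbounded kernels (distance-induced, linear, i.e.\ the DCOV/MDD cases) this expectation can be infinite. The $(2+\delta)$-moments in (T2) cover only the $(X_0,Y_0)$ half of the difference, so your H\"older step fails. It does work for bounded kernels such as the Gaussian.

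The paper's Proposition \ref{pr4} never uses $\Pi_j$ as the approximant:
\begin{itemize}
\item Via Lemma \ref{Lm2} it builds $\{\eta'_i\}$ equal to $\{\eta_i\}$ on $\{-j,\ldots,j\}$ and conditionally independent outside. The induced $(X'_0,Y'_0)$ then has the law of $(X_0,Y_0)$.
\item It takes $\mathbb{E}[\,\cdot\mid\eta_{-j},\ldots,\eta_j]$ as approximant and bounds its $L^1$ error by $2\mathbb{E}\|\Phi_{\mathcal{X}c}(X_0)\otimes\Phi_{\mathcal{Y}c}(Y_0)-\Phi_{\mathcal{X}c}(X'_0)\otimes\Phi_{\mathcal{Y}c}(Y'_0)\|_{\mathcal{HS}}$.
\item On the bad event (probability at most $4P_j$), H\"older applies with (T2) moments available for both terms.
\item On the good event, $\Pi_j$ is used only as a pivot, via $d_{\Phi_{\mathcal{X}}}(X_0,X'_0)\le 2d_{\Phi_{\mathcal{X}}}(X_0,\Pi_{\mathcal{X}j})+2d_{\Phi_{\mathcal{X}}}(X'_0,\Pi_{\mathcal{X}j})$, and (T1) controls the result.
\end{itemize}
This gives the coefficients $P_j^{(1+\delta)/(2+\delta)}+\varrho(\epsilon_j)$ you were aiming for. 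With that replacement, for the tensor process and the marginal features alike, the rest of your argument goes through.
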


\begin{theorem}\label{Th5}
	Let $\{X_i\}_{i\in \mathbb{Z}}$ and $\{Y_i\}_{i\in \mathbb{Z}}$ be $\mathbb{P}$ $\text{\normalfont NED}$ on $\{\eta_{i}\}_{i\in \mathbb{Z}}$ and assume (T1)-(T6) hold.
	Then, if $l_n/k^2_n \rightarrow 0$ (which holds if $\sqrt{n}/k_n = O(1)$), we have 
	$$ n\text{\normalfont HSIC}^*_{n}(X_{\cdot n},Y_{\cdot n}) \xrightarrow{\mathcal{D}^*} || N_{\mathcal{HS}}||^2_{\mathcal{HS}} \ \text{in } \mathbb{P},$$
	where $N_{\mathcal{HS}}$ be defined as in Theorem \ref{Th2}. In addition, let $Q_n^*(1-\alpha)$ denote the $(1-\alpha)$th quantile of the conditional distribution of $n\text{\normalfont HSIC}^*_{n}(X_{\cdot n},Y_{\cdot n})$ given $\{\eta_{i}\}_{i\in \mathbb{Z}}$. If $\text{\normalfont HSIC}(X,Y) >0$, it holds that
	$$\mathbb{P}(n\text{\normalfont HSIC}_{n}(X_{\cdot n},Y_{\cdot n})>Q_n^*(1-\alpha)) \rightarrow 1,$$
	independently of $k_n$. Moreover, let $F_{M}$ denote the cumulative distribution function of $||M\mu + N_{\mathcal{HS}}||^2_{\mathcal{HS}}$ where  
	$\mu$ and $M$ are defined as in Theorem \ref{Th3} and let $Q(1-\alpha)$ denote the $(1-\alpha)$th quantile of $F_{0}$. Suppose that the covariance operator of $N_{\mathcal{HS}}$ has one or more positive eigenvalues, $\mu\neq 0$ and $\text{\normalfont HSIC}(X,Y) =0$. Then:
	\begin{itemize}
		\item If $\sqrt{n}/k_n\rightarrow M$, $\mathbb{P}(n\text{\normalfont HSIC}_{n}(X_{\cdot n},Y_{\cdot n})>Q_n^*(1-\alpha)) \rightarrow 1-F_{M}(Q(1-\alpha))$,
		\item If $\sqrt{n}/k_n\rightarrow\infty$, $\hspace{0.1mm}\mathbb{P}(n\text{\normalfont HSIC}_{n}(X_{\cdot n},Y_{\cdot n})>Q_n^*(1-\alpha)) \rightarrow 1$.
	\end{itemize}
	Furthermore, $F_{M}$ is absolutely continuous for all $M$ and $F_0(Q(1-\alpha)) = 1-\alpha$.
\end{theorem}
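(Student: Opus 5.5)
The plan is to reduce everything to the bootstrap consistency result for sample means of Hilbert space valued $L^1$ NED data (Theorem \ref{Th11} in the appendix), applied to the triangular array. First I rewrite the bootstrap statistic as a squared norm:
$$n\text{\normalfont HSIC}^*_{n}(X_{\cdot n},Y_{\cdot n}) = \Big\|\frac{1}{\sqrt n}\sum_{i=1}^n (r_{in}-\bar r_{\cdot n})\big(\Phi_{\mathcal{X}}(X_{in})-\bar\Phi_{\mathcal{X}}(X_{\cdot n})\big)\otimes\big(\Phi_{\mathcal{Y}}(Y_{in})-\bar\Phi_{\mathcal{Y}}(Y_{\cdot n})\big)\Big\|^2_{\mathcal{HS}}.$$
I then substitute the expansion in (T3), $\Phi(X_{in})=\Phi(X_i)+Z_{\mathcal{X}i}/k_n+R_{\mathcal{X}in}$, and similarly for $Y$. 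The bootstrap sum splits into two parts: the unperturbed bootstrap sum built from $\{(X_i,Y_i)\}$, and cross terms involving $Z/k_n$ and $R$.

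For the unperturbed part, Theorem \ref{Th4} (under (T1), (T2), (T5)) already gives the limit $\|N_{\mathcal{HS}}\|^2$. More precisely, its proof shows that the inner bootstrap sum converges weakly in probability to $N_{\mathcal{HS}}$. One point matters here. The multipliers are recentred by $\bar r_{\cdot n}$, so the mean term $\mathbb{E}[\Phi_{\mathcal{X}c}\otimes\Phi_{\mathcal{Y}c}]$ drops out up to negligible error. That is why the limit has no drift, even though $\mu\ne0$ or $\text{\normalfont HSIC}(X,Y)>0$.

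The remaining task is to show the cross terms are $o_{\mathbb{P}}(1)$ conditionally. This is the main obstacle. A term such as $n^{-1/2}\sum_i (r_{in}-\bar r)\,\Phi_{\mathcal{X}c}(X_i)\otimes Z_{\mathcal{Y}i}/k_n$ is handled by its conditional second moment given the data. Because $\text{Cov}(r_{in},r_{jn})=\rho(|i-j|/l_n)\le$ const and $\sum_k\rho(k/l_n)=O(l_n)$, I bound it by
$$\frac{1}{nk_n^2}\sum_{i,j}\rho(|i-j|/l_n)\|A_i\|\|A_j\| \lesssim \frac{l_n}{k_n^2}\cdot\frac1n\sum_i\|A_i\|^2,$$
with $A_i=\Phi_{\mathcal{X}c}(X_i)\otimes Z_{\mathcal{Y}i}$. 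I use $\|A_i\|\|A_j\|\le(\|A_i\|^2+\|A_j\|^2)/2$, then ergodicity and the second-moment finiteness in (T6). The result is $O_{\mathbb{P}}(l_n/k_n^2)\to0$, which is exactly where $l_n/k_n^2\to0$ enters. If $\sqrt n/k_n=O(1)$, then $l_n=o(n)=o(k_n^2)$.

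The $R$ terms are treated the same way: their $o(1/k_n^2)$ second moments from (T6) give $O_{\mathbb{P}}(l_n\, o(1/k_n^2))$ in expectation. The centring terms coming from $\bar\Phi$ are products of a sample mean difference, which is $O_{\mathbb{P}}(1/k_n)$, with $n^{-1/2}\sum(r_{in}-\bar r)\Phi_{\cdot}(\cdot)$, which is $O_{\mathbb{P}}(\sqrt{l_n})$. They are handled similarly. Markov's inequality then converts the conditional $L^2$ bounds into convergence in probability of conditional laws in the sense of Definition \ref{Def5}. The continuous mapping theorem applied to $\|\cdot\|^2$ gives the first claim.

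For the power statements, I combine this with Theorems \ref{Th2}–\ref{Th3} and Slutsky-type arguments for quantiles.

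When $\text{\normalfont HSIC}(X,Y)>0$, Theorem \ref{Th3} items 3–4 give $n\text{\normalfont HSIC}_n\to\infty$ in probability. Meanwhile $Q_n^*(1-\alpha)$ is $O_{\mathbb{P}}(1)$, since its conditional law converges to a fixed law. Hence the rejection probability tends to 1, whatever $k_n$ is.

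Next, absolute continuity of $F_M$. Since $\Gamma$ has a positive eigenvalue $\lambda_1$, write $\|M\mu+N\|^2=\lambda_1(m_1+N_1)^2+W$, with $W$ independent of $N_1$. A noncentral $\chi^2_1$ variable has a density, and convolving with an independent variable preserves absolute continuity. So $F_M$ is continuous, and $Q_n^*(1-\alpha)\to Q(1-\alpha)$ in probability with $F_0(Q(1-\alpha))=1-\alpha$.

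Finally, when $\sqrt n/k_n\to M$, Theorem \ref{Th3} item 1 gives $n\text{\normalfont HSIC}_n\to\|M\mu+N\|^2$. The continuity of $F_M$ then yields the limit $1-F_M(Q(1-\alpha))$. When $\sqrt n/k_n\to\infty$, item 2 together with $\mu\ne0$ gives divergence, so the rejection probability tends to 1.
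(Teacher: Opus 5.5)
Your treatment of the bootstrap limit itself follows the paper: you split the perturbed bootstrap sum into the unperturbed one, handled by the proof of Theorem~\ref{Th4}, plus perturbation terms whose conditional second moment is $O_{\mathbb{P}}(l_n/k_n^2)$ via $\sum_k\rho(k/l_n)=O(l_n)$, AM--GM and (T6). Your absolute-continuity argument for $F_M$ is also fine. The gap is in the power statements. For $\mathrm{HSIC}(X,Y)>0$ (``whatever $k_n$ is'') and for the bullet $\sqrt{n}/k_n\to\infty$, you use $Q_n^*(1-\alpha)=O_{\mathbb{P}}(1)$ ``since its conditional law converges to a fixed law''. You only proved that convergence under $l_n/k_n^2\to0$, which is automatic when $\sqrt{n}/k_n=O(1)$ but can fail exactly in the regimes these claims cover. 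For example, with the recommended $l_n\asymp n^{1/5}$ and $k_n=n^{1/20}$, one has $\sqrt{n}/k_n\to\infty$ and $l_n/k_n^2\to\infty$. There the perturbation part of the bootstrap sum has conditional second moment of order $l_n/k_n^2\to\infty$. The conditional law of $n\,\mathrm{HSIC}^*_n(X_{\cdot n},Y_{\cdot n})$ then need not be tight, and $Q_n^*(1-\alpha)$ may diverge. So the comparison ``$n\,\mathrm{HSIC}_n\to\infty$ while $Q_n^*=O_{\mathbb{P}}(1)$'' is not available, and the claim ``independently of $k_n$'' is left unproved.

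The repair uses your own estimate without assuming $l_n/k_n^2\to0$. Write the bootstrap sum as $A_n+B_n$, with $A_n$ the unperturbed sum, so that $n\,\mathrm{HSIC}^*_n\le 2\|A_n\|^2_{\mathcal{HS}}+2\|B_n\|^2_{\mathcal{HS}}$. Here $\|A_n\|^2_{\mathcal{HS}}$ is conditionally tight regardless of $\mathrm{HSIC}(X,Y)$, and $\mathbb{E}[\|B_n\|^2_{\mathcal{HS}}\mid\{\eta_i\}_{i\in\mathbb{Z}}]=O_{\mathbb{P}}(l_n/k_n^2)$. By conditional Markov, $Q_n^*(1-\alpha)=O_{\mathbb{P}}(1+l_n/k_n^2)$. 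Since $l_n=o(n)$ and $k_n\to\infty$, this is $o_{\mathbb{P}}(n)$, which is dominated by $n\,\mathrm{HSIC}_n(X_{\cdot n},Y_{\cdot n})=n(\mathrm{HSIC}(X,Y)+o_{\mathbb{P}}(1))$. When $k_n=o(\sqrt{n})$ it is also $o_{\mathbb{P}}(n/k_n^2)$, which is dominated by $n\,\mathrm{HSIC}_n=(n/k_n^2)(\|\mu\|^2_{\mathcal{HS}}+o_{\mathbb{P}}(1))$. This is what the paper does through $\min\{1,k_n^2/n\}\,Q_n^*(1-\alpha)=O_{\mathbb{P}}(1)$, using $(k_n^2/n)(l_n/k_n^2)=l_n/n\to0$. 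Relatedly, for $\mathrm{HSIC}(X,Y)>0$ you should invoke consistency, $\mathrm{HSIC}_n(X_{\cdot n},Y_{\cdot n})\to\mathrm{HSIC}(X,Y)$ in probability, which holds for any $k_n\to\infty$. The items of Theorem~\ref{Th3} you cite presuppose that $\sqrt{n}/k_n$ converges.
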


\section{A simulation study with a critical perspective}
\label{sec4}

To illustrate the finite sample performance of the tests of $\text{\normalfont HSIC}(X,Y) = 0$, calibrated with the bootstrap scheme presented in Section \ref{sec3.3}, as well as the inherent difficulties that arise when 
we combine small and moderate sample sizes with non-Euclidean data and dependent samples, 
we describe in this section the results of a succint simulation study based on functional data, where $\mathcal{Y} = \mathcal{X} = L^2(0,1)$, considering four different combinations of kernels:
\begin{itemize}
	\item  MDD: $c_{\mathcal{X}}(x,x') = ||x||_{\mathcal{X}} + ||x'||_{\mathcal{X}} - ||x-x'||_{\mathcal{X}}$ and $c_{\mathcal{Y}}(y,y') = \langle y,y' \rangle_{\mathcal{Y}}$,
	\item  $\text{KCMD}_{\text{G}}$: $c_{\mathcal{X}}(x,x') = \exp(-||x-x'||_{\mathcal{X}}^2/\sigma^2_{c_{\mathcal{X}}})$ and $c_{\mathcal{Y}}(y,y') = \langle y,y' \rangle_{\mathcal{Y}}$,
	\item  DCOV: $c_{\mathcal{X}}(x,x') = ||x||_{\mathcal{X}} + ||x'||_{\mathcal{X}} - ||x-x'||_{\mathcal{X}}$ and $c_{\mathcal{Y}}(y,y') = ||y||_{\mathcal{Y}} + ||y'||_{\mathcal{Y}} - ||y-y'||_{\mathcal{Y}}$,
	\item $\text{\normalfont HSIC}_{\text{G}}$: $c_{\mathcal{X}}(x,x') = \exp(-||x-x'||_{\mathcal{X}}^2/\sigma^2_{c_{\mathcal{X}}})$ and $c_{\mathcal{Y}}(y,y') = \exp(-||y-y'||_{\mathcal{Y}}^2/\sigma^2_{c_{\mathcal{Y}}})$,
\end{itemize}
where the Gaussian kernel parameters $\sigma^2_{c_{\mathcal{X}}}$ and $\sigma^2_{c_{\mathcal{Y}}}$ are set to the median of the in-sample squared distances between pairwise distinct observations of $X$ and $Y$, respectively (median heuristic). The combinations in MDD and $\text{KCMD}_{\text{G}}$ characterize whether $Y$ is mean independent of $X$, whereas those in DCOV and $\text{\normalfont HSIC}_{\text{G}}$ characterize independence. Moreover, the choice of kernels in MDD and DCOV allows us to represent the distance-based family of tests since the corresponding values of $\text{\normalfont HSIC}_{n}(X,Y)$ coincide with the empirical estimators of the martingale difference divergence \citep{Lee2020} and the usual version of the distance covariance \citep{Lyons2013}, respectively. The relationship between kernel-based and distance-based methods is thoroughly studied in \cite{Sejdinovic2013}.

The data is discretized in practice on an equally spaced grid with 1001 points to approximate the corresponding inner products and norms using the trapezoidal rule. Moreover, for all the iterative data generating processes we discuss in this section, we initialize the data at zero and discard the first 100 curves. The empirical rejection rates of the tests, calibrated with nominal level $\alpha = 0.05$, are computed  with $1000$ Monte Carlo replications, $1000$ bootstrap resamples and sample sizes of $n\in \{100,250,1000\}$.

\subsection{Level sensibility}
\label{sec4.1}

To assess the level sensibility of the tests to the dependence structure and the choice of $l_n$, we rely on the data generating process described in Section 4.1 in \cite{Meintanis2022}, where $\{X_i\}_{i = 1}^n$ and $\{Y_i\}_{i = 1}^n$ are i.i.d. samples of a first order Hilbertian autoregressive process $\{S_i\}_{i\in \mathbb{Z}}$ which verifies
$$S_{i}(t') = \int_0^1 \gamma_1\min\{t,t'\}S_{i-1}(t)dt + \mathcal{E}_i(t'),$$
where $\{\mathcal{E}_i\}_{i\in \mathbb{Z}}$ is an i.i.d. collection of standard Wiener processes and  $\gamma_1 \in \{0,0.75,1.5,2.25\}$. The integral operator with kernel $\gamma_1 \min\{t,t'\}$ has operator norm $4\gamma_1/\pi^2$ which, for our choice of $\gamma_1$, takes, approximately, the values $0,0.3,0.61$ and $0.91$, the last of which lies close to the stationarity boundary defined by the unitary norm of the autoregressive operator. The empirical rejection rates are displayed in Table \ref{tab1}, where $l_n$ is set to either $1$ (independent bootstrap multipliers) or the closest integer to $2n^{1/5}$, $5n^{1/5}$ and $10n^{1/5}$. From these results we can conclude that, for small and moderate sample sizes ($n\le 250$), the test calibration becomes very imprecise as the data dependence increases. Moreover, the choice of $l_n$, which has a noticeable impact in the Type I error of the tests, should account for the sample dependence structure: strong dependence requires larger values of $l_n$ to preserve significance levels, although excessively large values of $l_n$ yield tests that are overly conservative. For larger sample sizes ($n=1000$), the calibration becomes less sensitive to the selection of $l_n$, although in the case of $\gamma_1 = 2.25$ some of the tests require a value of $l_n$ even greater than $10n^{1/5}$ to be properly calibrated. Interestingly, independence tests appear to be more conservative than mean dependence tests, which suggest that Gaussian wild bootstrap schemes may not be the best calibration methods for the former. However, alternative procedures such as those proposed in \cite{Chwialkowski2014} and \cite{Meintanis2022} require the additional assumption of independence between $\{Y_i\}_{i\in \mathbb{Z}}$ and  $\{X_i\}_{i\in \mathbb{Z}}$ to be consistent, which is much stronger than the (mean) independence of $Y_i$ and $X_i$ for all $i\in \mathbb{Z}$.

\begin{table}[t]
	\centering
	\caption{Empirical rejection rates of the MMD, $\text{KCMD}_{\text{G}}$, DCOV and $\text{\normalfont HSIC}_{\text{G}}$ tests of null $\text{\normalfont HSIC}(X,Y)$, in the simulation scenario of Section \ref{sec4.1}, for different values of $\gamma_1$, $l_n$ and $n$. All tests are calibrated with nominal level $\alpha= 0.05$.}
	\vspace{1.2mm}
	\label{tab1}
	\renewcommand{\arraystretch}{1}
	\setlength{\tabcolsep}{4pt}
	\begin{tabular}{lll*{12}{c}}
		\toprule
		& & $l_n$ &
		\multicolumn{3}{c}{$1$} &
		\multicolumn{3}{c}{$2n^{1/5}$} &
		\multicolumn{3}{c}{$5n^{1/5}$} &
		\multicolumn{3}{c}{$10n^{1/5}$} \\
		\cmidrule(lr){3-3} \cmidrule(lr){4-6} \cmidrule(lr){7-9} \cmidrule(lr){10-12} \cmidrule(lr){13-15}
		$\gamma_1$ & Test  & $n$ &
		$100$ & $250$ & $1000$ &
		$100$ & $250$ & $1000$ &
		$100$ & $250$ & $1000$ &
		$100$ & $250$ & $1000$ \\
		\midrule
		0 & MMD & & 0.057 & 0.041 & 0.041 & 0.064 & 0.047 & 0.045 & 0.065 & 0.048 & 0.044 & 0.052 & 0.038 & 0.036 \\
		& $\text{KCMD}_{\text{G}}$ & & 0.053 & 0.041 & 0.056 & 0.063 & 0.054 & 0.049 & 0.061 & 0.054 & 0.050 & 0.039 & 0.036 & 0.038 \\
		& DCOV & & 0.040 & 0.035 & 0.041 & 0.034 & 0.035 & 0.043 & 0.024 & 0.024 & 0.037 & 0.006 & 0.006 & 0.021 \\
		& $\text{\normalfont HSIC}_{\text{G}}$ & & 0.041 & 0.032 & 0.052 & 0.027 & 0.027 & 0.048 & 0.008 & 0.017 & 0.035 & 0.003 & 0.003 & 0.019 \\
		\midrule
		0.75 & MMD & & 0.072 & 0.060 & 0.079 & 0.065 & 0.046 & 0.049 & 0.066 & 0.058 & 0.051 & 0.056 & 0.049 & 0.048 \\
		& $\text{KCMD}_{\text{G}}$ & & 0.069 & 0.063 & 0.064 & 0.063 & 0.045 & 0.054 & 0.063 & 0.051 & 0.050 & 0.046 & 0.040 & 0.044 \\
		& DCOV & & 0.056 & 0.055 & 0.066 & 0.046 & 0.039 & 0.054 & 0.033 & 0.025 & 0.044 & 0.016 & 0.013 & 0.037 \\
		& $\text{\normalfont HSIC}_{\text{G}}$ & & 0.042 & 0.044 & 0.053 & 0.029 & 0.034 & 0.042 & 0.013 & 0.014 & 0.030 & 0.001 & 0.003 & 0.014 \\
		\midrule
		1.5 & MMD & & 0.186 & 0.187 & 0.208 & 0.091 & 0.077 & 0.081 & 0.076 & 0.062 & 0.066 & 0.075 & 0.069 & 0.065 \\
		& $\text{KCMD}_{\text{G}}$ & & 0.166 & 0.157 & 0.188 & 0.080 & 0.062 & 0.075 & 0.077 & 0.062 & 0.062 & 0.065 & 0.054 & 0.058 \\
		& DCOV & & 0.173 & 0.188 & 0.207 & 0.080 & 0.069 & 0.069 & 0.057 & 0.050 & 0.057 & 0.031 & 0.040 & 0.056 \\
		& $\text{\normalfont HSIC}_{\text{G}}$ & & 0.123 & 0.134 & 0.160 & 0.055 & 0.055 & 0.051 & 0.033 & 0.041 & 0.041 & 0.015 & 0.022 & 0.024 \\
		\midrule
		2.25 & MMD & & 0.609 & 0.670 & 0.668 & 0.303 & 0.294 & 0.242 & 0.183 & 0.160 & 0.111 & 0.147 & 0.115 & 0.067 \\
		& $\text{KCMD}_{\text{G}}$ & & 0.642 & 0.701 & 0.728 & 0.312 & 0.283 & 0.228 & 0.167 & 0.152 & 0.098 & 0.135 & 0.098 & 0.056 \\
		& DCOV & & 0.704 & 0.792 & 0.845 & 0.326 & 0.290 & 0.250 & 0.172 & 0.148 & 0.104 & 0.138 & 0.098 & 0.067 \\
		& $\text{\normalfont HSIC}_{\text{G}}$ & & 0.769 & 0.871 & 0.919 & 0.324 & 0.297 & 0.244 & 0.150 & 0.138 & 0.102 & 0.096 & 0.085 & 0.058 \\
		\bottomrule
	\end{tabular}
\end{table}

\subsection{Time series autodependence}
\label{sec4.2}

One of the most prominent applications of distance-based methods in the time series literature concerns the development of autodependence measures, which quantify the dependence across temporal lags within a single stochastic process. To distinguish true dependence from spurious associations, these measures must be accompanied by confidence regions or, equivalently, by well-calibrated hypothesis tests that assess the significance of the corresponding estimators. To illustrate the ability of kernel-based tests to correctly detect the lack of (mean) independence between lagged observations of a time series, we generate $\{Y_i\}_{i=1}^n$ according to the self-excited functional threshold autoregressive model \citep{Li2024} given by
$$Y_i(t') =  \int_0^1 1.5\min\{t,t'\}Y_{i-2}(t)dt \!\left(\ind{||Y_{i-2}||_{\mathcal{Y}}\le 1}-\ind{||Y_{i-2}||_{\mathcal{Y}}>1}\right)  + \mathcal{E}_i(t'),$$
where $\{\mathcal{E}_i\}_{i\in \mathbb{Z}}$ is defined as in Section \ref{sec4.1} and we let $X_i = Y_{i-\text{lag}}$ for all $i$ and $\text{lag}\in \{1,2,4,6\}$. Table \ref{tab2} contains the empirical rejection rates obtained by setting $l_n$ to the closest integer to $2n^{1/5}$. It is easy to verify that $Y_i$ is independent of $Y_{i-1}$ for all $i$, which is correctly reflected by the fact that all four tests preserve the nominal level $\alpha = 0.05$. When $\text{lag}=2$, the tests exhibit non-trivial power that increases with the sample size and, remarkably, the tests based on the Gaussian kernel achieve higher power than those based on the distance kernel. The exact opposite occurs when $\text{lag}=4$ whereas, for $\text{lag}=6$, all tests display very limited power, even in the case of $n=1000$.

\begin{table}[t]
	\centering
	\caption{Empirical rejection rates of the MMD, $\text{KCMD}_{\text{G}}$, DCOV and $\text{\normalfont HSIC}_{\text{G}}$ tests of null $\text{\normalfont HSIC}(X,Y)$, in the simulation scenario of Section \ref{sec4.2}, for different values of $n$ and multiple lags. All tests are calibrated with nominal level $\alpha= 0.05$.}
	\vspace{1.2mm}
	\label{tab2}
	\renewcommand{\arraystretch}{1}
	\setlength{\tabcolsep}{4pt}
	\begin{tabular}{ll*{12}{c}}
		\toprule
		& $\text{lag}$ &
		\multicolumn{3}{c}{$1$} &
		\multicolumn{3}{c}{$2$} &
		\multicolumn{3}{c}{$4$} &
		\multicolumn{3}{c}{$6$} \\
		\cmidrule(lr){2-2} \cmidrule(lr){3-5} \cmidrule(lr){6-8} \cmidrule(lr){9-11} \cmidrule(lr){12-14}
		Test & \hspace{1.2mm}$n$ &
		$100$ & $250$ & $1000$ &
		$100$ & $250$ & $1000$ &
		$100$ & $250$ & $1000$ &
		$100$ & $250$ & $1000$ \\
		\midrule
		MMD & & 
		0.050 & 0.051 & 0.053 & 
		0.432 & 0.924 & 1.000 & 
		0.118 & 0.241 & 0.741 & 
		0.072 & 0.088 & 0.172 \\
		$\text{KCMD}_{\text{G}}$ & & 
		0.056 & 0.053 & 0.055 & 
		0.781 & 1.000 & 1.000 & 
		0.076 & 0.149 & 0.565 & 
		0.070 & 0.057 & 0.108 \\
		DCOV & & 
		0.026 & 0.044 & 0.052 & 
		0.412 & 0.950 & 1.000 & 
		0.106 & 0.225 & 0.739 & 
		0.056 & 0.077 & 0.159 \\
		$\text{\normalfont HSIC}_{\text{G}}$ & & 
		0.062 & 0.054 & 0.075 & 
		0.716 & 0.998 & 1.000 & 
		0.052 & 0.147 & 0.586 & 
		0.047 & 0.058 & 0.092 \\
		\bottomrule
	\end{tabular}
\end{table}

\subsection{Power comparison}
\label{sec4.3}
To gain further insight into the specific advantages of choosing between distance and Gaussian kernels, we
generate $\{Y_i\}_{i=1}^n$ according to the concurrent regression model
$$Y_i(t) = \frac{1}{3}\Big(\gamma_2X_i(t) + (1-\gamma_2)X'_i(t) \Big) + 2\Big(\gamma_3\sin(2X_i(t)) + (1-\gamma_3)\sin(2X''_i(t)) \Big) +\Big(\gamma_4X_i(t) + (1-\gamma_4)X'''_i(t) \Big)\mathcal{E}_{\mathcal{Y}i}(t),$$
where $\gamma_2, \gamma_3, \gamma_4\in \{0,1\}$, $\{X_i\}_{i=1}^n,\{X'_i\}_{i=1}^n,\{X''_i\}_{i=1}^n$ and $\{X'''_i\}_{i=1}^n$ are i.i.d. samples from the functional GARCH(1,1) model introduced in Example 2 in \cite{Cerovecki2019}, which verify
$$\begin{cases}
X_i(t') \hspace{0.7mm}= \sigma_i(t')\mathcal{E}_{\mathcal{X}i}(t'), \\
	\sigma^2_i(t') = 0.1+(t'-0.5)^2 + \!\displaystyle\int_0^1\!\left(0.2+(t-0.5)^2 + (t'-0.5)^2\right)X^2_{i-1}(t)dt + \!\displaystyle\int_0^1\!\left(0.4+(t-0.5)^2 + (t'-0.5)^2\right)\sigma^2_{i-1}(t)dt,
\end{cases}$$
and $\{\mathcal{E}_{\mathcal{X}i}\}_{i\in \mathbb{Z}}$ and $\{\mathcal{E}_{\mathcal{Y}i}\}_{i\in \mathbb{Z}}$ are i.i.d. collections of Gaussian random variables with zero mean and covariance function $\text{Cov}(\mathcal{E}_{\mathcal{X}i}(t),\mathcal{E}_{\mathcal{X}i}(t')) = \text{Cov}(\mathcal{E}_{\mathcal{Y}i}(t),\mathcal{E}_{\mathcal{Y}i}(t')) = \exp(-|t-t'|/2)$. In particular, we denote by DGP 1 the model corresponding to $\gamma_2 = \gamma_3 =\gamma_4 = 0$ ($X$ and $Y$ are independent), by DGP 2 the model associated with $\gamma_2 = 1$ and $\gamma_3=\gamma_4 = 0$ (a linear term on $\mathbb{E}[Y|X]$), by DGP 3 the model induced by $\gamma_3 = 1$ and $\gamma_2=\gamma_4 = 0$ (a nonlinear term in $\mathbb{E}[Y|X]$) and by DGP 4 the model resulting from setting $\gamma_4=1$ and $\gamma_2 = \gamma_3 = 0$ (conditional heteroscedasticity). From the empirical rejection rates displayed in Table \ref{tab3}, which were computed with $l_n$ set to the closest integer to $2n^{1/5}$, it can be concluded that all tests preserve the nominal level under the common null hypothesis in DGP 1, although the convergence of the empirical level to $\alpha=0.05$ seems to be quite slow. Moreover, the tests based on the Gaussian kernel with the median heuristic achieve greater power against nonlinear alternatives associated with the conditional expectation than those based on the distance kernel, whereas the latter outperform the former when we consider linear alternatives. The mean independence tests, MMD and $\text{KCMD}_{\text{G}}$, do preserve, as expected, the nominal level in DGP 4 since, in that case, $Y$ is mean independent of $X$. In addition, the performance of DCOV seems to be substantially worse than that of $\text{\normalfont HSIC}_{\text{G}}$ when the dependence between $Y$ and $X$ is not captured by the conditional expectation.

\FloatBarrier
\begin{table}[t]
	\centering
	\caption{Empirical rejection rates of the MMD, $\text{KCMD}_{\text{G}}$, DCOV and $\text{\normalfont HSIC}_{\text{G}}$ tests of null $\text{\normalfont HSIC}(X,Y)$ for the four different data generating processes defined in Section \ref{sec4.3} and $n\in \{100,250,1000\}$. All tests are calibrated with nominal level $\alpha= 0.05$.}
	\vspace{1.2mm}
	\label{tab3}
	\renewcommand{\arraystretch}{1}
	\setlength{\tabcolsep}{4pt}
	\begin{tabular}{ll*{12}{c}}
		\toprule
		& DGP &
		\multicolumn{3}{c}{$1$} &
		\multicolumn{3}{c}{$2$} &
		\multicolumn{3}{c}{$3$} &
		\multicolumn{3}{c}{$4$} \\
		\cmidrule(lr){2-2} \cmidrule(lr){3-5} \cmidrule(lr){6-8} \cmidrule(lr){9-11} \cmidrule(lr){12-14}
		Test &  $\hspace{2.5mm} n$ &
		$100$ & $250$ & $1000$ &
		$100$ & $250$ & $1000$ &
		$100$ & $250$ & $1000$ &
		$100$ & $250$ & $1000$ \\
		\midrule
		MMD & & 
		0.033 & 0.043 & 0.040 & 
		0.657 & 0.932 & 0.994 & 
		0.466 & 0.771 & 0.986 & 
		0.032 & 0.028 & 0.029 \\
		$\text{KCMD}_{\text{G}}$ & & 
		0.024 & 0.033 & 0.037 & 
		0.480 & 0.808 & 0.989 & 
		0.668 & 0.883 & 0.995 & 
		0.023 & 0.026 & 0.035 \\
		DCOV & & 
		0.012 & 0.022 & 0.036 & 
		0.630 & 0.963 & 1.000 & 
		0.447 & 0.836 & 0.999 & 
		0.030 & 0.167 & 0.986 \\
		$\text{\normalfont HSIC}_{\text{G}}$ & & 
		0.014 & 0.029 & 0.052 & 
		0.415 & 0.886 & 1.000 & 
		0.702 & 0.948 & 1.000 & 
		0.384 & 0.941 & 1.000 \\
		\bottomrule
	\end{tabular}
\end{table}

\section{Discussion and further research}
\label{sec5}

We have reviewed the fundamental theory behind the Hilbert-Schmidt independence criterion and its applications on independence and mean independence tests, providing some new results that broaden the scope of this methodology. We have also presented new asymptotic results for the empirical estimator of the HSIC based on stationary, ergodic and near epoch dependent data on Polish spaces, including a consistent bootstrap scheme which, according to simulations, yields tests that preserve nominal levels in practice when the data dependence is, indeed, weak. However, bootstrap calibration has been shown to be quite sensitive to the choice of $l_n$, particularly as data dependence increases. Therefore, a highly relevant open problem is the development of a data-driven procedure for the automatic selection of this bootstrap hyperparameter so that the asymptotic validity of the calibration is preserved.

The results in this paper can be extended to $q$-variable mutual independence tests based on product kernels. Let $X^1\in \mathcal{X}^1,\ldots, X^q\in \mathcal{X}^q$ and let $c_{\mathcal{X}^k}$ be a kernel on $\mathcal{X}^k\times \mathcal{X}^k$ with canonical feature map $\Phi_{\mathcal{X}^k}$, where $X^k$ is Polish space for all $k=1,\ldots,q$. Based on the collection of RKHS $\{\mathcal{C}_{\mathcal{X}^k}\}_{k=1}^q$, we can consider the $q$-fold tensor product Hilbert space $\mathcal{C}_{\mathcal{X}^1} \otimes \cdots \otimes\mathcal{C}_{\mathcal{X}^q}$, which is constructed, iteratively, as $\mathcal{C}_{\mathcal{X}^1} \otimes \cdots \otimes\mathcal{C}_{\mathcal{X}^k} = (\mathcal{C}_{\mathcal{X}^1} \otimes \cdots \otimes\mathcal{C}_{\mathcal{X}^{k-1}})\otimes\mathcal{C}_{\mathcal{X}^k}$ for all $k\ge 2$. Then, the $q$-variable HSIC associated to $X^1\ldots, X^q$ is defined as 
$$\text{\normalfont HSIC}(X^1,\ldots,X^q) = \!\left|\left|\mathbb{E}\Big[\Phi_{\mathcal{X}^1}(X^1) \otimes \cdots \otimes \Phi_{\mathcal{X}^q}(X^q) \Big] - \mathbb{E}\Big[\Phi_{\mathcal{X}^1}(X^1)\Big] \otimes \cdots \otimes \mathbb{E}\Big[\Phi_{\mathcal{X}^q}(X^q)\Big] \right|\right|^2_{\mathcal{C}_{\mathcal{X}^1}\otimes \cdots \otimes \hspace{0.4mm} \mathcal{C}_{\mathcal{X}^q}}.$$
As noted in \cite{Szabo2018}, it is generally not sufficient to have one characteristc kernel for each variable to completely characterize mutual independence. If all kernels are universal \citep{Sriperumbudur2011}, however, the characterization holds. The corresponding empirical analogue admits the expansion
$$ \text{\normalfont HSIC}_n(X^1,\ldots,X^q) = \frac{1}{n^2}\sum_{i,j=1}^n \prod_{k=1}^q  c_{\mathcal{X}^k}(X_i^k,X_j^k) -\frac{2}{n}\sum_{i=1}^n \prod_{k=1}^q \frac{1}{n}\sum_{j=1}^n c_{\mathcal{X}^k}(X_i^k,X_j^k) +  \prod_{k=1}^q \frac{1}{n^2}\sum_{i,j=1}^n  c_{\mathcal{X}^k}(X_i^k,X_j^k).$$
This estimator, like $\text{\normalfont HSIC}_n(X,Y)$, is nothing but the squared norm of the difference of a sample average of tensor products and a tensor product of sample averages. Thus, both the technical conditions (T1)--(T6) and the proofs of Theorems \ref{Th1}--\ref{Th5} can be adapted to show its consistency, the limit distribution under the null and local alternatives and the validity of the multiplier bootstrap scheme based on the analogous bootstrap statistic
\begin{align*}
	 \text{\normalfont HSIC}^*_n(X^1,\ldots,X^q) =\ &\frac{1}{n^2}\sum_{i,j=1}^n (r_{in}-\bar{r}_{\cdot n})(r_{jn}-\bar{r}_{\cdot n})\prod_{k=1}^q  c_{\mathcal{X}^k}(X_i^k,X_j^k)  -\frac{2}{n}\sum_{i=1}^n(r_{in}-\bar{r}_{\cdot n}) \prod_{k=1}^q \frac{1}{n}\sum_{j=1}^n (r_{jn}-\bar{r}_{\cdot n})c_{\mathcal{X}^k}(X_i^k,X_j^k)\\
	   &+  \prod_{k=1}^q \frac{1}{n^2}\sum_{i,j=1}^n  (r_{in}-\bar{r}_{\cdot n})(r_{jn}-\bar{r}_{\cdot n})c_{\mathcal{X}^k}(X_i^k,X_j^k).
\end{align*}

In some cases, particularly for time series modelling, it might be interesting to test whether $Y_i$ is (mean) independent of $\{X_j\}_{j\le i}$ for all $i\in \mathbb{Z}$. In line with the proposals in \cite{Fokianos2018}, \cite{Hlavka2020} and \cite{Jiang2023}, it is possible to adapt the results and proofs in this document to show that a test based on a weighted sum of $\text{\normalfont HSIC}_{n-j}(Y_{0},X_{-j})$ can consistently detect serial (mean) independence, i.e, pairwise (mean) independence of $Y_i$ and $X_{i-j}$ for all $j\ge 0$ and all $i\in \mathbb{Z}$.

\section*{Acknowledgements}

This work is supported by the GRC grant ED431C 2025/03, funded by the Xunta de Galicia, and also from projects PID2020-116587GB-I00 and PID2024-158017NB-I00, funded by MICIU/AEI/10.13039/501100011033 and by FEDER, UE.

\appendix
\section{Proofs of results in Section 2}
\label{app1}

\begin{proof}[\textbf{Proof of Proposition \ref{pr1}}]
	By construction, $c_{\mathcal{S}}$ is a kernel. Moreover, since $\Xi$ and $\Phi_{\tilde{\mathcal{S}}}$ are (Borel) measurable, $\Phi_{\mathcal{S}}(\star) = \Phi_{\tilde{\mathcal{S}}}(\Xi(\star))$ is measurable. 
	In addition, by Purves's Theorem \citep{Purves1966}, the inverse $\Xi^{-1}:\Xi(\tilde{\mathcal{S}})\rightarrow \mathcal{S}$ is also measurable. 
	As $c_{\tilde{\mathcal{S}}}$ is characteristic, for any pair of random variables $S,S'\in \mathcal{\tilde{S}}$ such that $\Phi_{\tilde{\mathcal{S}}}(\Xi(S))$ and $\Phi_{\tilde{\mathcal{S}}}(\Xi(S'))$ are Pettis integrable, $\mathbb{E}[\Phi_{\tilde{\mathcal{S}}}(\Xi(S))] = \mathbb{E}[\Phi_{\tilde{\mathcal{S}}}(\Xi(S'))]$ implies that $\Xi(S)$ and $\Xi(S')$ are identically distributed . Moreover, the measurability of $\Xi^{-1}$ implies that the Borel $\sigma$-algebra of $\mathcal{S}$ coincides with the preimage of the Borel $\sigma$-algebra of $\tilde{\mathcal{S}}$ and, therefore, for all borel measurable set $B_{\tilde{\mathcal{S}}}\subset \tilde{\mathcal{S}}$,
	$$\mathbb{P}\big(S\in \Xi^{-1}(B_{\tilde{\mathcal{S}}})\big) = \mathbb{P}\big(S'\in \Xi^{-1}(B_{\tilde{\mathcal{S}}})\big) \iff \mathbb{P}\big(\Xi(S)\in B_{\tilde{\mathcal{S}}}\big) = \mathbb{P}\big(\Xi(S)\in B_{\tilde{\mathcal{S}}}\big).$$
	That is, $S$ and $S'$ must be identically distributed too and, therefore, $c_{\mathcal{S}}$ is characteristic.
	
	To prove the second part of the result, let $d_{\mathcal{S}}$ be a metric inducing the topology of $\mathcal{S}$, which exists since $\mathcal{S}$ is completely metrizable. We can assume that $d_{\mathcal{S}}$ is bounded by some constant $0<||d_{\mathcal{S}}||_\infty<\infty$ since $\min\{||d_{\mathcal{S}}||_\infty,d_{\mathcal{S}}\}$ is another metric that induces the same topology. Let $\{s_k\}_{k\in \mathbb{N}^+}$ be a dense subset of $\mathcal{S}$, which exists since $\mathcal{S}$ is separable. Define
	$\Xi(s) = \{w_kd_{\mathcal{S}}(s,s_k)\}_{k\in \mathbb{N}^+}$ where $\{w_k\}_{k\in \mathbb{N}^+}$ is a sequence of positive and square summable weights. Then,
	$$\sum_{k=1}^\infty w^2_kd^2_{\mathcal{S}}(s,s_k) \le ||d_{\mathcal{S}}||^2_\infty\sum_{k=1}^\infty w^2_k < \infty,$$
	and, thus, $\Xi(s)$ belongs to $\ell^2$ for all $s\in \mathcal{S}$. Continuity follows from the reverse triangle inequality, since
	$$||\Xi(s) - \Xi(s')||^2_{\ell^2} = \sum_{k=1}^\infty w^2_k|d_{\mathcal{S}}(s,s_k) - d_{\mathcal{S}}(s',s_k)|^2 \le d^2_{\mathcal{S}}(s,s')\sum_{k=1}^\infty w^2_k.$$
	Finally, since $\{s_k\}_{k\in \mathbb{N}^+}$ is dense, if $s\neq s'$ then there must be some $k\in \mathbb{N}^+$ such that $d(s,s_k)\neq d(s',s_k)$. As $w_k>0$, $\Xi(s)\neq \Xi(s')$ and, therefore, $\Xi$ is injective.
\end{proof}

\begin{proof}[\textbf{Proof of Proposition \ref{pr2}}]
	
	 We show that $X$ and $Y$ are independent if and only if $ \mathbb{E}[\langle\Phi_{\mathcal{Y}}(Y),\zeta_{\mathcal{Y}}\rangle_{\mathcal{C}_{\mathcal{Y}}}|X] = \mathbb{E}[\langle\Phi_{\mathcal{Y}}(Y),\zeta_{\mathcal{Y}}\rangle_{\mathcal{C}_{\mathcal{Y}}}]$ a.s. for all $\zeta_{\mathcal{Y}}\in \mathcal{C}_{\mathcal{Y}}$.  In such case, since $ \text{\normalfont HSIC}(X,Y) = \text{\normalfont HSIC}(X,\Phi_{\mathcal{Y}}(Y);c_{\mathcal{X}}, \langle\star,\cdot\rangle_{\mathcal{C}_{\mathcal{Y}}})$ by definition, it follows from Proposition \ref{pr3} that $\text{\normalfont HSIC}(X,Y) = 0$ if and only if $X$ and $Y$ are independent.
	 
	By the reproducing property, $\mathbb{E}[\langle\Phi_{\mathcal{Y}}(Y),\zeta_{\mathcal{Y}}\rangle_{\mathcal{C}_{\mathcal{Y}}}|X] = \mathbb{E}[\langle\Phi_{\mathcal{Y}}(Y),\zeta_{\mathcal{Y}}\rangle_{\mathcal{C}_{\mathcal{Y}}}]$ a.s. if and only if $\mathbb{E}[\zeta_{\mathcal{Y}}(Y)|X] = \mathbb{E}[\zeta_{\mathcal{Y}}(Y)]$ a.s. for all $\zeta_{\mathcal{Y}}\in \mathcal{C}_{\mathcal{Y}}$. Of course this trivially holds if $X$ and $Y$ are independent. Let us fix $\zeta_{\mathcal{Y}}$ and let us assume that $\mathbb{E}[\zeta_{\mathcal{Y}}(Y)|X] = \mathbb{E}[\zeta_{\mathcal{Y}}(Y)]$ a.s. Then, by definition of conditional expectation with respect to $\sigma(X)$, 
	\begin{equation}\label{Pr2.1}
		\mathbb{E}\!\left[\zeta_{\mathcal{Y}}(Y) - \mathbb{E}[\zeta_{\mathcal{Y}}(Y)]|X\right] = 0 \ \ \text{a.s.} \implies \mathbb{E}\!\left[(\zeta_{\mathcal{Y}}(Y) - \mathbb{E}[\zeta_{\mathcal{Y}}(Y)])\ind{X\in B_{\mathcal{X}}}\right] = 0,
	\end{equation}
	for all $B_{\mathcal{X}}\in \mathcal{B}(X)$, the borel $\sigma$-algebra of $\mathcal{X}$. Let $\mathbb{P}(X \in B_{\mathcal{X}})>0$. Then,  $\mathbb{P}_{B_{\mathcal{X}}}(B_{\Omega}) = \mathbb{P}(B_{\Omega}\cap \{X\in B_{\mathcal{X}}\})/\mathbb{P}(X \in B_{\mathcal{X}})$ for all $B_{\Omega}\in \mathcal{F}$, where $\{X\in B_{\mathcal{X}}\}$ is shorthand for $\{\omega \in \Omega\mid X(\omega)\in B_{\mathcal{X}}\}$, defines a probability measure on $(\Omega,\mathcal{F})$ which is absolutely continuous with respect to $\mathbb{P}$ with Radon-Nikodym derivative $\ind{X\in B_{\mathcal{X}}}/\mathbb{P}(X \in B_{\mathcal{X}})$. Therefore, 
	\begin{align}
		\int \zeta_{\mathcal{\mathcal{Y}}}(Y(w)) \frac{\ind{X\in B_{\mathcal{X}}}(\omega)}{\mathbb{P}(X \in B_{\mathcal{X}})}d\mathbb{P}(w) &= \int \zeta_{\mathcal{\mathcal{Y}}}(Y(w)) d\mathbb{P}_{B_{\mathcal{X}}}(w) = \int \zeta_{\mathcal{\mathcal{Y}}}(y) d \Big(\mathbb{P}_{B_{\mathcal{X}}}\circ Y^{-1}\Big)(y), \label{Pr2.2}
	\end{align}
	where the latter is a Lebesgue integral with respect to the pushforward probability measure $\mathbb{P}_{B_{\mathcal{X}}}\circ Y^{-1}$ defined as $\mathbb{P}_{B_{\mathcal{X}}}\circ Y^{-1}(B_{\mathcal{Y}}) = \mathbb{P}_{B_{\mathcal{X}}}(Y\in B_{\mathcal{Y}})$ for all $B_{\mathcal{Y}}\in \mathcal{B}(\mathcal{Y})$, the borel $\sigma$-algebra of $\mathcal{Y}$. Thus,
	given a random variable $\tilde{Y}\in \mathcal{Y}$ with probability law $\mathbb{P}_{B_{\mathcal{X}}}\circ Y^{-1}$ and putting (\ref{Pr2.1}) and (\ref{Pr2.2}) together,
	$$ \mathbb{E}[\zeta_{\mathcal{Y}}(\tilde{Y})]\mathbb{P}(X \in B_{\mathcal{X}}) = \mathbb{E}\!\left[\zeta_{\mathcal{Y}}(Y)\ind{X\in B_{\mathcal{X}}}\right] = \mathbb{E}[\zeta_{\mathcal{Y}}(Y)]\mathbb{P}(X \in B_{\mathcal{X}}),$$
	which implies that $ \mathbb{E}[\zeta_{\mathcal{Y}}(\tilde{Y})] =  \mathbb{E}[\zeta_{\mathcal{Y}}(Y)]$.
	As we assume that $\Phi_{\mathcal{Y}}(Y)$ is Pettis integrable, it follows (from definition of Pettis integrability) that $\Phi_{\mathcal{Y}}(Y)\ind{X\in B_{\mathcal{X}}}$ is also Pettis integrable and, in consequence, $\Phi_{\mathcal{Y}}(\tilde{Y})$ is Pettis integrable with $\mathbb{E}[\Phi_{\mathcal{Y}}(\tilde{Y})] = \mathbb{E}\big[\Phi_{\mathcal{Y}}(Y)\ind{X\in B_{\mathcal{X}}}\big]/\mathbb{P}(X\in B_{\mathcal{X}})$. Furthermore, we have proved that $\langle \mathbb{E}[\Phi_{\mathcal{Y}}(\tilde{Y})],\zeta_{\mathcal{Y}}\rangle_{\mathcal{Y}} =  \langle \mathbb{E}[\Phi_{\mathcal{Y}}(Y)],\zeta_{\mathcal{Y}}\rangle_{\mathcal{Y}}$ for all $\zeta_{\mathcal{Y}}\in \mathcal{C}_{\mathcal{Y}}$ and, thus, $ \mathbb{E}[\Phi_{\mathcal{Y}}(Y)] =  \mathbb{E}[\Phi_{\mathcal{Y}}(\tilde{Y})]$. Since $c_{\mathcal{Y}}$ is characteristic, it follows that $Y$ and $\tilde{Y}$ are identically distributed: for all $B_{\mathcal{Y}}\in \mathcal{B}(\mathcal{Y})$, $\mathbb{P}(Y\in B_{\mathcal{Y}}) = \mathbb{P}_{B_{\mathcal{X}}}(Y\in B_{\mathcal{Y}}) = \mathbb{P}(Y\in B_{\mathcal{Y}}, X \in B_{\mathcal{X}})/\mathbb{P}(X\in B_{\mathcal{X}})$. Equivalently, $$\mathbb{P}\!\left(Y\in B_{\mathcal{Y}}\right)\mathbb{P}(X \in B_{\mathcal{X}}) = \mathbb{P}(Y\in B_{\mathcal{Y}}, X \in B_{\mathcal{X}}),$$
	for all $B_{\mathcal{X}}\in \mathcal{B}(X)$ such that $\mathbb{P}(X\in B_{\mathcal{X}}) >0$ and all $B_{\mathcal{Y}}\in \mathcal{B}(\mathcal{Y})$. The same result trivially holds if $\mathbb{P}(X\in B_{\mathcal{X}}) = 0$ and, ultimately, $X$ and $Y$ are independent.
\end{proof}

\begin{proof}[\textbf{Proof of Proposition \ref{pr3}}]
	Without generality loss, let us assume that $\mathbb{E}[Y] = 0$. If we further assume that, for all $y'\in \mathcal{Y}$, $\mathbb{E}[\langle Y,y'\rangle_{\mathcal{Y}}|X] = 0$ a.s., it follows from the law of iterated expectations that, for any $y'\in \mathcal{Y}$ and any $\zeta_{\mathcal{X}}\in \mathcal{C}_{\mathcal{X}}$,
	\begin{align*}
		0 &= \mathbb{E}\!\left[\zeta_{\mathcal{X}}(X)\mathbb{E}[\langle Y, y'\rangle_{\mathcal{Y}}|X]\right] =  \mathbb{E}\!\left[\zeta_{\mathcal{X}}(X)\langle Y, y'\rangle_{\mathcal{Y}}\right] =  \mathbb{E}\!\left[\langle \Phi_{\mathcal{X}}(X),\zeta_{\mathcal{X}}\rangle_{\mathcal{C}_{\mathcal{X}}} \langle Y, y'\rangle_{\mathcal{Y}}\right]\\
		&= \mathbb{E}\!\left[ \langle \Phi_{\mathcal{X}}(X) \otimes Y, \zeta_{\mathcal{X}}\otimes y' \rangle_{\mathcal{HS}}\right] =  \langle \mathbb{E}[\Phi_{\mathcal{X}}(X) \otimes Y], \zeta_{\mathcal{X}}\otimes y' \rangle_{\mathcal{HS}}.
	\end{align*}
	Let $h\in \mathcal{HS}$. Since sums of tensor products are dense on $\mathcal{HS}$ by construction, if follows that, for any $\epsilon>0$, $||h - \sum_{i=1}^k \zeta_{\mathcal{X}i} \otimes y'_i||_{\mathcal{HS}} \le \epsilon$ for some $\{\zeta_{\mathcal{X}i},y'_i\}_{i=1}^k$ and $k\in \mathbb{N}^+$. Thus, by the Cauchy-Schwarz inequality,
	\begin{align*}
		\left|\langle \mathbb{E}[\Phi_{\mathcal{X}}(X) \otimes Y], h \rangle_{\mathcal{HS}}\right| &\le \left|\left\langle \mathbb{E}[\Phi_{\mathcal{X}}(X) \otimes Y], h -\sum_{i=1}^k \zeta_{\mathcal{X}i} \otimes y'_i \right\rangle_{\mathcal{HS}}\right|\le \epsilon||\mathbb{E}[\Phi_{\mathcal{X}}(X) \otimes Y]||_{\mathcal{HS}}.
	\end{align*}
	Letting $\epsilon \rightarrow 0$, it follows that $\langle \mathbb{E}[\Phi_{\mathcal{X}}(X) \otimes Y], h \rangle_{\mathcal{HS}} = 0$. As this holds for any $h$, $\mathbb{E}[\Phi_{\mathcal{X}}(X) \otimes Y] = 0$. Analogously, $\mathbb{E}[\Phi_{\mathcal{X}c}(X) \otimes Y] = 0$ and, taking square norms, $\text{\normalfont HSIC}(X,Y,c_{\mathcal{X}}, \langle\star,\cdot\rangle_{\mathcal{Y}}) = 0$.
	
	Let us assume now that $\text{\normalfont HSIC}(X,Y,c_{\mathcal{X}}, \langle\star,\cdot\rangle_{\mathcal{Y}}) = 0$ or, equivalently, $\mathbb{E}[\Phi_{\mathcal{X}c}(X) \otimes Y] = 0$. By previous arguments, $\mathbb{E}[ \zeta_{\mathcal{X}}(X) \langle Y, y'\rangle_{\mathcal{Y}}]=0$ for all $\zeta_{\mathcal{X}},y'$. Let $\langle Y,y'\rangle_{\mathcal{Y}}^+$ and $\langle Y,y'\rangle_{\mathcal{Y}}^-$ denote the positive and negative parts of $\langle Y,y'\rangle_{\mathcal{Y}}$. If $\langle Y,y'\rangle = 0$ a.s., then $\mathbb{E}[\langle Y,y'\rangle|X] = 0$ a.s. holds trivially. Suppose $\langle Y, y'\rangle$ is not almost surely zero, so that neither its positive nor its negative part is almost surely zero. Furthermore, $0= \mathbb{E}\big[\langle Y,y'\rangle^+_{\mathcal{Y}}\big] - \mathbb{E}\big[\langle Y,y'\rangle^-_{\mathcal{Y}}\big]$ and, thus, $\mathbb{E}\big[\langle Y,y'\rangle^+_{\mathcal{Y}}\big] = \mathbb{E}\big[\langle Y,y'\rangle^-_{\mathcal{Y}}\big] >0$. It follows that
	\begin{equation}
		\frac{\mathbb{E}\big[ \zeta_{\mathcal{X}}(X) \langle Y, y'\rangle^+_{\mathcal{Y}}\big]}{\mathbb{E}\big[\langle Y,y'\rangle^+_{\mathcal{Y}}\big]} = \frac{\mathbb{E}\big[ \zeta_{\mathcal{X}}(X) \langle Y, y'\rangle^-_{\mathcal{Y}}\big]}{\mathbb{E}\big[\langle Y,y'\rangle^-_{\mathcal{Y}}\big]}, \label{Pr3.1}
	\end{equation}
	where both $\mathbb{E}\big[ \zeta_{\mathcal{X}}(X) \langle Y, y'\rangle^+_{\mathcal{Y}}\big]$ and $\mathbb{E}\big[ \zeta_{\mathcal{X}}(X) \langle Y, y'\rangle^-_{\mathcal{Y}}\big]$ exist and are finite as $\Phi_{\mathcal{X}}(X)\otimes Y$ is assumed to be Pettis integrable and, thus, 
	$$ \mathbb{E}\big[|\zeta_{\mathcal{X}}(X)|\langle Y, y'\rangle^+_{\mathcal{Y}}\big] + \mathbb{E}\big[|\zeta_{\mathcal{X}}(X)|\langle Y, y'\rangle^-_{\mathcal{Y}}\big] = \mathbb{E}\big[|\zeta_{\mathcal{X}}(X)\langle Y, y'\rangle_{\mathcal{Y}}|\big] <\infty.$$
	Let $\tilde{\mathbb{P}}^+_{y'}(B_{\mathcal{X}}) = \mathbb{E}\big[\langle Y, y'\rangle^+_{\mathcal{Y}}\ind{X\in B_{\mathcal{X}}}\big]/\mathbb{E}\big[\langle Y,y'\rangle^+_{\mathcal{Y}}\big]$  and $\tilde{\mathbb{P}}^-_{y'}(B_{\mathcal{X}}) = \mathbb{E}\big[\langle Y, y'\rangle^-_{\mathcal{Y}}\ind{X\in B_{\mathcal{X}}}\big]/\mathbb{E}\big[\langle Y,y'\rangle^-_{\mathcal{Y}}\big]$ for all $B_{\mathcal{X}}\in \mathcal{B}(\mathcal{X})$.  It is easy to verify that $\tilde{\mathbb{P}}^+_{y'}$ and $\tilde{\mathbb{P}}^-_{y'}$ are well-defined probability measures on $(\mathcal{X}, \mathcal{B}(\mathcal{X}))$. Moreover, letting $\tilde{X}_{y'}^+$ and $\tilde{X}_{y'}^-$ be two random variables whose probability laws are $\tilde{\mathbb{P}}^+_{y'}$ and $\tilde{\mathbb{P}}^-_{y'}$, respectively, it follows that, for any measurable function $l:\mathcal{X}\rightarrow \mathbb{R}$ such that $\mathbb{E}\big[l(X)\langle Y,y'\rangle^+_{\mathcal{Y}}\big]$ and $\mathbb{E}\big[l(X)\langle Y,y'\rangle^-_{\mathcal{Y}}\big]$ are well-defined and finite,
	$\mathbb{E}\big[l(\tilde{X}_{y'}^+)\big] = \mathbb{E}\big[l(X)\langle Y,y'\rangle^+_{\mathcal{Y}}\big]/\mathbb{E}\big[\langle Y,y'\rangle^+_{\mathcal{Y}}\big]$ and $\mathbb{E}\big[l(\tilde{X}_{y'}^-)\big] = \mathbb{E}\big[l(X)\langle Y,y'\rangle^-_{\mathcal{Y}}\big]/\mathbb{E}\big[\langle Y,y'\rangle^-_{\mathcal{Y}}\big]$.
	This follows inmediatly for simple, measurable functions and it can be extended by the monotone convergence theorem to positive functions and also to general measurable functions by decomposition into positive and negative parts. In particular, (\ref{Pr3.1}) can be re-expressed as
	\begin{equation}
		 \mathbb{E}\big[ \zeta_{\mathcal{X}}(\tilde{X}^+_{y'})\big] = \mathbb{E}\big[ \zeta_{\mathcal{X}}(\tilde{X}^-_{y'})\big] \ \ \text{for all }\zeta_{\mathcal{X}}\in \mathcal{C}_{\mathcal{X}},\label{Pr3.2}
	\end{equation}
	so $\mathbb{E}\big[ \langle \Phi_{\mathcal{X}}(\tilde{X}^+_{y'}), \zeta_{\mathcal{X}} \rangle_{\mathcal{C}_{\mathcal{X}}}\big]  = \mathbb{E}\big[ \langle \Phi_{\mathcal{X}}(\tilde{X}^-_{y'}), \zeta_{\mathcal{X}} \rangle_{\mathcal{C}_{\mathcal{X}}}\big]$
	holds for all $\zeta_{\mathcal{X}}\in \mathcal{C}_{\mathcal{X}}$. Since $\Phi_{\mathcal{X}}(X)\otimes Y$ is Pettis integrable, it follows that $\Phi_{\mathcal{X}}(X)\otimes Y \mathbbm{1}_{B_{\Omega}}$ is also Pettis integrable for all $B_{\Omega}\in \mathcal{F}$. Moreover, the map $\mathbb{E}\big[ \langle \Phi_{\mathcal{X}}(X), \cdot\rangle_{\mathcal{C}_{\mathcal{X}}}\langle Y, y\rangle_{\mathcal{Y}} \mathbbm{1}_{B_{\Omega}}\big]:\mathcal{C}_{\mathcal{X}} \rightarrow \mathbb{R}$ defined as $\mathbb{E}\big[ \langle \Phi_{\mathcal{X}}(X), \cdot\rangle_{\mathcal{C}_{\mathcal{X}}}\langle Y, y\rangle_{\mathcal{Y}}\mathbbm{1}_{B_{\Omega}} \big](\zeta_{\mathcal{X}}) = \mathbb{E}\big[ \zeta_{\mathcal{X}}(X) \langle Y, y'\rangle_{\mathcal{Y}}\mathbbm{1}_{B_{\Omega}}\big]$ is clearly linear and, given that
	$$ \sup_{|| \zeta_{\mathcal{X}}||_{\mathcal{C}_{\mathcal{X}}}=1}\!\left|\mathbb{E}\!\left[ \zeta_{\mathcal{X}}(X) \langle Y, y'\rangle_{\mathcal{Y}}\mathbbm{1}_{B_{\Omega}}\right]\right| =  \sup_{|| \zeta_{\mathcal{X}}||_{\mathcal{C}_{\mathcal{X}}}=1}\!\left|\left\langle \mathbb{E}\big[ \Phi_{\mathcal{X}}(X)\otimes Y \mathbbm{1}_{B_{\Omega}}\big], \zeta_{\mathcal{X}}\otimes y' \right\rangle_{\mathcal{HS}}\right|\le ||y'||_{\mathcal{Y}}\!\left|\left|\mathbb{E}\big[ \Phi_{\mathcal{X}}(X)\otimes Y \mathbbm{1}_{B_{\Omega}}\big]\right|\right|_{\mathcal{HS}},$$
	it is also bounded. By virtue of the Riesz representation theorem, there exists an unique $\mathbb{E}\big[\Phi_{\mathcal{X}}(X) \langle Y, y'\rangle_{\mathcal{Y}}\mathbbm{1}_{B_{\Omega}}\big]\in \mathcal{C}_{\mathcal{X}}$, for all $B_{\Omega}\in \mathcal{F}$, such that $\langle \mathbb{E}\big[\Phi_{\mathcal{X}}(X) \langle Y, y'\rangle_{\mathcal{Y}}\mathbbm{1}_{B_{\Omega}}\big], \zeta_{\mathcal{X}}\rangle_{\mathcal{X}} = \mathbb{E}\big[ \zeta_{\mathcal{X}}(X) \langle Y, y'\rangle_{\mathcal{Y}}\mathbbm{1}_{B_{\Omega}}\big]$ for all $\zeta_{\mathcal{X}}\in \mathcal{C}_{\mathcal{X}}$. In particular, $\langle Y, y'\rangle^+_{\mathcal{Y}} = \langle Y, y'\rangle_{\mathcal{Y}} \ind{\langle Y, y'\rangle_{\mathcal{Y}} \ge0}$ and $\langle Y, y'\rangle^-_{\mathcal{Y}} = \langle Y, y'\rangle_{\mathcal{Y}} \ind{\langle Y, y'\rangle_{\mathcal{Y}} \le0}$ and, therefore,
	$\mathbb{E}\big[ \Phi_{\mathcal{X}}(\tilde{X}^+_{y'})\big] = \mathbb{E}\big[\Phi_{\mathcal{X}}(X) \langle Y, y'\rangle^+_{\mathcal{Y}}\big]/\mathbb{E}\big[\langle Y,y'\rangle^+_{\mathcal{Y}}\big]$ and  $\mathbb{E}\big[ \Phi_{\mathcal{X}}(\tilde{X}^-_{y'})\big] = \mathbb{E}\big[\Phi_{\mathcal{X}}(X) \langle Y, y'\rangle^-_{\mathcal{Y}}\big]/\mathbb{E}\big[\langle Y,y'\rangle^-_{\mathcal{Y}}\big]$ are well-defined, $ \Phi_{\mathcal{X}}(\tilde{X}^+_{y'})$ and $ \Phi_{\mathcal{X}}(\tilde{X}^-_{y'})$ are Pettis integrable and, furthermore, (\ref{Pr3.2}) is equivalent to
	$$\mathbb{E}\big[ \Phi_{\mathcal{X}}(\tilde{X}^+_{y'})\big]  = \mathbb{E}\big[ \Phi_{\mathcal{X}}(\tilde{X}^-_{y'})\big].$$
	Since $c_{\mathcal{X}}$ is characteristic, $\tilde{\mathbb{P}}^+_{y'} = \tilde{\mathbb{P}}^-_{y'}$ and, thus,
	$ \mathbb{E}\big[\langle Y, y'\rangle_{\mathcal{Y}}\ind{X\in B_{\mathcal{X}}}\big] = 0$ for any measurable set $B_{\mathcal{X}}\in \mathcal{B}(\mathcal{X})$. By definition, and uniqueness, of conditional expectations on $\sigma(X)$ , $\mathbb{E}[\langle Y, y'\rangle_{\mathcal{Y}}|X] = 0$ a.s. 
	
	If $\mathbb{E}[Y|X]$ is well-defined as a Pettis integral-based conditional expectation, $\mathbb{E}\big[\mathbb{E}[Y|X]\ind{X\in B_{\mathcal{X}}}\big] = \mathbb{E}\big[Y\ind{X\in B_{\mathcal{X}}}\big]$ holds in Pettis's sense for all $B_{\mathcal{X}}\in \mathcal{B}(\mathcal{X})$. Furthermore, it follows from the definition of Pettis integrals that
	$$\mathbb{E}\big[\langle\mathbb{E}[Y|X], y'\rangle_{\mathcal{Y}}\ind{X\in B_{\mathcal{X}}}\big] = \big\langle \mathbb{E}\big[\mathbb{E}[Y|X]\ind{X\in B_{\mathcal{X}}}\big], y'\big\rangle_{\mathcal{Y}} =  \big\langle \mathbb{E}\big[Y\ind{X\in B_{\mathcal{X}}}\big], y'\big\rangle_{\mathcal{Y}} = \mathbb{E}\big[\langle Y, y'\rangle_{\mathcal{Y}}\ind{X\in B_{\mathcal{X}}}\big],$$
	and, by the uniqueness of $\mathbb{E}[\langle Y, y'\rangle_{\mathcal{Y}}|X]$, we deduce that 
	and $\mathbb{E}[\langle Y, y'\rangle_{\mathcal{Y}}|X] = \langle \mathbb{E}[Y|X], y'\rangle_{\mathcal{Y}}$ a.s. Then, we have proved that $\text{\normalfont HSIC}(X,Y,c_{\mathcal{X}}, \langle\star,\cdot\rangle_{\mathcal{Y}}) = 0$ if and only if $\langle \mathbb{E}[Y|X], y'\rangle_{\mathcal{Y}} = 0$ a.s. for all $y'\in \mathcal{Y}$ and, ultimately, if and only if $\mathbb{E}[Y|X] = 0$ a.s.
\end{proof}

\section{Proofs of results in Section 3}
\label{app2}

\subsection{Preliminary results}
\label{app2.1}

\begin{lemma}\label{Lm2}
	Let $\mathcal{S}_1,\mathcal{S}_2$ be Polish spaces and let $S_1,S_2$ be random variables taking values in $\mathcal{S}_1$ and $\mathcal{S}_2$, respectively. Then, there exists a measurable function $G:\mathcal{S}_1\times [0,1]\rightarrow \mathcal{S}_2$ and random variable $U$, independent of $S_1$ and uniformly distributed in $[0,1]$, such that $(S_1,S_2) = (S_1,G(S_1,U))$ a.s.
\end{lemma}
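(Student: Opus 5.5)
The plan is to reduce to the real-line case via a Borel isomorphism, and then build $G$ from the conditional quantile function of $S_2$ given $S_1$. Since $\mathcal{S}_2$ is Polish, Kuratowski's theorem gives a Borel set $B\subset[0,1]$ and a Borel isomorphism $\psi:\mathcal{S}_2\rightarrow B$ with measurable inverse (Purves's Theorem, as already used in the proof of Proposition \ref{pr1}, yields the measurability of $\psi^{-1}$). Setting $T_2=\psi(S_2)\in[0,1]$, it suffices to find a measurable $\tilde G:\mathcal{S}_1\times[0,1]\rightarrow[0,1]$ and a uniform $U$ independent of $S_1$ with $T_2=\tilde G(S_1,U)$ a.s. We then put $G=\psi^{-1}\circ\tilde G$, after modifying $\tilde G$ on the measurable set $\{\tilde G\notin B\}$ by sending it to a fixed point of $B$. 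This set is null under the law of $(S_1,U)$, so the modification does not affect the a.s.\ equality.

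Next, since $\mathcal{S}_1\times[0,1]$ is Polish, there is a regular conditional distribution $\kappa(s,\cdot)$ of $T_2$ given $S_1=s$. I define the conditional distribution function $F(s,t)=\kappa(s,[0,t])$, which is jointly measurable because it is measurable in $s$ and right-continuous in $t$. I then define the quantile map $\tilde G(s,u)=\inf\{t\in[0,1]: F(s,t)\ge u\}$, which is measurable because $\{\tilde G(s,u)\le t\}=\{u\le F(s,t)\}$. For any uniform $V$ independent of $S_1$, the pair $(S_1,\tilde G(S_1,V))$ has the same law as $(S_1,T_2)$.

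To obtain almost sure equality rather than equality in law, I use the randomized probability integral transform. Let $V$ be uniform on $[0,1]$ and independent of $(S_1,S_2)$; if $(\Omega,\mathcal{F},\mathbb{P})$ does not already carry such a variable, the space is extended by taking a product with $([0,1],\mathcal{B},\lambda)$, which is the implicit convention in this kind of statement. Define
$$U=F(S_1,T_2-)+V\big(F(S_1,T_2)-F(S_1,T_2-)\big).$$
Conditionally on $S_1=s$, the classical one-dimensional argument shows that $U$ is uniform. Since this conditional law does not depend on $s$, $U$ is independent of $S_1$. Moreover $\tilde G(s,U)=T_2$ almost surely conditionally on $S_1=s$, because the generalized inverse recovers $t$ whenever $u\in(F(s,t-),F(s,t)]$, and this fails only for $t$ in a $\kappa(s,\cdot)$-null set where $F(s,\cdot)$ is flat to the left. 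Integrating over the law of $S_1$ gives $T_2=\tilde G(S_1,U)$ a.s.

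I expect the main obstacle to be the measure-theoretic bookkeeping rather than any inequality. Specifically, three points need care: the joint measurability of $F$, of $F(s,t-)$, and of $\tilde G$; the verification that the exceptional set, where the quantile inversion fails, is null jointly over $(s,U)$ and not just for each fixed $s$; and the enlargement of the probability space to accommodate $V$. For the exceptional set, I would first try to use Fubini on the product of the law of $S_1$ with $\kappa$.
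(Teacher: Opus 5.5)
Your proof is correct, but it takes a different route from the paper. The paper constructs nothing: it fixes points $s_1,s_2$ and applies Theorem 1 in \cite{Skorokhod1977} to the $\mathcal{S}_1\times\mathcal{S}_2$-valued variables $(S_1,s_2)$ and $(s_1,S_2)$. This gives $G'$ and a uniform $U$, independent of $(S_1,s_2)$, with $(s_1,S_2)=G'((S_1,s_2),U)$ a.s. It then projects, setting $G(s,u)=\pi_{\mathcal{S}_2}(G'((s,s_2),u))$; the detour through the product space only places both variables in one Polish space before the citation is invoked.

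You instead reprove the representation from scratch. You map $\mathcal{S}_2$ by a Borel isomorphism onto a Borel subset of $[0,1]$, build the conditional quantile map from a regular conditional distribution, and use the randomized probability integral transform to upgrade equality in law to almost sure equality. The bookkeeping points you flag all go through:
\begin{itemize}
\item $F(s,t)$ and $F(s,t-)=\kappa(s,[0,t))$ are jointly measurable by right and left continuity in $t$.
\item $\{\tilde G(s,u)\le t\}=\{u\le F(s,t)\}$ gives the measurability of $\tilde G$.
\item The law of $(S_1,T_2,V)$ is $\mathbb{P}_{S_1}\otimes\kappa\otimes\lambda$. Hence both the independence of $U$ from $S_1$ and the nullity of $\{\tilde G(S_1,U)\neq T_2\}$ follow by integrating the per-$s$ statements against the law of $S_1$.
\end{itemize}

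What your route buys:
\begin{itemize}
\item It is self-contained and exhibits $U$ explicitly as a measurable function of $(S_1,S_2,V)$.
\item It shows that only $\mathcal{S}_2$ needs to be standard Borel. The regular conditional distribution exists because $T_2$ is $[0,1]$-valued, not because $\mathcal{S}_1\times[0,1]$ is Polish, so the Polishness of $\mathcal{S}_1$ is never used.
\item It makes explicit the enlargement of $(\Omega,\mathcal{F},\mathbb{P})$. This is genuinely necessary: on a two-point probability space no uniform random variable exists. The paper leaves it implicit, both in the citation and in the later use of an extra independent uniform $U'$ in the proof of Proposition \ref{pr4}.
\end{itemize}
The paper's route buys brevity at the price of opacity.
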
	
\begin{proof}[\textbf{Proof of Lemma \ref{Lm2}}]
	Let $\mathcal{S} = \mathcal{S}_1\times \mathcal{S}_2$, which is also a Polish space. For some fixed $s_1\in \mathcal{S}_1$ and $s_2\in \mathcal{S}_2$, let $S = (s_1,S_2)$ and $S'= (S_1,s_2)$. By Theorem 1 in \cite{Skorokhod1977}, there is some $G'$ and a uniformly distributed random variable $U$, independent of $S'$, such that $S = G'(S',U)$ a.s. Let $\pi_{\mathcal{S}_2}(s_1',s_2') = s_2'$ and $G(s'_1,u) = \pi_{\mathcal{S}_2}(G'((s'_1,s_2),u))$ for all $s'_1\in \mathcal{S}_1$, $s'_2\in \mathcal{S}_2$ and $u\in [0,1]$. Then, $S_2 = \pi_{\mathcal{S}_2}(G'(S',U)) = G(S_1,U)$ a.s. and, thus, $(S_1,S_2) = (S_1,G(S',U))$ a.s. Finally, note that $U$ is independent of $S_1$ since the latter is a measurable function of $S'$.
\end{proof}

\begin{proposition}
	\label{pr4}
	Assume $\{X_i\}_{i\in \mathbb{Z}}$ and $\{Y_i\}_{i\in \mathbb{Z}}$ are $\mathbb{P}$ $\text{\normalfont NED}$ on $\{\eta_i\}_{i\in \mathbb{Z}}$, where the latter is defined as in Section \ref{sec3.1}. If (T1) and (T2) hold,  $\{\Phi_{\mathcal{X}c}(X_i)\}_{i\in \mathbb{Z}}$, $\{\Phi_{\mathcal{Y}c}(Y_i)\}_{i\in \mathbb{Z}}$ and $\{\Phi_{\mathcal{X}c}(X_i)\otimes\Phi_{\mathcal{Y}c}(Y_i) \}_{i\in \mathbb{Z}}$ are  $L^1$ $\text{\normalfont NED}$ on $\{\eta_i\}_{i\in \mathbb{Z}}$ with respect to the norms in $\mathcal{C}_{\mathcal{X}}$, $\mathcal{C}_{\mathcal{Y}}$ and $\mathcal{HS}$, with coefficients $\{\tilde{a}_j\}_{j=0}^\infty$ defined as
	$$ \tilde{a}_j =M_{\tilde{a}}\!\left(P_{j}^{\frac{1+\delta}{2+\delta}} + \varrho_{}(\epsilon_{j})\right),$$
	for some positive constant $M_{\tilde{a}}$.
\end{proposition}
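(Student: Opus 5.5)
Since the centring constants cancel in differences, I will work directly with $\Phi_{\mathcal{X}}(X_0)-\Phi_{\mathcal{X}}(\tilde X_0^{(j)})$, where $\tilde X_0^{(j)}=\Pi_{\mathcal{X}j}(\eta_{-j},\ldots,\eta_j)$, and define the approximating functions as $\Phi_{\mathcal{X}c}(\tilde X^{(j)}_0)$ (analogously for $Y$ and the tensor product). These are measurable functions of $(\eta_{-j},\ldots,\eta_j)$ by measurability of $\Phi_{\mathcal{X}}$, $\Phi_{\mathcal{Y}}$. The naive choice fails because $\tilde X^{(j)}_0$ has no moment control. So I would truncate: on the good event $A_j=\{d_{\mathcal{X}}(X_0,\tilde X^{(j)}_0)\le\epsilon_j,\ d_{\mathcal{Y}}(Y_0,\tilde Y^{(j)}_0)\le\epsilon_j\}$ use $\Phi_{\mathcal{X}c}(\tilde X^{(j)}_0)$, and on $A_j^c$ use $0$. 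Here $A_j$ is itself $\sigma(\eta_{-j},\ldots,\eta_j)$-measurable only if it is expressed through $\tilde X^{(j)}_0$. I therefore expect to need a fix here, e.g. replacing the approximant by $\mathbb{E}[\Phi_{\mathcal{X}c}(X_0)\mid \eta_{-j},\ldots,\eta_j]$, which is the best $L^1$-type approximant up to a factor $2$. This would let me bound the NED coefficient by $2\,\mathbb{E}\|\Phi_{\mathcal{X}c}(X_0)-W\|$ for any $W$ measurable in the window, including the truncated candidate. Lemma \ref{Lm2} might be used to realise such measurable representations.

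The error then splits into two parts. On $A_j$, $\|\Phi_{\mathcal{X}}(X_0)-\Phi_{\mathcal{X}}(\tilde X^{(j)}_0)\|^2=d_{\Phi_{\mathcal{X}}}(X_0,\tilde X_0^{(j)})\le\sup_{x:d_{\mathcal{X}}(X_0,x)\le\epsilon_j}d_{\Phi_{\mathcal{X}}}(X_0,x)$. Taking expectations and applying Jensen gives a bound $\varrho(\epsilon_j)$ by (T1). On $A_j^c$ the error is $\|\Phi_{\mathcal{X}c}(X_0)\|\mathbbm{1}_{A_j^c}$. By Hölder with exponents $(2+\delta)/(1)$ and $(2+\delta)/(1+\delta)$ this is at most $\mathbb{E}[\|\Phi_{\mathcal{X}c}(X_0)\|^{2+\delta}]^{1/(2+\delta)}(2P_j)^{(1+\delta)/(2+\delta)}$. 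The moment is finite by (T2) since $\|\Phi_{\mathcal{X}}(X)\|=c_{\mathcal{X}}^{1/2}(X,X)$, and $\mathbb{P}(A_j^c)\le 2P_j$ by (\ref{5}). This gives $\tilde a_j$ for the marginals.

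For the tensor product I write
$$a\otimes b-\tilde a\otimes\tilde b=(a-\tilde a)\otimes b+\tilde a\otimes(b-\tilde b),$$
with $\|u\otimes v\|_{\mathcal{HS}}=\|u\|\|v\|$. On $A_j^c$ the bound is $\|\Phi_{\mathcal{X}c}(X_0)\|\|\Phi_{\mathcal{Y}c}(Y_0)\|\mathbbm{1}_{A_j^c}$, handled by Hölder as before using the joint $(2+\delta)/2$ moment in (T2). The expanded product contains cross terms like $\|\Phi_{\mathcal{X}}(X_0)\|\,\mathbb{E}\|\Phi_{\mathcal{Y}}\|$, which are covered by the separate moments.

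\textbf{Main obstacle.} The cross terms on $A_j$ take the form $\|\Phi_{\mathcal{X}}(X_0)-\Phi_{\mathcal{X}}(\tilde X_0)\|\,\|\Phi_{\mathcal{Y}c}(Y_0)\|$. A product of a $\varrho$-controlled factor and an unbounded factor is not bounded by $\varrho(\epsilon_j)$ directly. I would split further on $\{\|\Phi_{\mathcal{Y}c}(Y_0)\|\le K\}$, or apply Hölder pairing the $L^2$ bound $\varrho$ with the $L^2$ norm of $\Phi_{\mathcal{Y}c}$. The latter gives $\varrho(\epsilon_j)\,\mathbb{E}[\|\Phi_{\mathcal{Y}c}\|^2]^{1/2}$, which is finite under (T2), so the whole bound reduces to constant times $\varrho(\epsilon_j)$. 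Here the constant $M_{\tilde a}$ collects the moments. The term $\|\tilde a\|\,\|b-\tilde b\|$ is treated symmetrically, using $\|\tilde a\|\le\|a\|+\|a-\tilde a\|$ on $A_j$.
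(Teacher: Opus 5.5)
There is a genuine gap, and it sits exactly where you flagged it. Your truncated candidate $W=\Phi_{\mathcal{X}c}(\tilde X^{(j)}_0)\mathbbm{1}_{A_j}$ depends on $X_0$ through $A_j$, so it is not $\sigma(\eta_{-j},\ldots,\eta_j)$-measurable. Switching to $\mathbb{E}[\Phi_{\mathcal{X}c}(X_0)\mid\eta_{-j},\ldots,\eta_j]$ does not remove this requirement. The bound $\mathbb{E}\|F-\mathbb{E}[F\mid\mathcal{G}]\|\le 2\,\mathbb{E}\|F-W\|$ holds only for $\mathcal{G}$-measurable $W$; for $W=F$ the right-hand side would vanish. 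Hence the step ``on $A_j^c$ the error is $\|\Phi_{\mathcal{X}c}(X_0)\|\mathbbm{1}_{A_j^c}$'' is unjustified. If you instead use an admissible window-measurable approximant such as $\Phi_{\mathcal{X}c}(\tilde X^{(j)}_0)$, you are back to needing moments of $\|\Phi_{\mathcal{X}}(\tilde X^{(j)}_0)\|$ on $A_j^c$, and nothing controls those. Saying that Lemma~\ref{Lm2} ``might'' realise such representations does not close the gap, since $\mathbbm{1}_{A_j}$ has no window-measurable version in general.

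The missing idea is a coupling, which is what the paper uses Lemma~\ref{Lm2} for. Write $\{\eta_i\}_{i\in\mathbb{Z}}\equiv(\{\eta_i\}_{|i|\le j},G(\{\eta_i\}_{|i|\le j},U))$ and set $\{\eta'_i\}_{i\in\mathbb{Z}}\equiv(\{\eta_i\}_{|i|\le j},G(\{\eta_i\}_{|i|\le j},U'))$, with $U'$ uniform and independent of everything else.
\begin{itemize}
\item The induced $(X_0',Y_0')$ has the same law as $(X_0,Y_0)$, so (T2) gives it $(2+\delta)$ moments.
\item Because $U'$ is independent noise, $\mathbb{E}[F\mid\eta_{-j},\ldots,\eta_j]=\mathbb{E}[F\mid\eta_{-j},\ldots,\eta_j,U']$, and $F(X_0',Y_0')$ is measurable with respect to the latter $\sigma$-algebra.
\item Together these yield $\mathbb{E}\|F-\mathbb{E}[F\mid\eta_{-j},\ldots,\eta_j]\|\le 2\,\mathbb{E}\|F(X_0,Y_0)-F(X_0',Y_0')\|$.
\end{itemize}
Now only an unconditional expectation has to be bounded, so you may split on any event, measurable or not. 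Take the bad event where $X_0$ or $X_0'$ (resp.\ $Y_0$ or $Y_0'$) is farther than $\epsilon_j$ from $\Pi_{\mathcal{X}j}$ (resp.\ $\Pi_{\mathcal{Y}j}$); it has probability at most $4P_j$. H\"older on that event gives the $P_j^{(1+\delta)/(2+\delta)}$ term, using the moments of both copies. On the good event, $\Pi_{\mathcal{X}j}$ serves only as a pivot, via $d_{\Phi_{\mathcal{X}}}(X_0,X_0')\le 2d_{\Phi_{\mathcal{X}}}(X_0,\Pi_{\mathcal{X}j})+2d_{\Phi_{\mathcal{X}}}(X_0',\Pi_{\mathcal{X}j})$ together with (T1). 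Your remaining estimates (Jensen with (T1), Cauchy--Schwarz for the tensor cross terms) then go through with $\Phi_{\mathcal{X}}(\tilde X^{(j)}_0)$ replaced by $\Phi_{\mathcal{X}}(X_0')$.

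A repair closer to your own plan is a window-measurable truncation on $\{\|\Phi_{\mathcal{X}}(\tilde X^{(j)}_0)\|\le K_j\}$ with $K_j\asymp P_j^{-1/(2+\delta)}$. This does recover the stated rate for the marginals. For the tensor product, however, the truncation must keep the approximant's norm $O(K_j)$ rather than $O(K_j^2)$, and the bookkeeping is considerably heavier.
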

\begin{proof}[\textbf{Proof of Proposition \ref{pr4}}]
	Let $\{\eta_i\}_{i\in \mathbb{Z}}\equiv (\{\eta_i\}_{i\in \{-j,\ldots, j\}}, \{\eta_i\}_{i\in \mathbb{Z}\backslash\{-j,\ldots, j\}})$, where $\equiv $ denotes isomorphic identification and $\{\eta_i\}_{i\in \{-j,\ldots, j\}}$ and $\{\eta_i\}_{i\in \mathbb{Z}\backslash\{-j,\ldots, j\}}$ take values in the corresponding Polish spaces. By endowing them with any metric inducing its topology and by virtue of Lemma \ref{Lm2}, $\{\eta_i\}_{i\in \mathbb{Z}}$ can be also identified with
	$(\{\eta_i\}_{i\in \{-j,\ldots, j\}}, G(\{\eta_i\}_{i\in \{-j,\ldots, j\}}, U))$ for some random variable $U$ wich is uniformly distributed in $[0,1]$ and independent of $\{\eta_i\}_{i\in \{-j,\ldots, j\}}$. Let $U'$ be another uniformly distributed random variable independent of $\{\eta_i\}_{i\in \mathbb{Z}}$ and $U$, let $\{\eta'_i\}_{i\in \mathbb{Z}}\equiv (\{\eta_i\}_{i\in \{-j,\ldots, j\}}, G(\{\eta_i\}_{i\in \{-j,\ldots, j\}}, U'))$ (which is identically distributed to $\{\eta_i\}_{i\in \mathbb{Z}}$) and let $(X'_0,Y'_0)$ be the analogue of $(X_0,Y_0)$ based on $\{\eta'_i\}_{i\in \mathbb{Z}}$. Then,
	\begin{align}
		&\mathbb{E}\!\left[|| \Phi_{\mathcal{X}c}(X_0)\otimes\Phi_{\mathcal{Y}c}(Y_0) - \mathbb{E}[\Phi_{\mathcal{X}c}(X_0)\otimes\Phi_{\mathcal{Y}c}(Y_0)|\eta_{-j},\ldots,\eta_j]||_{\mathcal{HS}}\right]\nonumber\\
		\le\ &\mathbb{E}\!\left[|| \Phi_{\mathcal{X}c}(X_0)\otimes\Phi_{\mathcal{Y}c}(Y_0) - \Phi_{\mathcal{X}c}(X'_0)\otimes\Phi_{\mathcal{Y}c}(Y'_0)||_{\mathcal{HS}}\right]\nonumber\\
		&+\mathbb{E}\!\left[|| \Phi_{\mathcal{X}c}(X'_0)\otimes\Phi_{\mathcal{Y}c}(Y'_0) - \mathbb{E}[\Phi_{\mathcal{X}c}(X_0)\otimes\Phi_{\mathcal{Y}c}(Y_0)|\eta_{-j},\ldots,\eta_j]||_{\mathcal{HS}}\right]\nonumber\\
		\le\ &\mathbb{E}\!\left[|| \Phi_{\mathcal{X}c}(X_0)\otimes\Phi_{\mathcal{Y}c}(Y_0) - \Phi_{\mathcal{X}c}(X'_0)\otimes\Phi_{\mathcal{Y}c}(Y'_0)||_{\mathcal{HS}}\right]\nonumber\\
		&+\mathbb{E}\!\left[|| \mathbb{E}[\Phi_{\mathcal{X}c}(X'_0)\otimes\Phi_{\mathcal{Y}c}(Y'_0) - \Phi_{\mathcal{X}c}(X_0)\otimes\Phi_{\mathcal{Y}c}(Y_0)|\eta_{-j},\ldots,\eta_j,U']||_{\mathcal{HS}}\right]\nonumber\\
		\le\ &\mathbb{E}\!\left[|| \Phi_{\mathcal{X}c}(X_0)\otimes\Phi_{\mathcal{Y}c}(Y_0) - \Phi_{\mathcal{X}c}(X'_0)\otimes\Phi_{\mathcal{Y}c}(Y'_0)||_{\mathcal{HS}}\right]\nonumber\\
		&+\mathbb{E}\!\left[\mathbb{E}[|| \Phi_{\mathcal{X}c}(X'_0)\otimes\Phi_{\mathcal{Y}c}(Y'_0) - \Phi_{\mathcal{X}c}(X_0)\otimes\Phi_{\mathcal{Y}c}(Y_0)||_{\mathcal{HS}} |\eta_{-j},\ldots,\eta_j,U']\right]\nonumber\\
		\le\ &2\mathbb{E}\!\left[|| \Phi_{\mathcal{X}c}(X_0)\otimes\Phi_{\mathcal{Y}c}(Y_0) - \Phi_{\mathcal{X}c}(X'_0)\otimes\Phi_{\mathcal{Y}c}(Y'_0)||_{\mathcal{HS}}\right], \label{Pr1.1}
	\end{align}
	where we have implicitly used that, if (T1) and (T2) hold, $ \Phi_{\mathcal{X}c}(X_0)\otimes\Phi_{\mathcal{Y}c}(Y_0)$ and $ \Phi_{\mathcal{X}c}(X'_0)\otimes\Phi_{\mathcal{Y}c}(Y'_0)$ are Bochner integrable (see Remark \ref{remark2}) and, therefore, $\mathcal{HS}$-valued conditional expectations of these variables exist.
	
	Let $A = A_{\mathcal{X}}\cup A_{\mathcal{Y}}$ where $A_{\mathcal{X}} = \{d_{\mathcal{X}}(X_0,\Pi_{\mathcal{X}j}(\eta_{-j},\ldots, \eta_j))>\epsilon_{j}\}\cup \{d_{\mathcal{X}}(X'_0,\Pi_{\mathcal{X}j}(\eta_{-j},\ldots, \eta_j))>\epsilon_{j}\}$, with $\epsilon_{j}$ defined as in (\ref{5}), and $A_{\mathcal{Y}} = \{ d_{\mathcal{Y}}(Y_0,\Pi_{\mathcal{Y}j}(\eta_{-j},\ldots, \eta_j))>\epsilon_{j}\}\cup \{d_{\mathcal{Y}}(Y'_0,\Pi_{\mathcal{Y}j}(\eta_{-j},\ldots, \eta_j))>\epsilon_{j}\}$. 
		Since $\{\eta'_i\}_{i\in \mathbb{Z}}$ and $\{\eta_i\}_{i\in \mathbb{Z}}$ are identically distributed and, by construction, $\eta'_i = \eta_i$ for all $i\in \{-j,\ldots,j\}$, it follows that
		$\mathbb{P}(A)\le \mathbb{P}(A_{\mathcal{X}}) + \mathbb{P}(A_{\mathcal{Y}})\le 4P_{j}$, where $P_{j}$ is also defined as in (\ref{5}). By Minkowski's inequality and the same distribution of $(X'_0,Y'_0)$ and $(X_0,Y_0)$,
	\begin{align}
		&\mathbb{E}\!\left[|| \Phi_{\mathcal{X}c}(X_0)\otimes\Phi_{\mathcal{Y}c}(Y_0) - \Phi_{\mathcal{X}c}(X'_0)\otimes\Phi_{\mathcal{Y}c}(Y'_0)||_{\mathcal{HS}}\ind{A}\right]
		\le2\mathbb{E}\!\left[|| \Phi_{\mathcal{X}c}(X)\otimes\Phi_{\mathcal{Y}c}(Y) ||^{2+\delta}_{\mathcal{HS}}\right]^\frac{1}{2+\delta}\mathbb{P}(A)^{\frac{1+\delta}{2+\delta}},\label{Pr1.2}
	\end{align}
	where the $(2+\delta)$th order moment of $\Phi_{\mathcal{X}c}(X)\otimes\Phi_{\mathcal{Y}c}(Y)$ is finite by (T2). Furthermore, letting $A^c$ denote the complement of $A$, it follows from the Cauchy-Schwarz inequality that
	\begin{align} 
		&\mathbb{E}\!\left[|| \Phi_{\mathcal{X}c}(X_0)\otimes\Phi_{\mathcal{Y}c}(Y_0) - \Phi_{\mathcal{X}c}(X'_0)\otimes\Phi_{\mathcal{Y}c}(Y'_0)||_{\mathcal{HS}}\ind{A^c}\right]\nonumber\\
		\le\
		&\mathbb{E}\!\left[||\Phi_{\mathcal{X}c}(X_0)||_{\mathcal{C}_{\mathcal{X}}}d^{\frac{1}{2}}_{\Phi_{\mathcal{Y}}}(Y_0,Y_0')\ind{A^c}\right] + \mathbb{E}\!\left[||\Phi_{\mathcal{Y}c}(Y'_0)||_{\mathcal{C}_{\mathcal{Y}}}d^{\frac{1}{2}}_{\Phi_{\mathcal{X}}}(X_0,X_0')\ind{A^c}\right]\nonumber\\ 
		\le\ &\mathbb{E}\!\left[||\Phi_{\mathcal{X}c}(X_0)||^2_{\mathcal{C}_{\mathcal{X}}}\right]^\frac{1}{2}\mathbb{E}\!\left[d_{\Phi_{\mathcal{Y}}}(Y_0,Y_0')\ind{A^c}\right]^\frac{1}{2} + \mathbb{E}\!\left[||\Phi_{\mathcal{Y}c}(Y_0)||^2_{\mathcal{C}_{\mathcal{Y}}}\right]^\frac{1}{2}\mathbb{E}\!\left[d_{\Phi_{\mathcal{X}}}(X_0,X_0')\ind{A^c}\right]^\frac{1}{2}.\label{Pr1.3}
	\end{align}
	As $d_{\Phi_{\mathcal{Y}}}$ and $d_{\Phi_{\mathcal{X}}}$ are squared distances, $d_{\Phi_{\mathcal{X}}}(X_0,X_0') \le 2d_{\Phi_{\mathcal{X}}}(X_0,\Pi_{\mathcal{X}j}(\eta_{-j},\ldots, \eta_j)) + 2d_{\Phi_{\mathcal{X}}}(X'_0,\Pi_{\mathcal{X}j}(\eta_{-j},\ldots, \eta_j))$ and $d_{\Phi_{\mathcal{X}}}(Y_0,Y_0') \le 2d_{\Phi_{\mathcal{Y}}}(Y_0,\Pi_{\mathcal{Y}j}(\eta_{-j},\ldots, \eta_j)) + 2d_{\Phi_{\mathcal{Y}}}(Y'_0,\Pi_{\mathcal{Y}j}(\eta_{-j},\ldots, \eta_j))$.
	 Thus, we have, by (T1),
	 \begin{align}
	 	&\mathbb{E}\!\left[d_{\Phi_{\mathcal{Y}}}(Y_0,Y_0')\ind{A^c}\right]^\frac{1}{2} + \mathbb{E}\!\left[d_{\Phi_{\mathcal{X}}}(X_0,X_0')\ind{A^c}\right]^\frac{1}{2}\nonumber\\
	 	 \le\ &2\left(\mathbb{E}\!\left[d_{\Phi_{\mathcal{Y}}}(Y_0,\Pi_{\mathcal{Y}j}(\eta_{-j},\ldots, \eta_j))\ind{A^c}\right]^\frac{1}{2} + \mathbb{E}\!\left[d_{\Phi_{\mathcal{X}}}(X_0,\Pi_{\mathcal{X}j}(\eta_{-j},\ldots, \eta_j)) \ind{A^c}\right]^\frac{1}{2}\right)\nonumber\\
	 	 \le\ &4\varrho_{}(\epsilon_{j}).\label{Pr1.4}
	 \end{align}
	 
	Putting (\ref{Pr1.1})--(\ref{Pr1.4}) together, there is some constant $M_1>0$ such that
	$$\mathbb{E}\!\left[|| \Phi_{\mathcal{X}c}(X_0)\otimes\Phi_{\mathcal{Y}c}(Y_0) - \mathbb{E}[\Phi_{\mathcal{X}c}(X_0)\otimes\Phi_{\mathcal{Y}c}(Y_0)|\eta_{-j},\ldots,\eta_j]||_{\mathcal{HS}}\right]\\
		\le M_1\left(P_{j}^{\frac{1+\delta}{2+\delta}} + \varrho_{}(\epsilon_{j})\right).$$
	By similar arguments, there are some constants $M_2,M_3>0$ such that
	$$\mathbb{E}\!\left[||  \Phi_{\mathcal{X}c}(X_0) - \mathbb{E}[\Phi_{\mathcal{X}c}(X_0)|\eta_{-j},\ldots,\eta_j]||_{\mathcal{C}_{\mathcal{X}}}\right] \le 2\mathbb{E}\!\left[d^{\frac{1}{2}}_{\Phi_{\mathcal{X}}}(X_0,X_0')\right]
		\le M_2\left(P_{j}^{\frac{1+\delta}{2+\delta}} + \varrho_{}(\epsilon_{j})\right),$$
	and, furthermore,
	$$\mathbb{E}\!\left[||  \Phi_{\mathcal{Y}c}(Y_0) - \mathbb{E}[\Phi_{\mathcal{Y}c}(Y_0)|\eta_{-j},\ldots,\eta_j]||_{\mathcal{C}_{\mathcal{Y}}}\right] \le 2\mathbb{E}\!\left[d^{\frac{1}{2}}_{\Phi_{\mathcal{Y}}}(Y_0,Y_0')\right]
	\le M_3\left(P_{j}^{\frac{1+\delta}{2+\delta}} + \varrho_{}(\epsilon_{j})\right).$$
	Thus, $\{\Phi_{\mathcal{X}c}(X_i)\}_{i\in \mathbb{Z}}$, $\{\Phi_{\mathcal{Y}c}(Y_i)\}_{i\in \mathbb{Z}}$ and $\{\Phi_{\mathcal{X}c}(X_i)\otimes\Phi_{\mathcal{Y}c}(Y_i) \}_{i\in \mathbb{Z}}$ are $L^1$ $\text{\normalfont NED}$ on $\{\eta_i\}_{i\in \mathbb{Z}}$, with respect to the usual distances in $\mathcal{C}_{\mathcal{X}}$, $\mathcal{C}_{\mathcal{Y}}$ and $\mathcal{HS}$, considering conditional expectations on $(\eta_{-j},\ldots, \eta_j)$ as the corresponding $\Pi_{\mathcal{C}_{\mathcal{X}}j}, \Pi_{\mathcal{C}_{\mathcal{Y}}j}$ and $\Pi_{\mathcal{HS}j}$. The latter are well-defined as functions on $\text{E}^{j}$ since $\text{E}^j$ is a Polish space and conditional expectations can be interpreted as Bochner integrals with respect to (regular) conditional probability measures.
\end{proof}

\begin{theorem}\label{Th11}
	Let $\mathcal{S}$ be a separable Hilbert space and let $\{S_i\}_{i \in \mathbb{Z}}\in \mathcal{S}^{\mathbb{Z}}$ be $L^1$ $\text{\normalfont NED}$ on $\{\eta_i\}_{i\in \mathbb{Z}}$ with coefficients $\{a_j\}_{j=0}^\infty$. Moreover, for all $n\in\mathbb{N}^+$ let $\{r_{in}\}_{i=1}^{n}$ be a collection of real Gaussian random variables, independent of $\{\eta_{i}\}_{i\in \mathbb{Z}}$, such that $\mathbb{E}[r_{in}] = 0$, $\text{Cov}(r_{in},r_{jn}) = \rho(|i-j|/l_n)$, where $l_n \rightarrow \infty$ as $n\rightarrow \infty$, $l_n =o(n)$ and $\rho:[0,\infty) \rightarrow [0,\infty)$ verifies $\rho(0) = 1$, $\rho(\epsilon)\rightarrow 1$ as $\epsilon \rightarrow0$ and $\sum_{k=1}^n \rho(k/l_n) = O(l_n)$, and let $\bar{r}_{\cdot n}$ denote the sample average of $\{r_{in}\}_{i=1}^{n}$. Assume that, for some $\delta>0$, $\sum_{j=1}^\infty (a_j^{{\delta}/{(1+\delta)}} + ja_j + \beta_j^{{\delta}/{(2+\delta)}} + j\beta_j)<\infty$ and $\mathbb{E}[||S_0||_{\mathcal{S}}^{2+\delta}]<\infty$. Then, 
	$$  \frac{1}{\sqrt{n}}\sum_{i=1}^{n}(r_{in} - \bar{r}_{\cdot n})S_i \xrightarrow{\mathcal{D^*}} N_{\mathcal{S}} \ \text{in }\mathbb{P},$$
	where $N_{\mathcal{S}}$ is a centred Gaussian distribution with covariance operator $\Gamma$ defined by
	$$\langle s,\Gamma(s')\rangle_{\mathcal{S}} = \sum\limits_{i\in \mathbb{Z}} \mathbb{E}\!\left[\langle S_0 - \mathbb{E}[S_0], s \rangle_{S} \langle S_i - \mathbb{E}[S_i], s' \rangle_{S}  \right],$$
	for all $s,s' \in \mathcal{S}$.  
\end{theorem}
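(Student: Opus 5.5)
Conditionally on $\{\eta_i\}_{i\in\mathbb{Z}}$, the bootstrap sum $T_n^* = n^{-1/2}\sum_{i=1}^n (r_{in}-\bar r_{\cdot n})S_i$ is a linear combination of jointly Gaussian multipliers. It is therefore exactly a centred Gaussian element of $\mathcal{S}$ with conditional covariance operator
$$\hat\Gamma_n = \frac1n\sum_{i,j=1}^n \mathrm{Cov}(r_{in}-\bar r_{\cdot n}, r_{jn}-\bar r_{\cdot n})\, S_i\otimes S_j .$$
Since $\sum_i (r_{in}-\bar r_{\cdot n})=0$, we may replace $S_i$ by $S_i-\mathbb{E}[S_0]$, so we assume without loss of generality that $\mathbb{E}[S_0]=0$. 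Weak convergence of centred Gaussian measures on a separable Hilbert space to $N_{\mathcal{S}}$ holds if and only if the covariance operators converge in trace (nuclear) norm; convergence in Hilbert--Schmidt norm together with convergence of the traces suffices. By the subsequence characterization in Definition \ref{Def5}, it therefore suffices to show $\|\hat\Gamma_n-\Gamma\|_{1}\to 0$ in probability, or the weaker pair $\|\hat\Gamma_n-\Gamma\|_{HS}\to0$ and $\mathrm{tr}\,\hat\Gamma_n\to\mathrm{tr}\,\Gamma$ in probability. Both follow from $L^1$ convergence of $\hat\Gamma_n$ to $\Gamma$ in nuclear norm, or from convergence of bias plus variance.

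\textbf{Removing the mean correction.} Write $\mathrm{Cov}(r_{in}-\bar r,r_{jn}-\bar r)=\rho(|i-j|/l_n) - c_{in}-c_{jn}+c_n$ with $c_{in} = n^{-1}\sum_k\rho(|i-k|/l_n)=O(l_n/n)$. The correction terms contribute operators like $\big(n^{-1/2}\sum_i S_i\big)\otimes\big(n^{-1/2}\sum_j c_{jn}S_j\big)$. The first factor is $O_{\mathbb{P}}(1)$ by the CLT of \cite{Dehling2015}. The second has second moment $O(l_n/n)$ by the covariance summability derived below, so these terms vanish because $l_n=o(n)$. The main term is the lag-window (Bartlett-type) estimator $\tilde\Gamma_n=n^{-1}\sum_{i,j}\rho(|i-j|/l_n)S_i\otimes S_j$, a kernel estimator of the long-run covariance.

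\textbf{Bias.} We have $\mathbb{E}\tilde\Gamma_n=\sum_{|h|<n}(1-|h|/n)\rho(|h|/l_n)\,\mathbb{E}[S_0\otimes S_h]$. We need $\sum_h\|\mathbb{E}[S_0\otimes S_h]\|_1<\infty$, which I obtain via the coupling of Lemma \ref{Lm2} as in Proposition \ref{pr4}. Approximate $S_h$ by $\mathbb{E}[S_h\mid\eta_{h-m},\dots,\eta_{h+m}]$ with $m=\lfloor h/3\rfloor$, use $\beta$-mixing decoupling (Berbee coupling) for the approximated parts, and a H\"older bound with exponents tied to the $(2+\delta)$-moment. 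This gives $\|\mathbb{E}[S_0\otimes S_h]\|_1\lesssim a_{\lfloor h/3\rfloor}^{\delta/(1+\delta)}+\beta_{\lfloor h/3\rfloor}^{\delta/(2+\delta)}$, which is summable by hypothesis. Dominated convergence, using $\rho(|h|/l_n)\to1$ and $|h|/n\to0$ for fixed $h$ together with boundedness of $\rho$, then gives $\mathbb{E}\tilde\Gamma_n\to\Gamma$ in nuclear norm. Boundedness of $\rho$ is implied by $\mathrm{Cov}$ being a correlation: $\rho\le1$.

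\textbf{Variance: the main obstacle.} We need $\mathbb{E}\|\tilde\Gamma_n-\mathbb{E}\tilde\Gamma_n\|\to0$, and here the $ja_j$ and $j\beta_j$ conditions enter. My first attempt is as follows. Truncate $S_i$ to $S_i^{(m)}=\mathbb{E}[S_i\mid\eta_{i-m},\dots,\eta_{i+m}]\mathbbm{1}\{\|\cdot\|\le K\}$. The truncation error in $\tilde\Gamma_n$ is controlled in $L^1$ by $n^{-1}\sum_{i,j}\rho\,\mathbb{E}\|S_i\|\|S_j-S_j^{(m)}\|=O(l_n)\cdot(a_m+\text{tail})$, which is too crude. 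Instead, I would write $\tilde\Gamma_n=\sum_{|h|<n}\rho(|h|/l_n)\,\hat C_h$ with sample autocovariance operators $\hat C_h=n^{-1}\sum_i S_i\otimes S_{i+h}$. I then show $\mathbb{E}\|\hat C_h-\mathbb{E}\hat C_h\|_{HS}\le \kappa_h\, n^{-1/2}$ up to terms summable in $h$, since each $\{S_i\otimes S_{i+h}\}_i$ is $L^{1/2}$-type NED with coefficients $a_{j-h}$ for $j>h$ and mixing rate $\beta_{j-h}$. The factor $j$ in $\sum j a_j$, $\sum j\beta_j$ absorbs this shift in $h$. Summing over $|h|\lesssim l_n$ gives $O(l_n/\sqrt n)$. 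This is not $o(1)$ in general, so I would refine with a fourth-moment-free blocking argument: split the sum over $h$ into $|h|\le H$, handled by the ergodic theorem as in Theorem \ref{Th1} so that $\hat C_h\to C_h$ a.s. for each fixed $h$, and $|h|>H$, whose expected nuclear norm is bounded by $\sum_{|h|>H}\mathbb{E}\|\,\cdot\,\|$ using coupling at distance $h/3$ uniformly in $n$. The latter tail is small by the $ja_j$ and $j\beta_j$ summability. Making this tail bound uniform in $n$ while only $2+\delta$ moments are available is the step I expect to be hardest. Finally, combining bias and variance gives $\hat\Gamma_n\to\Gamma$ in probability in nuclear norm, hence $T_n^*\xrightarrow{\mathcal D^*}N_{\mathcal S}$ in $\mathbb{P}$.
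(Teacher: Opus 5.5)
\textbf{Verdict: there is a genuine gap, in the variance step.} Your reduction to trace-norm convergence of the conditional covariance of a Gaussian element is sound. So are your removal of the $\bar r_{\cdot n}$ correction and your bias argument (nuclear-norm summability of $\mathbb{E}[S_0\otimes S_h]$ plus dominated convergence with $0\le\rho\le 1$). The problem is the variance step, and the repair you propose does not work.

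\textbf{Why the lag-by-lag bound fails.} Coupling at distance $h/3$ controls only the mean $\mathbb{E}[S_0\otimes S_h]$. It does not control the fluctuation $\hat C_h-\mathbb{E}\hat C_h=n^{-1}\sum_i\big(S_i\otimes S_{i+h}-\mathbb{E}[S_0\otimes S_h]\big)$. That fluctuation is governed by dependence across $i$ at a fixed lag, and it does not decay in $h$. For large $h$ the factors $S_i$ and $S_{i+h}$ are nearly independent, and a centred average of $n$ such products is still of size $n^{-1/2}$. Even that bound needs fourth moments, which you do not have. Since $\rho(|h|/l_n)$ is close to $1$ for all $|h|\le\epsilon l_n$, any bound obtained through the triangle inequality over lags is of order $l_n/\sqrt{n}$ at best. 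This holds whether you apply it on $|h|>H$ or on all $h$, and $l_n/\sqrt{n}$ need not vanish when only $l_n=o(n)$ is assumed. The conditions $\sum_j j a_j<\infty$ and $\sum_j j\beta_j<\infty$ cannot change this.

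\textbf{What is needed instead.} The order $l_n/n$ appears only if you compute the variance of the whole weighted sum at once. That is, for each unit $s$ you bound the fourfold sum
\[
\frac{1}{n^2}\sum_{i,k,i',k'}\rho(k/l_n)\rho(k'/l_n)\,\mathrm{Cov}\big(\langle s,S_i\rangle_{\mathcal{S}}\langle s,S_{i+k}\rangle_{\mathcal{S}},\langle s,S_{i'}\rangle_{\mathcal{S}}\langle s,S_{i'+k'}\rangle_{\mathcal{S}}\big).
\]
Each covariance is bounded through $\beta$ and $a$ evaluated at a third of $p=\max\{k,k',|i'-i|\}$, and configurations are counted using $\sum_{k\le n}\rho(k/l_n)=O(l_n)$. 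This is exactly where the paper uses $\sum_j j(a_j+\beta_j)<\infty$.

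\textbf{Truncation, and how to remove its error.} These product-covariance bounds need bounded variables. The paper therefore truncates at a growing level: $\tilde S_{in}=S_i\min\{1,M_n/\|S_i\|_{\mathcal{S}}\}$, recentred to $\tilde S_{inc}$, with $M_n\to\infty$ and $M_n^4l_n=o(n)$. The fourfold sum is then $O(M_n^4l_n/n)\to0$. You discarded truncation because its effect on $\tilde\Gamma_n$, bounded via the triangle inequality, costs a factor $l_n$. The way out is to remove the truncation error from the bootstrap sum rather than from the covariance operator. Set $D_{in}=S_i-\tilde S_{inc}$, which is mean zero and stationary in $i$. Then
\[
\mathbb{E}\Big[\mathbb{E}\Big[\Big\|\frac{1}{\sqrt{n}}\sum_{i=1}^n r_{in}D_{in}\Big\|^2_{\mathcal{S}}\,\Big|\,\{\eta_i\}_{i\in\mathbb{Z}}\Big]\Big]=\frac{1}{n}\sum_{i,j=1}^n\rho(|i-j|/l_n)\,\mathbb{E}\langle D_{in},D_{jn}\rangle_{\mathcal{S}}\le\mathbb{E}\|D_{0n}\|^2_{\mathcal{S}}+2\sum_{k\ge1}\big|\mathbb{E}\langle D_{0n},D_{kn}\rangle_{\mathcal{S}}\big|.
\]
Because the expectation is taken before summing over $i,j$, the covariance inequality for $L^1$ NED functionals of a $\beta$-mixing process applies. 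It bounds the right-hand side by a convergent series times positive powers of $\mathbb{E}[\|D_{0n}\|^{2+\delta}_{\mathcal{S}}]$, which tends to $0$. A conditional Markov--Slutsky step then lets you run the whole covariance argument, bias and variance, with the bounded $\tilde S_{inc}$.
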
	

\begin{proof}[\textbf{Proof of Theorem \ref{Th11}}] 

Denoting the sample average of  $\{S_i\}_{i \in \mathbb{Z}}$ by $\bar{S}$, we have
	\begin{equation}\label{Th11.1}
		\frac{1}{\sqrt{n}}\sum_{i=1}^{n}(r_{in} - \bar{r}_{\cdot n})S_i =  \frac{1}{\sqrt{n}}\sum_{i=1}^{n}r_{in} (S_i - \bar{S}) = \frac{1}{\sqrt{n}}\sum_{i=1}^{n}r_{in} (S_i - \mathbb{E}[S_i]) + \sqrt{n}\bar{r}_{\cdot n}(\mathbb{E}[S_i]-\bar{S}).
	\end{equation}
Thus, without loss of generality, we can assume that $\mathbb{E}[S_i] = 0$. By Theorem 1.1. in \cite{Dehling2015}, $\sqrt{n}\bar{S} = O_{\mathbb{P}}(1)$ and, by definition of $\{r_{in}\}_{i=1}^{n}$, $\mathbb{E}[\bar{r}_{\cdot n}] =0 $ and, 
$$\text{Var}(\bar{r}_{\cdot n}) = \frac{1}{n^2}\sum_{i,j=1}^n \rho(|i-j|/l_n) = \frac{1}{n} + \frac{2}{n^2}\sum_{k=1}^{n-1}(n-k)\rho(k/l_n)\le \frac{1}{n} + \frac{2}{n}\sum_{k=1}^{n-1}\rho(k/l_n) = O\left(\frac{1}{n}\right) + O\left(\frac{l_n}{n}\right) \rightarrow 0.$$
Then, by Markov's inequality we have, for all $\epsilon>0$,
\begin{equation}\label{Th11.2}
	\mathbb{P}\!\left(\sqrt{n}||\bar{r}_{\cdot n}\bar{S}||_{\mathcal{S}}>\epsilon|\{\eta_i\}_{i\in \mathbb{Z}} \right)\le \frac{n||\bar{S}||^2_{\mathcal{S}}}{\epsilon^2}\text{Var}(\bar{r}_{\cdot n}) = o_{\mathbb{P}}(1).
\end{equation}

Let $M_n\rightarrow \infty$ such that $M_n^4l_n = o(n)$ (for example, we can take $M_n^4$ as the inverse of the square root of $l_n/n$). Define $\tilde{S}_{in} = S_i\min\{1,M_n/||S_i||_{\mathcal{S}}\}$ so that $||\tilde{S}_{in}||_{\mathcal{S}}\le \min\{M_n,||S_i||_{\mathcal{S}}\}$ and $||\tilde{S}_{in} - S_i||_{\mathcal{S}}\le \max\{||S_i||_{\mathcal{S}}-M_n,0\}\le ||S_i||_{\mathcal{S}}$. In addition, let $\tilde{S}_{inc} = \tilde{S}_{in} - \mathbb{E}[\tilde{S}_{in}]$. Then, by the dominated convergence theorem, $\mathbb{E}[||\tilde{S}_{inc} - S_i||^{2+\delta}_{\mathcal{S}}] = o(1)$. This artifical variables are also used in the proof of Theorem 1.2. in \cite{Dehling2015} and it holds that, for all $n$, $\{\tilde{S}_{inc}\}_{i=1}^\infty$ is $L^1$ $\text{\normalfont NED}$ on $\{\eta_i\}_{i\in \mathbb{Z}}$ with coefficients $\{a_j\}_{j=0}^\infty$, which trivially implies that $\{S_i-\tilde{S}_{inc}\}_{i=1}^\infty$ is also $L^1$ $\text{\normalfont NED}$ with coefficients $\{2a_j\}_{j=0}^\infty$. Moreover, it follows from the summability assumptions on $\{a_j\}_{j=0}^\infty$ and $\{\beta_j\}_{j=0}^\infty$, as well as the first part of Lemma 2.4. in \cite{Dehling2015}, that
\begin{align}
	\mathbb{E}\!\left[\mathbb{E}\!\left[\left|\left|\frac{1}{\sqrt{n}}\sum_{i=1}^{n}r_{in} (S_i - \tilde{S}_{inc})\right|\right|^2_{\mathcal{S}} | \{\eta_i\}_{i\in \mathbb{Z}}\right]\right] &= \frac{1}{n} \sum_{i,j=1}^n \rho(|i-j|/l_n)\mathbb{E}\!\left[\langle S_i - \tilde{S}_{inc},S_j - \tilde{S}_{jnc}\rangle_{\mathcal{S}} \right]\nonumber \\
	&\le \mathbb{E}\!\left[||S_i - \tilde{S}_{inc}||^2_{\mathcal{S}} \right] + 2\sum_{k=1}^{n-1}\!\left|\mathbb{E}\!\left[\langle S_0 - \tilde{S}_{0nc},S_k - \tilde{S}_{knc}\rangle_{\mathcal{S}} \right]\right|\nonumber\\
	&\le  \mathbb{E}\!\left[||S_i - \tilde{S}_{inc}||^2_{\mathcal{S}} \right] + 2\sum_{k=1}^{n-1} 2\mathbb{E}\!\left[||S_0 - \tilde{S}_{0nc}||^{2+\delta}_{\mathcal{S}} \right]^\frac{2}{2+\delta}\beta^{\frac{\delta}{2+\delta}}_{\lfloor k/3\rfloor}\nonumber \\
	&\ \ \ +2\sum_{k=1}^{n-1} 4\mathbb{E}\!\left[||S_0 - \tilde{S}_{0nc}||^{2+\delta}_{\mathcal{S}} \right]^\frac{1}{1+\delta}(2a_{\lfloor k/3\rfloor})^{\frac{\delta}{1+\delta}},\nonumber\\
	&=o(1),\nonumber
\end{align}
where we have implicitly used that $||\rho||_{\infty}\le 1$. This implies, by Markov's inequality, that
\begin{equation}\label{Th11.3}
	\mathbb{P}\!\left(\left|\left|\frac{1}{\sqrt{n}}\sum_{i=1}^{n}r_{in} (S_i - \tilde{S}_{inc})\right|\right|_{\mathcal{S}}>\epsilon|\{\eta_i\}_{i\in \mathbb{Z}} \right) = o_{\mathbb{P}}(1).
\end{equation}

Let now us show that 
\begin{equation}\label{Th11.4}
	\Gamma_n = \mathbb{E}\!\left[\left(\frac{1}{\sqrt{n}}\sum_{i=1}^{n}r_{in}\tilde{S}_{inc}\otimes \frac{1}{\sqrt{n}}\sum_{i=1}^{n}r_{in}\tilde{S}_{inc}\right) | \{\eta_i\}_{i\in \mathbb{Z}} \right] = \frac{1}{n}\sum_{i=1}^n  \tilde{S}_{inc}\otimes\tilde{S}_{inc} + \frac{2}{n}\sum_{k=1}^{n-1}\sum_{i=1}^{n-k} \rho(k/l_n) \tilde{S}_{inc}\otimes \tilde{S}_{(i+k)nc} \rightarrow \Gamma,
\end{equation}	
in trace norm, in probability. To do so, since both $\Gamma_n$ and $\Gamma$ are trace class covariance operators, it suffices to show, by arguments based on almost sure convergence along subsequences, that $\langle s,\Gamma_n(s)\rangle_{\mathcal{S}}\rightarrow\langle s,\Gamma(s)\rangle_{\mathcal{S}}$ in probability for all $s\in \mathcal{S}$ such that $||s||_{\mathcal{S}} = 1$ and $\text{tr}(\Gamma_n)\rightarrow\text{tr}(\Gamma)$ in probability, where $\text{tr}$ denotes the trace of a linear operator. 

If $||s||_{\mathcal{S}} = 1$, it is trivial to show that $\{\langle \tilde{S}_{inc},s\rangle_{\mathcal{S}}\}_{i=1}^\infty$ is $L^1$ $\text{\normalfont NED}$ with the same coefficients as $\{\tilde{S}_{inc}\}_{i=1}^\infty$. Thus, it follows from by the first part of Lemma 2.4. in \cite{Dehling2015} and the Cauchy-Schwarz inequality that, for all $k\ge 1$,
\begin{align*}
	\frac{n-k}{n}\rho(k/l_n)\ind{k\le n-1}\mathbb{E}\!\left[\langle\tilde{S}_{0nc},s\rangle_{\mathcal{S}} \langle \tilde{S}_{knc},s\rangle_{\mathcal{S}}\right] &\le 2\mathbb{E}\!\left[||\tilde{S}_{0nc}||^{2+\delta}_{\mathcal{S}} \right]^\frac{2}{2+\delta}\beta^{\frac{\delta}{2+\delta}}_{\lfloor k/3\rfloor} + 4\mathbb{E}\!\left[||\tilde{S}_{0nc}||^{2+\delta}_{\mathcal{S}} \right]^\frac{1}{1+\delta}a_{\lfloor k/3\rfloor}^{\frac{\delta}{2+\delta}}\nonumber\\
	&\le 2\mathbb{E}\!\left[||S_0||^{2+\delta}_{\mathcal{S}} \right]^\frac{2}{2+\delta}\beta^{\frac{\delta}{2+\delta}}_{\lfloor k/3\rfloor} + 4\mathbb{E}\!\left[||S_0||^{2+\delta}_{\mathcal{S}} \right]^\frac{1}{1+\delta}a_{\lfloor k/3\rfloor}^{\frac{\delta}{2+\delta}},
\end{align*}
which is summable. Then, by the dominated convergence theorem, 
\begin{equation}
	\langle s,\mathbb{E}[\Gamma_n](s)\rangle_{\mathcal{S}}= \mathbb{E}\!\left[\langle \tilde{S}_{0nc},s\rangle_{\mathcal{S}}^2 \right]+ 2\sum_{k=1}^{\infty}\frac{n-k}{n} \rho(k/l_n)\ind{k\le n-1}\mathbb{E}\!\left[\langle\tilde{S}_{0nc},s\rangle_{\mathcal{S}} \langle \tilde{S}_{knc},s\rangle_{\mathcal{S}}\right] \rightarrow\langle s,\Gamma(s)\rangle_{\mathcal{S}}, \label{Th11.6}
\end{equation}
In addition,
\begin{align}
 &\mathbb{E}\!\left[\langle s,(\Gamma_n-\mathbb{E}[\Gamma_n])(s)\rangle_{\mathcal{S}}^2\right]\nonumber\\
  =\ &\mathbb{E}\!\left[\left|\frac{1}{n}\sum_{i=1}^n  \left(\langle \tilde{S}_{inc},s\rangle^2_{\mathcal{S}}-\mathbb{E}\!\left[\langle \tilde{S}_{inc},s\rangle^2_{\mathcal{S}}\right]\right)+ \frac{2}{n}\sum_{k=1}^{n-1}\sum_{i=1}^{n-k} \rho(k/l_n)\left( \langle \tilde{S}_{inc},s\rangle_{\mathcal{S}}\langle \tilde{S}_{(i+k)nc},s\rangle_{\mathcal{S}} - \mathbb{E}\!\left[\langle \tilde{S}_{inc},s\rangle_{\mathcal{S}}\langle \tilde{S}_{(i+k)nc},s\rangle_{\mathcal{S}}\right]\right)\right|^2\right]\nonumber\\
  \le\ &\frac{2}{n^2}\sum_{i,j=1}^n \left|\text{Cov}\!\left(\langle s,\tilde{S}_{inc}\rangle_{\mathcal{S}}^2,\langle s,\tilde{S}_{jnc}\rangle_{\mathcal{S}}^2\right)\right|\nonumber \\ 
 &+ \frac{4}{n^2}\sum_{k=1}^{n-1}\sum_{i=1}^{n-k}\sum_{{k'}=1}^{n-1}\sum_{{i'}=1}^{n-{k'}}\rho(k/l_n)\rho({k'}/l_n) \left|\text{Cov}\!\left(\langle s,\tilde{S}_{inc}\rangle_{\mathcal{S}}\langle s,\tilde{S}_{(i+k)nc}\rangle_{\mathcal{S}},\langle s,\tilde{S}_{{i'}nc}\rangle_{\mathcal{S}}\langle s,\tilde{S}_{({i'}+{k'})nc}\rangle_{\mathcal{S}}\right)\right|\nonumber\\
  \le\ &\frac{2}{n} \text{Var}\!\left(\langle s,\tilde{S}_{inc}\rangle_{\mathcal{S}}^2\right) + \frac{2}{n}\sum_{k=1}^{n-1}\frac{(n-k)}{n}\!\left|\text{Cov}\!\left(\langle s,\tilde{S}_{0nc}\rangle_{\mathcal{S}}^2,\langle s,\tilde{S}_{knc}\rangle_{\mathcal{S}}^2\right)\right|\nonumber \\ 
 &+ \frac{8}{n^2}\sum_{k=1}^{n-1}\sum_{i=1}^{n-k}\sum_{l=1}^{n-i}\sum_{j=0}^{n-l-i}\rho(k/l_n)\rho(l/l_n) \left|\text{Cov}\!\left(\langle s,\tilde{S}_{inc}\rangle_{\mathcal{S}}\langle s,\tilde{S}_{(i+k)nc}\rangle_{\mathcal{S}},\langle s,\tilde{S}_{(i+j)nc}\rangle_{\mathcal{S}}\langle s,\tilde{S}_{(i+j+l)nc}\rangle_{\mathcal{S}}\right)\right|. \label{Th11.7}
\end{align}
As the map $t\rightarrow t^2$, restricted to $|t|\le 2M_n^2$, is Lipschitz continuous with constant $4M_n^2$, it follows from Lemma 2.2. in \cite{Dehling2015} that $\{\langle \tilde{S}_{inc},s\rangle^2_{\mathcal{S}}\}_{i=1}^\infty$ is $L^1$ $\text{\normalfont NED}$ with coefficients $\{4M_n^2a_j\}_{j=0}^\infty$. Then, by the second part of Lemma 2.4. in \cite{Dehling2015},
\begin{align}
	\frac{2}{n^2}\sum_{i,j=1}^n \left|\text{Cov}\!\left(\langle s,\tilde{S}_{inc}\rangle_{\mathcal{S}}^2,\langle s,\tilde{S}_{jnc}\rangle_{\mathcal{S}}^2\right)\right| &= \frac{2}{n} \text{Var}\!\left(\langle s,\tilde{S}_{inc}\rangle_{\mathcal{S}}^2\right) + \frac{2}{n}\sum_{k=1}^{n-1}\frac{(n-k)}{n}\!\left|\text{Cov}\!\left(\langle s,\tilde{S}_{0nc}\rangle_{\mathcal{S}}^2,\langle s,\tilde{S}_{knc}\rangle_{\mathcal{S}}^2\right)\right|\nonumber\\
	 &\le \frac{2}{n} \mathbb{E}\!\left[\langle s,\tilde{S}_{inc}\rangle_{\mathcal{S}}^2\right] + \frac{2}{n}\sum_{k=1}^{n-1}\!\left(2M_n^2\beta_{\lfloor k/3\rfloor}+4M_n(4M_n^2a_{\lfloor k/3\rfloor})\right)\nonumber\\
	 &= O\left(\frac{1}{n}\right) + O\left(\frac{M_n^3}{n}\right). \label{Th11.8}
\end{align}
Given indices $i,k,l\ge 1$ and $j\ge 0$, it follows from Lemma 2.4. in \cite{Dehling2015} and the proof of Lemma 2.21. in \cite{Borovkova2001} (see also the proof of Lemma 2.24. there), taking $\delta= \infty$ in the latter result, that
\begin{align*}
	&\left|\text{Cov}\!\left(\langle s,\tilde{S}_{inc}\rangle_{\mathcal{S}}\langle s,\tilde{S}_{(i+k)nc}\rangle_{\mathcal{S}},\langle s,\tilde{S}_{(i+j)nc}\rangle_{\mathcal{S}}\langle s,\tilde{S}_{(i+j+l)nc}\rangle_{\mathcal{S}}\right)\right|\nonumber \\
	\le\ &\left|\mathbb{E}\!\left[\langle s,\tilde{S}_{inc}\rangle_{\mathcal{S}}\langle s,\tilde{S}_{(i+k)nc}\rangle_{\mathcal{S}},\langle s,\tilde{S}_{(i+j)nc}\rangle_{\mathcal{S}}\langle s,\tilde{S}_{(i+j+l)nc}\rangle_{\mathcal{S}}\right]\right|\nonumber \\
 &+ \left|\mathbb{E}\!\left[\langle s,\tilde{S}_{inc}\rangle_{\mathcal{S}}\langle s,\tilde{S}_{(i+k)nc}\rangle_{\mathcal{S}}\right]\mathbb{E}\!\left[\langle s,\tilde{S}_{(i+j)nc}\rangle_{\mathcal{S}}\langle s,\tilde{S}_{(i+j+l)nc}\rangle_{\mathcal{S}}\right]\right|\nonumber \\
 \le\ &\left(6\beta_{\lfloor k/3\rfloor}(2M_n)^4+8a_{\lfloor k/3\rfloor}(2M_n)^3\right) + \left(2(2M_n)^2\beta_{\lfloor k/3\rfloor}+4(2M_n)a_{\lfloor k/3\rfloor}\right)4M_n^2.
\end{align*}
Analogously,
\begin{align*}
	&\left|\text{Cov}\!\left(\langle s,\tilde{S}_{inc}\rangle_{\mathcal{S}}\langle s,\tilde{S}_{(i+k)nc}\rangle_{\mathcal{S}},\langle s,\tilde{S}_{(i+j)nc}\rangle_{\mathcal{S}}\langle s,\tilde{S}_{(i+j+l)nc}\rangle_{\mathcal{S}}\right)\right|\nonumber \\
	\le\ &\left(6\beta_{\lfloor l/3\rfloor}(2M_n)^4+8a_{\lfloor l/3\rfloor}(2M_n)^3\right) + 4M_n^2\left(2(2M_n)^2\beta_{\lfloor l/3\rfloor}+4(2M_n)a_{\lfloor l/3\rfloor}\right),
\end{align*}
and, finally, if $j>k$,
$$\left|\text{Cov}\!\left(\langle s,\tilde{S}_{inc}\rangle_{\mathcal{S}}\langle s,\tilde{S}_{(i+k)nc}\rangle_{\mathcal{S}},\langle s,\tilde{S}_{(i+j)nc}\rangle_{\mathcal{S}}\langle s,\tilde{S}_{(i+j+l)nc}\rangle_{\mathcal{S}}\right)\right|\le 4\beta_{\lfloor j/3\rfloor}(2M_n)^4+8a_{\lfloor j/3\rfloor}(2M_n)^3.$$

Let $p = \max\{k,j,l\}$. Then, there exists some constant $\tilde{M}_1>0$ such that,
\begin{align}
	&\frac{8}{n^2}\sum_{k=1}^{n-1}\sum_{i=1}^{n-k}\sum_{l=1}^{n-i}\sum_{j=0}^{n-l-i}\rho(k/l_n)\rho(l/l_n) \left|\text{Cov}\!\left(\langle s,\tilde{S}_{inc}\rangle_{\mathcal{S}}\langle s,\tilde{S}_{(i+k)nc}\rangle_{\mathcal{S}},\langle s,\tilde{S}_{(i+j)nc}\rangle_{\mathcal{S}}\langle s,\tilde{S}_{(i+j+l)nc}\rangle_{\mathcal{S}}\right)\right|\ind{p=k}\nonumber\\
	\le\ &\frac{\tilde{M}_1M_n^4}{n^2}\sum_{k=1}^{n-1}\sum_{i=1}^{n-k}\sum_{l=1}^{k}\sum_{j=1}^{k}\rho(k/l_n)\rho(l/l_n) \left(\beta_{\lfloor k/3\rfloor}+a_{\lfloor k/3\rfloor}\right)\nonumber\\
	\le\ &\frac{\tilde{M}_1M_n^4}{n}\!\left(\sum_{l=1}^{n}\rho(l/l_n)\right)\sum_{k=1}^{n-1}(k+1) (\beta_{\lfloor k/3\rfloor}+a_{\lfloor k/3\rfloor})\nonumber \\
	=\ &O\left(\frac{M_n^4l_n}{n}\right). \label{Th11.9}
\end{align}
The same asymptotic bound holds in the case $p = l$. Similarly, there is some constant $\tilde{M}_2>0$ such that,
\begin{align}
	&\frac{8}{n^2}\sum_{k=1}^{n-1}\sum_{i=1}^{n-k}\sum_{l=1}^{n-i}\sum_{j=0}^{n-l-i}\rho(k/l_n)\rho(l/l_n) \left|\text{Cov}\!\left(\langle s,\tilde{S}_{inc}\rangle_{\mathcal{S}}\langle s,\tilde{S}_{(i+k)nc}\rangle_{\mathcal{S}},\langle s,\tilde{S}_{(i+j)nc}\rangle_{\mathcal{S}}\langle s,\tilde{S}_{(i+j+l)nc}\rangle_{\mathcal{S}}\right)\right|\ind{j>l}\ind{j>k}\nonumber\\
	\le\ &\frac{\tilde{M}_1M_n^4}{n^2}\sum_{j=2}^{n-2}\sum_{k=1}^{l-1}\sum_{i=1}^{n-k}\sum_{j=1}^{l-1}\rho(k/l_n)\rho(l/l_n) (\beta_{\lfloor j/3\rfloor}+a_{\lfloor j/3\rfloor})\nonumber\\
	\le\ &\frac{\tilde{M}_1M_n^4}{n}\sum_{j=1}^{n} \left(\sum_{k=1}^{j}\rho(k/l_n)\right)^2(\beta_{\lfloor j/3\rfloor}+a_{\lfloor j/3\rfloor})\nonumber\\
	\le\ &\frac{\tilde{M}_2M_n^4}{n}\sum_{j=1}^{n} l_{j}^2(\beta_{\lfloor j/3\rfloor}+a_{\lfloor j/3\rfloor})\nonumber \\
	 =\ &O\left(\frac{M_n^4l_n}{n}\right). \label{Th11.10}
\end{align}
Putting (\ref{Th11.7})--(\ref{Th11.10}) together, we conclude that $\mathbb{E}\big[\langle s,(\Gamma_n-\mathbb{E}[\Gamma_n])(s)\rangle_{\mathcal{S}}^2\big] \rightarrow 0$ which, in combination with Markov's inequality, Slutsky's theorem and (\ref{Th11.6}), implies that $\langle s,\Gamma_n(s)\rangle_{\mathcal{S}}\rightarrow\langle s,\Gamma(s)\rangle_{\mathcal{S}}$ in probability.

It follows from previous arguments, based on Lemma 2.4. in \cite{Dehling2015}, that
$$\mathbb{E}\!\left[\text{tr}(\Gamma_n)\right]=\text{tr}(\mathbb{E}[\Gamma_n]) = \mathbb{E}\!\left[|| \tilde{S}_{0nc}||_{\mathcal{S}}^2 \right]+ 2\sum_{k=1}^\infty ((n-k)/n)\rho(k/l_n)\ind{k\le n-1}\mathbb{E}\!\left[\langle \tilde{S}_{0nc}, \tilde{S}_{knc}\rangle_{\mathcal{S}}\right] \rightarrow \text{tr}(\Gamma).$$
Moreover, as $\text{Cov}\big(\langle\tilde{S}_{inc},\tilde{S}_{(i+k)nc}\rangle_{\mathcal{S}},\langle \tilde{S}_{(i+j)nc},\tilde{S}_{(i+j+l)nc}\rangle_{\mathcal{S}}\big)$ can also be bounded following the same arguments in the proof of Lemma 2.21. in \cite{Borovkova2001}, it holds that $\mathbb{E}\big[\text{tr}(\Gamma_n - \mathbb{E}[\Gamma_n])^2\big]\rightarrow 0$, which implies that $\text{tr}(\Gamma_n)\rightarrow \text{tr}(\Gamma)$ in probability and, ultimately,
(\ref{Th11.4}) holds in trace norm, in probability.

 For each subsequence $\{n_{k}\}_{k\in \mathbb{N^+}}$, there is a further subsequence $\{n_{k_l}\}_{l\in \mathbb{N^+}}$ such that (\ref{Th11.1}), (\ref{Th11.2}), (\ref{Th11.3}) and (\ref{Th11.4}) hold almost surely. Along this subsequence, since $\{r_{in}\}_{i=1}^n$ are assumed to be Gaussian and, therefore, $\sum_{i=1}^{n_{k_l}}r_{in_{k_l}} S_i/\sqrt{n_{k_l}}$ is also Gaussian for all $l$, we have, by Slustsky's theorem, that
$\sum_{i=1}^{n_{k_l}}r_{in_{k_l}}\langle s,S_i\rangle_{\mathcal{S}}/\sqrt{n_{k_l}}$ converges in distribution to $N\left(0,\langle s,\Gamma(s)\rangle_{\mathcal{S}}\right)$,
for almost all realizations of $\{\eta_i\}_{i\in \mathbb{Z}}$. Moreover, $\Gamma_{n_{k_l}}\rightarrow \Gamma$ in trace norm a.s. which, by virtue of Lemma 3.1. and Remark 3.3. in \cite{Chen1998}, implies that
$\sum_{i=1}^{n_{k_l}}r_{in_{k_l}}S_i/\sqrt{n_{k_l}}$ converges weakly to $N_{\mathcal{HS}}$ a.s. Applying Slutsky's theorem again, we conclude that $\sum_{i=1}^{n_{k_l}}(r_{in_{k_l}}-\bar{r}_{\cdot n_{k_l}})S_i/\sqrt{n_{k_l}}$ also converges weakly to $N_{\mathcal{HS}}$ a.s. Since $\{n_{k}\}_{k\in \mathbb{N^+}}$ is arbitrary, the weak convergence in probability of $\sum_{i=1}^{n}(r_{in} - \bar{r}_{\cdot n})S_i/\sqrt{n}$ to $N_{\mathcal{HS}}$ holds.
\end{proof}	

\subsection{Proof of Theorems 1-3}
\label{app2.2}

\begin{proof}[\textbf{Proof of Theorem \ref{Th1}}]
	By assumption (see Remark \ref{remark2}), $\Phi_{\mathcal{X}}(X_i)$, $\Phi_{\mathcal{Y}}(Y_i)$ and $\Phi_{\mathcal{X}}(X_i)\otimes \Phi_{\mathcal{X}}(Y_i)$ are Bochner integrable for all $i\in \mathbb{Z}$. Then, the following decomposition is immediate:
	\begin{align}
		&\frac{1}{n}\sum_{i=1}^n (\Phi_{\mathcal{X}}(X_i)- \bar{\Phi}_{\mathcal{X}}(X)) \otimes  (\Phi_{\mathcal{Y}}(Y_i)- \bar{\Phi}_{\mathcal{Y}}(Y))\nonumber \\
		=\ &\frac{1}{n}\sum_{i=1}^n (\Phi_{\mathcal{X}}(X_i)- \mathbb{E}[\Phi_{\mathcal{X}}(X)]) \otimes  (\Phi_{\mathcal{Y}}(Y_i)- \mathbb{E}[\Phi_{\mathcal{Y}}(Y)])-(\mathbb{E}[\Phi_{\mathcal{X}}(X)]- \bar{\Phi}_{\mathcal{X}}(X))\otimes (\mathbb{E}[\Phi_{\mathcal{Y}}(Y)] - \bar{\Phi}_{\mathcal{Y}}(Y)).\label{Th1.1}
	\end{align}
	Furthermore,
	\begin{align*}
		&\mathbb{E}\Big[||(\Phi_{\mathcal{X}}(X_i)- \mathbb{E}[\Phi_{\mathcal{X}}(X)]) \otimes  (\Phi_{\mathcal{Y}}(Y_i)- \mathbb{E}[\Phi_{\mathcal{Y}}(Y)])||_{\mathcal{HS}}\Big]\\
		\le\ &\mathbb{E}\!\left[||\Phi_{\mathcal{X}}(X)||_{\mathcal{C}_{\mathcal{X}}} ||\Phi_{\mathcal{Y}}(Y)||_{\mathcal{C}_{\mathcal{Y}}}\right] + 3\mathbb{E}\!\left[||\Phi_{\mathcal{X}}(X)||_{\mathcal{C}_{\mathcal{X}}}\big]\mathbb{E}\big[ ||\Phi_{\mathcal{Y}}(Y)||_{\mathcal{C}_{\mathcal{Y}}}\right] < \infty.
	\end{align*}
	
	Let $T^i$ denote the $i$-fold self composition of the shift operator $T$. Moreover, let $(\mathbf{X}_0,\mathbf{Y}_0):(\mathcal{X}\times \mathcal{Y})^\mathbb{Z}\rightarrow\mathcal{X}\times \mathcal{Y}$ be defined as $(\mathbf{X}_0,\mathbf{Y}_0)(\{(x_i,y_i)\}_{i\in \mathbb{Z}}) = (x_0,y_0)$, so that $(X_i,Y_i) = (\mathbf{X}_0,\mathbf{Y}_0)(T^i(\{(X_i,Y_i)\}_{i\in \mathbb{Z}}))$. It is clear that $(\mathbf{X}_0,\mathbf{Y}_0)$ is measurable by construction. In addition, since $\mathcal{HS}$ is a separable Hilbert space and $\{(X_i,Y_i)\}_{i\in \mathbb{Z}}$ is ergodic, we can apply the ergodic theorem for Bochner integrable functions taking values in reflexive Banach spaces (Theorem 6 in \citeauthor{Beck1957}, \citeyear{Beck1957}) to conclude that
	\begin{equation}
		\frac{1}{n}\sum_{i=1}^n \Phi_{\mathcal{X}c}(X_i) \otimes  \Phi_{\mathcal{Y}c}(Y_i) \xrightarrow{a.s.} \mathbb{E}[ \Phi_{\mathcal{X}c}(X) \otimes  \Phi_{\mathcal{Y}c}(Y)]. \label{Th1.2}
	\end{equation}
	Analogously, $\bar{\Phi}_{\mathcal{X}}(X) \xrightarrow{a.s.} \mathbb{E}[\Phi_{\mathcal{X}}(X)]$ and $\bar{\Phi}_{\mathcal{Y}}(Y) \xrightarrow{a.s.} \mathbb{E}[\Phi_{\mathcal{Y}}(Y)]$ and, thus,
	\begin{equation}
		||(\mathbb{E}[\Phi_{\mathcal{X}}(X)]- \bar{\Phi}_{\mathcal{X}}(X))\otimes (\mathbb{E}[\Phi_{\mathcal{Y}}(Y)] - \bar{\Phi}_{\mathcal{Y}}(Y))||_{\mathcal{HS}} \xrightarrow{a.s.} 0.\label{Th1.3}
	\end{equation} 
	
	Taking square norms in (\ref{Th1.1}) and combining (\ref{Th1.2}) and (\ref{Th1.3}) with the continuous mapping theorem and the representation in (\ref{4}), it follows that $\text{\normalfont HSIC}_{n}(X,Y) \xrightarrow{a.s.} \text{\normalfont HSIC}(X,Y).$
\end{proof}

\begin{proof}[\textbf{Proof of Theorem \ref{Th2}}]
	It follows from (T1) and (T2) that $\{\Phi_{\mathcal{X}c}(X_i)\}_{i\in \mathbb{Z}}$, $\{\Phi_{\mathcal{Y}c}(Y_i)\}_{i\in \mathbb{Z}}$ and $\{\Phi_{\mathcal{X}c}(X_i)\otimes\Phi_{\mathcal{Y}c}(Y_i) \}_{i\in \mathbb{Z}}$ are $L^1$ $\text{\normalfont NED}$ on $\{\eta_i\}_{i\in \mathbb{Z}}$ with coefficients $\{\tilde{a}_j\}_{j=0}^\infty$ defined as in Proposition \ref{pr4}. By the summability and finite moment assumptions in (T2), we can apply Theorem 1.1. in \cite{Dehling2015} to conclude that
	$$\frac{1}{\sqrt{n}}\sum\limits_{i=1}^n (\Phi_{\mathcal{X}c}(X_i)\otimes \Phi_{\mathcal{X}c}(Y_i) - \mathbb{E}[\Phi_{\mathcal{X}c}(X)\otimes \Phi_{\mathcal{X}c}(Y)])\xrightarrow{\mathcal{D}} N_{\mathcal{HS}},$$
	where $N_{\mathcal{HS}}$ is zero-mean, Gaussian and its covariance operator $\Gamma$ verifies 
	$$\langle s, \Gamma (s')\rangle_{\mathcal{HS}} = \sum\limits_{i\in \mathbb{Z}} \mathbb{E}\!\left[\langle S_0, s \rangle_{\mathcal{HS}} \langle S_i, s' \rangle_{\mathcal{HS}}  \right],$$
	for all $s,s' \in \mathcal{HS}$, where $S_i = (\Phi_{\mathcal{X}c}(X_i)\otimes \Phi_{\mathcal{X}c}(Y_i) - \mathbb{E}[\Phi_{\mathcal{X}c}(X)\otimes \Phi_{\mathcal{X}c}(Y)])$. Moreover, it also follows from Theorem 1.1. in \cite{Dehling2015} that
	\begin{align*}
		\sqrt{n}||(\mathbb{E}[\Phi_{\mathcal{X}}(X)]- \bar{\Phi}_{\mathcal{X}}(X))\otimes (\mathbb{E}[\Phi_{\mathcal{Y}}(Y)]- \bar{\Phi}_{\mathcal{Y}}(Y))||_{\mathcal{HS}} &= \sqrt{n}||\mathbb{E}[\Phi_{\mathcal{X}}(X)]- \bar{\Phi}_{\mathcal{X}}(X)||_{\mathcal{C}_{\mathcal{X}}}||\mathbb{E}[\Phi_{\mathcal{Y}}(Y)]- \bar{\Phi}_{\mathcal{Y}}(Y)||_{\mathcal{C}_{\mathcal{Y}}}\nonumber \\
		&= \frac{1}{\sqrt{n}}\!\left|\left|\frac{1}{\sqrt{n}}\sum\limits_{i=1}^n\Phi_{\mathcal{X}c}(X_i)\right|\right|_{\mathcal{C}_{\mathcal{X}}}\!\left|\left|\frac{1}{\sqrt{n}}\sum\limits_{i=1}^n\Phi_{\mathcal{X}c}(Y_i)\right|\right|_{\mathcal{C}_{\mathcal{Y}}}\nonumber \\
		&= O_{\mathbb{P}}\!\left(\frac{1}{\sqrt{n}}\right). \nonumber
	\end{align*}
	Therefore, the limit distributions in Theorem \ref{Th2} hold by virtue of the continuous mapping theorem, Slutsky's theorem and equations (\ref{4}) and (\ref{Th1.1}).
\end{proof}

\begin{proof}[\textbf{Proof of Theorem \ref{Th3}}]

	On the one hand, the following decomposition trivially holds:
	\begin{align*}
		&\frac{1}{n}\sum\limits_{i=1}^n (\Phi_{\mathcal{X}}(X_{in}) - \mathbb{E}[\Phi_{\mathcal{X}}(X_{in})])\otimes (\Phi_{\mathcal{Y}}(Y_{in})- \mathbb{E}[\Phi_{\mathcal{Y}}(Y_{in})])\\
		=\ &\frac{1}{n}\sum\limits_{i=1}^n (\Phi_{\mathcal{X}}(X_{i}) - \mathbb{E}[\Phi_{\mathcal{X}}(X_{i})])\otimes (\Phi_{\mathcal{Y}}(Y_{i})- \mathbb{E}[\Phi_{\mathcal{Y}}(Y_{i})])\\
		&+\frac{1}{n}\sum\limits_{i=1}^n \left(\frac{Z_{\mathcal{X}i}- \mathbb{E}[Z_{\mathcal{X}i}]}{k_n} + R_{\mathcal{X}in} - \mathbb{E}[R_{\mathcal{X}in}] \right)\otimes (\Phi_{\mathcal{Y}}(Y_{i})- \mathbb{E}[\Phi_{\mathcal{Y}}(Y_{i})])\\
		&+\frac{1}{n}\sum\limits_{i=1}^n (\Phi_{\mathcal{X}}(X_{i}) - \mathbb{E}[\Phi_{\mathcal{X}}(X_{i})])\otimes \left(\frac{Z_{\mathcal{Y}i} - \mathbb{E}[Z_{\mathcal{Y}i}]}{k_n} + R_{\mathcal{Y}in} - \mathbb{E}[R_{\mathcal{Y}in}] \right)\\
		&+\frac{1}{n}\sum\limits_{i=1}^n \left(\frac{Z_{\mathcal{X}i}- \mathbb{E}[Z_{\mathcal{X}i}]}{k_n} + R_{\mathcal{X}in} - \mathbb{E}[R_{\mathcal{X}in}] \right)\otimes \left(\frac{Z_{\mathcal{Y}i} - \mathbb{E}[Z_{\mathcal{Y}i}]}{k_n} + R_{\mathcal{Y}in} - \mathbb{E}[R_{\mathcal{Y}in}] \right)\\
		=\ &I_{1n} + I_{2n} + I_{3n} + I_{4n},
	\end{align*}
	where $I_{1n},I_{2n},I_{3n}$  and $I_{4n}$ are implicitly defined. We have shown that $\sqrt{n}(I_{1n} - \mathbb{E}[\Phi_{\mathcal{X}c}(X)\otimes \Phi_{\mathcal{Y}c}(Y)]) \xrightarrow{\mathcal{D}} N_{\mathcal{HS}}$ in the proof of Theorem \ref{Th2}. Morover, since functionals of ergodic and stationary processes are, themselves, ergodic and stationary, it follows from the ergodic theorem in reflexive Banach spaces that
	$$\frac{1}{n}\sum\limits_{i=1}^n (Z_{\mathcal{X}i} - \mathbb{E}[Z_{\mathcal{X}i}]) \otimes \Phi_{\mathcal{Y}c}(Y_{i}) \xrightarrow{a.s} \mathbb{E}[(Z_{\mathcal{X}i} - \mathbb{E}[Z_{\mathcal{X}i}]) \otimes \Phi_{\mathcal{Y}c}(Y_{i})],$$
	since, by (T4), the limit is well-defined. Moreover,
	\begin{align*}
		\mathbb{E}\!\left[\left|\left|\frac{1}{n}\sum\limits_{i=1}^n ( R_{\mathcal{X}in} - \mathbb{E}[R_{\mathcal{X}in}])\otimes \Phi_{\mathcal{Y}c}(Y_{i})\right|\right|_{\mathcal{HS}}\right]
		\le \mathbb{E}\!\left[\left|\left| R_{\mathcal{X}in} - \mathbb{E}[R_{\mathcal{X}in}] \right|\right|_{\mathcal{C}_{\mathcal{X}}} \!\left|\left|\Phi_{\mathcal{Y}}(Y_{i})\right|\right|_{\mathcal{C}_{\mathcal{Y}}}\right]
		= o\left(\frac{1}{k_n}\right). \nonumber 
	\end{align*}
	Therefore, $k_nI_{2n} = \mathbb{E}[Z_{\mathcal{X}i}  \otimes \Phi_{\mathcal{Y}c}(Y_{i})] + o_{\mathbb{P}}(1)$. By the same arguments, $k_nI_{3n} = \mathbb{E}[\Phi_{\mathcal{X}c}(X_{i})\otimes Z_{\mathcal{Y}i}] + o_{\mathbb{P}}(1)$ and $k_nI_{4n} = o_{\mathbb{P}}(1)$.
	
	On the other hand, $\bar{\Phi}_{\mathcal{Y}}(Y) -	\mathbb{E}[\Phi(Y_{i})] = O_{\mathbb{P}}(n^{-1/2})$, as shown in the proof of Theorem \ref{Th2}. Moreover, by the ergodic mean theorem and the moment conditions on $R_{\mathcal{X}in}$ in (T4),
	\begin{align*}
		\bar{\Phi}_{\mathcal{Y}}(Y_{\cdot n}) -	\mathbb{E}[\Phi_{\mathcal{Y}}(Y_{in})] &= \bar{\Phi}_{\mathcal{Y}}(Y) -	\mathbb{E}[\Phi_{\mathcal{Y}}(Y)] + \frac{1}{k_n}\!\left(\bar{Z}_{\mathcal{Y}\cdot} - \mathbb{E}[Z_{\mathcal{Y}i}]\right) + \bar{R}_{\mathcal{Y}\cdot n}  - \mathbb{E}[R_{\mathcal{Y}in}]\\
		&= O_{\mathbb{P}}\!\left(\frac{1}{\sqrt{n}}\right) + O_{\mathbb{P}}\!\left(\frac{1}{k_n}\right) + o_{\mathbb{P}}\!\left(\frac{1}{k_n}\right),
	\end{align*}
	where $\bar{\Phi}_{\mathcal{Y}}(Y_{\cdot n})$, $\bar{Z}_{\mathcal{Y}\cdot}$ and $\bar{R}_{\mathcal{Y}\cdot n}$ denote the corresponding sample averages.
	Analogously, $\bar{\Phi}_{\mathcal{X}}(X_{\cdot n}) -	\mathbb{E}[\Phi_{\mathcal{X}}(X_{in})] = O_{\mathbb{P}}(1/\sqrt{n}) + O_{\mathbb{P}}(1/k_n)$ and, consequently,
	\begin{align*}
		&||(\mathbb{E}[\Phi_{\mathcal{X}}(X_{in})] - \bar{\Phi}_{\mathcal{X}}(X_{\cdot n}))\otimes (\mathbb{E}[\Phi_{\mathcal{Y}}(Y_{in})] - \bar{\Phi}_{\mathcal{Y}}(Y_{\cdot n}))||_{\mathcal{HS}}
		= o_{\mathbb{P}}\!\left(\frac{1}{\sqrt{n}}\right) + o_{\mathbb{P}}\!\left(\frac{1}{k_n}\right).
	\end{align*}
	
	The decomposition in (\ref{Th1.1}), replacing $(X_i,Y_i)$ with $(X_{in}, Y_{in})$, still applies and the previous results imply that
	\begin{align*}
		&\frac{1}{n}\sum_{i=1}^n (\Phi_{\mathcal{X}}(X_{in}) - \bar{\Phi}_{\mathcal{X}}(X_{\cdot n}))\otimes (\Phi_{\mathcal{Y}}(Y_{in})- \bar{\Phi}_{\mathcal{Y}}(Y_{\cdot n}))\\
		=\ &(I_{1n} - \mathbb{E}[\Phi_{\mathcal{X}c}(X)\otimes \Phi_{\mathcal{Y}c}(Y)]) + \mathbb{E}[\Phi_{\mathcal{X}c}(X)\otimes \Phi_{\mathcal{Y}c}(Y)] + \frac{\mu}{k_n}+ o_{\mathbb{P}}\!\left(\frac{1}{\sqrt{n}}\right) + o_{\mathbb{P}}\!\left(\frac{1}{k_n}\right) .
	\end{align*}
	Then, taking square norms in the last equality, the results in Theorem \ref{Th3} follow from (\ref{4}), the continous mapping theorem and Slutsky's theorem .
\end{proof}

\subsection{Proof of Theorems 4 and 5}
\label{app2.3}

\begin{proof}[\textbf{Proof of Theorem \ref{Th4}}]
	 By the ergodic mean theorem, $\mathbb{E}[\Phi_\mathcal{X}(X)] -\bar{\Phi}_\mathcal{X}(X) \rightarrow 0$ a.s. and $\mathbb{E}[\Phi_\mathcal{Y}(Y)] -\bar{\Phi}_\mathcal{Y}(Y)\rightarrow 0$ a.s. and, in particular, those two terms converge to zero as $n\rightarrow \infty$ for almost all realizations of $\{\eta_{i}\}_{i\in \mathbb{Z}}$. In addition, since $ \sum_{i=1}^{n}(r_{in}-\bar{r}_{\cdot n}) = 0$, the following decomposition follows by adding and substracting terms conveniently:
	\begin{align*}
		&\frac{1}{n}\sum_{i=1}^{n} (r_{in}-\bar{r}_{\cdot n})(\Phi_{\mathcal{X}}(X)- \bar{\Phi}_{\mathcal{X}}(X)) \otimes  (\Phi_{\mathcal{Y}}(Y_i)- \bar{\Phi}_{\mathcal{Y}}(Y))\\
		=\ &\frac{1}{n}\sum_{i=1}^{n} (r_{in}-\bar{r}_{\cdot n})(\Phi_{\mathcal{X}}(X_i)- \mathbb{E}[\Phi_{\mathcal{X}}(X)]) \otimes  (\Phi_{\mathcal{Y}}(Y_i)- \mathbb{E}[\Phi_{\mathcal{Y}}(Y)])\\
		&+\frac{1}{n}\sum_{i=1}^{n}  (r_{in}-\bar{r}_{\cdot n})(\Phi_{\mathcal{X}}(X_i)-\mathbb{E}[\Phi_{\mathcal{X}}(X)])\otimes (\mathbb{E}[\Phi_{\mathcal{Y}}(Y)] - \bar{\Phi}_{\mathcal{Y}}(Y))\\
		&+(\mathbb{E}[\Phi_{\mathcal{X}}(X)] - \bar{\Phi}_{\mathcal{X}}(X))\otimes \frac{1}{n}\sum_{i=1}^{n}  (r_{in}-\bar{r}_{\cdot n})(\Phi_{\mathcal{Y}}(Y_i) -\mathbb{E}[\Phi_{\mathcal{Y}}(Y)])\\
		=\ &I_{1n} + I_{2n} + I_{3n},
	\end{align*}
	where $I_{1n},I_{2n}$ and $I_{3n}$ are implicitly defined. Recall that $\{\Phi_{\mathcal{X}c}(X_i)\}_{i\in \mathbb{Z}}$,  $\{\Phi_{\mathcal{X}c}(Y_i)\}_{i\in \mathbb{Z}}$ and $\{\Phi_{\mathcal{X}c}(X_i)\otimes\Phi_{\mathcal{Y}c}(Y_i) \}_{i\in \mathbb{Z}}$ are $L^1$ $\text{\normalfont NED}$ on $\{\eta_i\}_{i\in \mathbb{Z}}$ with coefficients defined as in Proposition \ref{pr4} which, if (T5) holds, verify the summability conditions required to apply Theorem \ref{Th11}. Therefore, for any subsequence $\{n_k\}_{k\in \mathbb{N^+}}$ there is a further subsequence $\{n_{k_l}\}_{l\in \mathbb{N^+}}$ such that, for almost all realizations of $\{\eta_{i}\}_{i\in \mathbb{Z}}$, the conditional law of $\sqrt{n_{k_l}}I_{1n_{k_l}}$ given $\{\eta_{i}\}_{i\in \mathbb{Z}}$ converges to that of $N_{\mathcal{HS}}$ and, moreover, $||\sum_{i=1}^{n_{k_l}}  (r_{in_{k_l}}-\bar{r}_{\cdot n_{k_l}})\Phi_{\mathcal{Y}c}(Y_i)||_{\mathcal{C}_{\mathcal{X}}}/\sqrt{n_{k_l}}$ and $||\sum_{i=1}^{n_{k_l}}  (r_{in_{k_l}}-\bar{r}_{\cdot n_{k_l}})\Phi_{\mathcal{X}c}(X_i)||_{\mathcal{C}_{\mathcal{Y}}}/\sqrt{n_{k_l}}$ are bounded in (conditional) probability by the continous mapping theorem. Consequently, by Slutsky's theorem, $\sqrt{n_{k_l}}(I_{1n_{k_l}}+I_{2n_{k_l}}+I_{3n_{k_l}})$ converges weakly to $N_{\mathcal{HS}}$ for almost all realizations of $\{\eta_{i}\}_{i\in \mathbb{Z}}$. The first part of Theorem \ref{Th4} then follows, taking squared norms, by virtue of the continuous mapping theorem. 
	
	We can write $\left|\left| N_{\mathcal{HS}}\right|\right|^2_{\mathcal{HS}} = \sum_{j=1}^\infty \lambda_j |N_j|^2$ a.s., where $\{\lambda_j\}_{j=1}^\infty$ is the collection of eigenvalues of the covariance operator of $N_{\mathcal{HS}}$. If there is, at least, one positive eigenvalue, say $\lambda_1$, then $\left|\left| N_{\mathcal{HS}}\right|\right|^2_{\mathcal{HS}}$ is the sum of two well-defined random variables, one of them absolutely continous. Therefore, $\left|\left| N_{\mathcal{HS}}\right|\right|^2_{\mathcal{HS}}$ is absolutely continuous. Since quantiles of the conditional distribution of $n\text{\normalfont HSIC}^*_{n}(X_{i},Y_{i})$ converge to those of $\left|\left| N_{\mathcal{HS}}\right|\right|^2_{\mathcal{HS}}$ in probability (which is easy to show by a subsequence argument), denoting by $Q(1-\alpha)$ the $(1-\alpha)$th quantile of the distribution of $\left|\left| N_{\mathcal{HS}}\right|\right|^2_{\mathcal{HS}}$, we have
	\begin{align*}
		&\big|\mathbb{P}(n\text{\normalfont HSIC}_{n}(X,Y)>Q_n^*(1-\alpha)) - \mathbb{P}(n\text{\normalfont HSIC}_{n}(X,Y)>Q(1-\alpha))\big|\\
		\le\ &\mathbb{P}(Q_n^*(1-\alpha)\ge n\text{\normalfont HSIC}_{n}(X,Y)>Q(1-\alpha), Q_n^*(1-\alpha)>Q(1-\alpha))\\
		&+\mathbb{P}(Q(1-\alpha)\ge n\text{\normalfont HSIC}_{n}(X,Y)>Q_n^*(1-\alpha),Q(1-\alpha)\ge Q_n^*(1-\alpha))\\
		\le\ &\mathbb{P}(Q_n^*(1-\alpha)>Q(1-\alpha)) + \mathbb{P}(Q(1-\alpha)>Q_n^*(1-\alpha)) \rightarrow 0,
	\end{align*}
	 so the second part of Theorem \ref{Th4} follows from Theorem \ref{Th2}.
\end{proof}

\begin{proof}[\textbf{Proof of Theorem \ref{Th5}}]

	Let us denote by $\hat{\Phi}_{\mathcal{X}c}(X_i) = \Phi_{\mathcal{X}}(X_i)- \bar{\Phi}_{\mathcal{X}}(X)$, $\hat{\Phi}_{\mathcal{Y}c}(Y_i) = \Phi_{\mathcal{Y}}(Y_i)- \bar{\Phi}_{\mathcal{Y}}(Y)$,  $\hat{\Phi}_{\mathcal{X}c}(X_{in}) = \Phi_{\mathcal{X}}(X_{in})- \bar{\Phi}_{\mathcal{X}}(X_{\cdot n})$ and $\hat{\Phi}_{\mathcal{Y}c}(Y_{in}) = \Phi_{\mathcal{Y}}(Y_{in})- \bar{\Phi}_{\mathcal{Y}}(Y_{\cdot n})$, where $\bar{\Phi}_{\mathcal{X}}(X_{\cdot n})$ and $\bar{\Phi}_{\mathcal{Y}}(Y_{\cdot n})$ denote sample averages of $\Phi_{\mathcal{X}}$ and $\Phi_{\mathcal{Y}}$ with respect to $\{X_{in}\}_{i=1}^n$ and $\{Y_{in}\}_{i=1}^n$, respectively. In the proof of Theorem \ref{Th4} it is shown that, conditionally on $\{\eta_{i}\}_{i\in \mathbb{Z}}$,
	\begin{equation} \label{Th5.1}
		\frac{1}{\sqrt{n}}\sum\limits_{i=1}^{n} (\tilde{r}_{in}-\bar{r}_{\cdot n})\hat{\Phi}_{\mathcal{X}c}(X_i) \otimes \hat{\Phi}_{\mathcal{Y}c}(Y_i) \xrightarrow{\mathcal{D}^*} N_{\mathcal{HS}} \ \text{in }\mathbb{P},
	\end{equation}
	independently of the value of $\text{\normalfont HSIC}(X,Y)$. Moreover, $\text{Var}(\bar{r}_{\cdot n}) = O(l_n/n)$ (see the proof of Theorem \ref{Th11}) and, therefore, $\mathbb{E}[|\bar{r}_{\cdot n}|] = O(\sqrt{l_n/n})$. Thus,
	\begin{align}
		&\mathbb{E}\!\left[\left|\left|\frac{1}{\sqrt{n}}\sum\limits_{i=1}^{n} \bar{r}_{\cdot n}(\hat{\Phi}_{\mathcal{X}c}(X_{in}) \otimes \hat{\Phi}_{\mathcal{Y}c}(Y_{in})-\hat{\Phi}_{\mathcal{X}c}(X_{i}) \otimes \hat{\Phi}_{\mathcal{Y}c}(Y_i))\right|\right|_{\mathcal{HS}} |\{\eta_{i}\}_{i\in \mathbb{Z}} \right]\nonumber \\ 
		\le\ &\mathbb{E}\!\left[|\bar{r}_{\cdot n}|\right] \frac{1}{\sqrt{n}}\sum\limits_{i=1}^{n}||\hat{\Phi}_{\mathcal{X}c}(X_{in}) \otimes \hat{\Phi}_{\mathcal{Y}c}(Y_{in})-\hat{\Phi}_{\mathcal{X}c}(X_{i}) \otimes \hat{\Phi}_{\mathcal{Y}c}(Y_i)||_{\mathcal{HS}} \nonumber \\
		=\ &O\left(\sqrt{l_n}\right) \frac{1}{n}\sum\limits_{i=1}^{n}||\hat{\Phi}_{\mathcal{X}c}(X_i)||_{\mathcal{C}_{\mathcal{X}}} ||\hat{\Phi}_{\mathcal{Y}c}(Y_{in})-\hat{\Phi}_{\mathcal{Y}c}(Y_i)||_{\mathcal{C}_{\mathcal{Y}}}+O\left(\sqrt{l_n}\right) \frac{1}{n}\sum\limits_{i=1}^{n}||\hat{\Phi}_{\mathcal{X}c}(X_{in})-\hat{\Phi}_{\mathcal{X}c}(X_i))||_{\mathcal{C}_{\mathcal{X}}}||\hat{\Phi}_{\mathcal{Y}c}(Y_i)||_{\mathcal{C}_{\mathcal{Y}}} \nonumber \\  
		&+O\left(\sqrt{l_n}\right) \frac{1}{n}\sum\limits_{i=1}^{n}||\hat{\Phi}_{\mathcal{X}c}(X_{in})-\hat{\Phi}_{\mathcal{X}c}(X_i))||_{\mathcal{C}_{\mathcal{X}}}||\hat{\Phi}_{\mathcal{Y}c}(Y_{in}) - \hat{\Phi}_{\mathcal{Y}c}(Y_{i})||_{\mathcal{C}_{\mathcal{Y}}}  \nonumber \\
		=\ &O_{\mathbb{P}}\!\left(\frac{\sqrt{l_n}}{k_n}\right),  \label{Th5.2}
	\end{align}
	where the last equality follows from (T3), (T4) and Markov's inequality since, for example,
	expanding $\hat{\Phi}_{\mathcal{X}c}$ and $\hat{\Phi}_{\mathcal{Y}c}$ and applying the triangle inequality we have
	\begin{align}
		&\frac{1}{n}\sum\limits_{i=1}^{n}||\hat{\Phi}_{\mathcal{X}c}(X_i)||_{\mathcal{C}_{\mathcal{X}}} ||\hat{\Phi}_{\mathcal{Y}c}(Y_{in})-\hat{\Phi}_{\mathcal{Y}c}(Y_i)||_{\mathcal{C}_{\mathcal{Y}}}\nonumber\\
		 \le\ &\frac{1}{n}\sum\limits_{i=1}^{n}||\Phi_{\mathcal{X}}(X_i)||_{\mathcal{C}_{\mathcal{X}}} ||\Phi_{\mathcal{Y}}(Y_{in})-\Phi_{\mathcal{Y}}(Y_i)||_{\mathcal{C}_{\mathcal{Y}}}
		+\frac{3}{n}\sum\limits_{i=1}^{n}||\Phi_{\mathcal{X}}(X_i)||_{\mathcal{C}_{\mathcal{X}}}\frac{1}{n}\sum\limits_{i=1}^{n} ||\Phi_{\mathcal{Y}}(Y_{in})-\Phi_{\mathcal{Y}}(Y_i)||_{\mathcal{C}_{\mathcal{Y}}} \nonumber\\
		=\ &O_{\mathbb{P}}\!\left(\frac{1}{k_n}\right)+O_{\mathbb{P}}\!\left(1\right)O_{\mathbb{P}}\!\left(\frac{1}{k_n}\right). \nonumber  
	\end{align}
	Moreover, by previous arguments, condition (T6), the AM-GM inequality and the assumption $\sum_{l=1}^n \rho(l/k_n) = O(l_n)$, it follows that
	\begin{align}
		&\mathbb{E}\!\left[\left|\left|\frac{1}{\sqrt{n}}\sum\limits_{i=1}^{n} \tilde{r}_{in}(\hat{\Phi}_{\mathcal{X}c}(X_{in}) \otimes \hat{\Phi}_{\mathcal{Y}c}(Y_{in})-\hat{\Phi}_{\mathcal{X}c}(X_{i}) \otimes \hat{\Phi}_{\mathcal{Y}c}(Y_i))\right|\right|^2_{\mathcal{HS}} |\{\eta_{i}\}_{i\in \mathbb{Z}} \right]\nonumber \\
		=\ &\frac{1}{n}\sum\limits_{i,j=1}^{n}\rho(|i-j|/l_n)\langle\hat{\Phi}_{\mathcal{X}c}(X_{in}) \otimes \hat{\Phi}_{\mathcal{Y}c}(Y_{in})-\hat{\Phi}_{\mathcal{X}c}(X_{i}) \otimes \hat{\Phi}_{\mathcal{Y}c}(Y_i),\hat{\Phi}_{\mathcal{X}c}(X_{jn}) \otimes \hat{\Phi}_{\mathcal{Y}c}(Y_{jn})-\hat{\Phi}_{\mathcal{X}c}(X_{j}) \otimes \hat{\Phi}_{\mathcal{Y}c}(Y_j) \rangle_{\mathcal{HS}}\nonumber\\
		\le\ &\frac{1}{n}\sum\limits_{i,j=1}^{n}\rho(|i-j|/l_n)||\hat{\Phi}_{\mathcal{X}c}(X_i) \otimes \hat{\Phi}_{\mathcal{Y}c}(Y_{in})-\hat{\Phi}_{\mathcal{X}c}(X_{in}) \otimes \hat{\Phi}_{\mathcal{Y}c}(Y_i)||^2_{\mathcal{HS}}\nonumber\\
		 =\ &O_{\mathbb{P}}\!\left(\frac{l_n}{k_n^2}\right).  \label{Th5.3}
	\end{align}
	
	On the one hand, if $l_n=o(k_n^2)$ (which holds if $\sqrt{n}/k_n = O(1)$), putting (\ref{Th5.1}), (\ref{Th5.2}) and (\ref{Th5.3}) together, it follows from the continuous mapping theorem and Slusky's theorem (by means of subsequence arguments, as in the proof of Theorem \ref{Th4}) that
	$$n\text{\normalfont HSIC}^*_{n}(X_{\cdot n},Y_{\cdot n}) \xrightarrow{\mathcal{D}^*} ||N_{\mathcal{HS}}||^2_{\mathcal{HS}} \ \text{in }\mathbb{P},$$
	which, in particular, implies that $Q_n^*(1-\alpha)\rightarrow Q(1-\alpha)$ in probability and, therefore, $Q_n^*(1-\alpha) = O_{\mathbb{P}}(1)$ for all $\alpha \in (0,1)$. On the other hand, if $k_n^2 = o(n)$ and since $(l_n/k_n^2)(k_n^2/n) = l_n/n = o(1)$, we have
	$$\frac{k^2_n}{n}n\text{\normalfont HSIC}^*_{n}(X_{\cdot n},Y_{\cdot n})\xrightarrow{\mathcal{D}^*} 0\ \text{in }\mathbb{P},$$
	so $({k^2_n}/{n})Q_n^*(1-\alpha) = o_{\mathbb{P}}(1)$ in this case. Thus, independently of $k_n$,  $\min\{1,({k^2_n}/{n})\}Q_n^*(1-\alpha) = O_{\mathbb{P}}(1)$ for all $\alpha\in (0,1)$.

	By Theorem \ref{Th3}, when $\text{\normalfont HSIC}(X,Y)>0$, it holds that $\text{\normalfont HSIC}_n(X_{\cdot n},Y_{\cdot n}) = \text{\normalfont HSIC}(X,Y) + o_{\mathbb{P}}(1)$. Then, as $k_n^2\rightarrow \infty$,
	$$\mathbb{P}\!\left(n\text{\normalfont HSIC}_n(X_{\cdot n},Y_{\cdot n}) > Q_n^*(1-\alpha)\right)=\mathbb{P}\!\left(\min\left\{1,\frac{k_n^2}{n}\right\}n\text{\normalfont HSIC}_n(X_{\cdot n},Y_{\cdot n}) > \min\left\{1,\frac{k_n^2}{n}\right\}Q_n^*(1-\alpha)\right) \rightarrow1.$$
	When $\text{\normalfont HSIC}(X,Y)=0$, $\sqrt{n}/k_n\rightarrow \infty$ and $\mu\neq 0$ , $n\text{\normalfont HSIC}_n(X_{\cdot n},Y_{\cdot n}) = n\left|\left|\mu \right|\right|^2_{\mathcal{HS}}/k_n^2 + o_{\mathbb{P}}(n/k_n^2)$ by Theorem \ref{Th3}. Thus, as $({k^2_n}/{n})Q_n^*(1-\alpha) = o_{\mathbb{P}}(1)$ in this scenario,
	$$\mathbb{P}\!\left(n\text{\normalfont HSIC}_n(X_{\cdot n},Y_{\cdot n}) > Q_n^*(1-\alpha)\right) = \mathbb{P}\!\left(k_n^2\text{\normalfont HSIC}_n(X_{\cdot n},Y_{\cdot n}) > \frac{k_n^2}{n}Q_n^*(1-\alpha)\right) \rightarrow1.$$
	Finally, if $\sqrt{n}/k_n\rightarrow M$ for $0\le M<\infty$,  $n\text{\normalfont HSIC}_n(X_{\cdot n},Y_{\cdot n})$ converges in distribution to $\left|\left| M\mu+N_{\mathcal{HS}}\right|\right|^2_{\mathcal{HS}}$ which is absolutely continuous provided that $N_{\mathcal{HS}}$ has one positive eigenvalue (see the proof of Theorem \ref{Th4}). Since $l_n=o(k_n^2)$ in this case, $Q_n^*(1-\alpha)\rightarrow Q(1-\alpha)$ in probability and, by the same arguments used to prove the second part of Theorem \ref{Th4}, 
	$$ \mathbb{P}\!\left(n\text{\normalfont HSIC}_n(X_{\cdot n},Y_{\cdot n}) > Q_n^*(1-\alpha)\right) \rightarrow 1-F_{M}(Q(1-\alpha)),$$
	where $F_{M}$ denotes the cumulative distribution function of $\left|\left| M\mu+N_{\mathcal{HS}}\right|\right|^2_{\mathcal{HS}}$ and $F_{M}(Q(1-\alpha)) = 1-\alpha$ in the particular case of $M= 0$.
\end{proof}

\bibliography{biblio}
\bibliographystyle{apalike} 
\end{document}